\documentclass[10pt]{article}
\usepackage{array,url}
\usepackage[margin=1.2in]{geometry}
\usepackage[geometry]{ifsym}
\usepackage[toc,page]{appendix}
\usepackage{amssymb, amsfonts, amsmath, amsthm, amscd, mathrsfs, mathtools, color}
\usepackage{todonotes, comment, verbatim, fullpage}
\usepackage{graphicx, float, epsfig, enumerate, fixmath, bbm, latexsym, cite, scrextend}
\usepackage{pgfplots}
\pgfplotsset{compat=1.12}
\usepackage{subcaption}
\usepackage{dsfont}

\usepackage{xcolor}

\usepackage{mleftright}  % fixing error with respect to the spacing of \left and \right
\mleftright

%%%%%%%%%%%%%%%%%%%%%%%%%%%%%%%%%%%%%%%%%%%%%%%%%%%%%%%%%%%%%%%%%%%%%%%%%%%%%%%%%

\newcommand{\bbE}{\mathbb{E}}\newcommand{\rme}{\mathrm{e}}

\newcommand{\bbI}{\mathbb{I}}

\newcommand{\bbN}{\mathbb{N}}\newcommand{\rmN}{\mathrm{N}}

\newcommand{\bbP}{\mathbb{P}}

\newcommand{\bbR}{\mathbb{R}}

\newcommand{\sfA}{\mathsf{A}}

\newcommand{\sfD}{\mathsf{D}}

\newcommand{\sfP}{\mathsf{P}}

\newcommand{\cI}{\mathcal{I}}

\newcommand{\scrN}{\mathscr{N}}

\newcommand{\scrS}{\mathscr{S}}

%%%%%%%%%%%%%%%%%%%%%%%%%%%%%%%%%%%%%%%%%%%%%%%%%%%%%%%%%%%%%%%%%%%%%%%%%%%%%%%%%%%%%%%%%
%% THEOREM STYLE  %% THEOREM STYLE  %% THEOREM STYLE  %% THEOREM STYLE  %% THEOREM STYLE  
%%%%%%%%%%%%%%%%%%%%%%%%%%%%%%%%%%%%%%%%%%%%%%%%%%%%%%%%%%%%%%%%%%%%%%%%%%%%%%%%%%%%%%%%%

\newtheoremstyle{mystyle}% name
{}% space above
{}% space below
{\itshape}% body font
{}% indent amount
{\bfseries}% theorem head font
{}% punctuation after theorem head
{.5em}% space after theorem head
{}% theorem head spec

\newtheoremstyle{remark}% name
{}% space above
{}% space below
{}% body font
{}% indent amount
{\itshape}% theorem head font
{}% punctuation after theorem head
{.5em}% space after theorem head
{}% theorem head spec

\makeatletter
\def\thmhead@plain#1#2#3{%
  \thmname{#1}\thmnumber{\@ifnotempty{#1}{ }\@upn{#2.}}%
  \thmnote{ \textsf{\the\thm@notefont\textit{#3}.}}}
\let\thmhead\thmhead@plain
\makeatother

\theoremstyle{mystyle}
\newtheorem{theorem}{Theorem}%[section]
\theoremstyle{mystyle}
\newtheorem{lemma}{Lemma}%[section]
\theoremstyle{mystyle}
\newtheorem{prop}{Proposition}%[section]
\theoremstyle{mystyle}
%[thm]
\theoremstyle{mystyle}
\newtheorem{definition}{Definition}%[section]
\theoremstyle{remark}
\newtheorem{rem}{Remark}%[section]
\theoremstyle{mystyle}
%[section]
\theoremstyle{mystyle}
%[section]
\theoremstyle{mystyle}
%[section]
\theoremstyle{discussion}
%[section]
\theoremstyle{mystyle}
%[section]
\theoremstyle{mystyle}
%[section]
%%%%%%%%%%%%%%%%%%%%%%%%%%%%%%%%%%%%%%%%%%%%%%%%%%%%%%%%%%%%%%%%%%%%%%%%%%%%%%%%%%%%%%%%%%%
\usepackage{tocloft} % THIS PACKAGE DOES NOT WORK WITH \usepackage{subfigure} 
\usepackage{titlesec}
\titleformat{\section}
  {\normalfont\large\bfseries}{\thesection.}{0.5em}{}
 
\titleformat{\subsection}
  {\normalfont\bfseries}{\thesubsection.}{0.5em}{}
 
\renewcommand{\thesection}{\arabic{section}}
\renewcommand{\thesubsection}{\thesection.\arabic{subsection}}
%%%%%%%%%%%%%%%%%%%%%%%%%%%%%%%%%%%%%%%%%%%%%%%%%%%%%%%%%%%%%%%%%%%%%%%%%%%%%%%%%
\newcommand\independent{\protect\mathpalette{\protect\independent}{\perp}}
\def\independent#1#2{\mathrel{\rlap{$#1#2$}\mkern2mu{#1#2}}}

\def\squarebox#1{\hbox to #1{\hfill\vbox to #1{\vfill}}}

\makeatletter
\newcommand{\vast}{\bBigg@{4}}
\newcommand{\Vast}{\bBigg@{5}}
\newcommand{\Gigantic}{\bBigg@{8}}

\makeatother
\allowdisplaybreaks

\newcommand{\supp}{{\mathsf{supp}}}

%%%%%%%%%%%%%%%%%%%%%%%%%%%%%%%%%%%%%%%%%%%%%%%%%%%%%%%%%%%%%%%%%%%%%%%%%%%%%%%%%%%%%%%%%%%%%%%%%%%%%%%%%%%%%%%%

\title{Properties of the Support of the Capacity-Achieving Distribution of the Amplitude-Constrained Poisson Noise Channel}
\author{Alex Dytso\thanks{A. Dytso is with the Department of Electrical and Computer Engineering, New Jersey Institute of Technology, Newark,  NJ 07102, USA   (e-mail: alex.dytso@njit.edu).
},  
Luca Barletta\thanks{L. Barletta  is with the Dipartimento di Elettronica, Informazione e Bioingegneria, Politecnico di Milano, Milano, 20133, Italy. (e-mail: luca.barletta@polimi.it).},
 and Shlomo Shamai (Shitz)%
\thanks{The work of S. Shamai has been supported by the European Union's Horizon
2020 Research And Innovation Programme, grant agreement no. 694630, and by
the United States-Israel Binational Science Foundation under Grant
BSF-2018710. }
}
\begin{document}

\maketitle

\begin{abstract}
This work considers a Poisson noise channel with an amplitude constraint. It is well-known that the capacity-achieving input distribution for this channel is discrete with finitely many points.  We sharpen this result by introducing upper and lower bounds on the number of mass points. Concretely, an upper bound of order $\sfA \log^2(\sfA)$ and a lower bound of order $\sqrt{\sfA}$ are established where $\sfA$ is the constraint on the input amplitude. In addition, along the way, we show several other properties of the capacity and capacity-achieving distribution. For example, it is shown that the capacity is equal to  $ - \log P_{Y^\star}(0)$ where $P_{Y^\star}$ is the optimal output distribution. Moreover, an upper bound on the values of the probability masses of the capacity-achieving distribution and a lower bound on the probability of the largest mass point are established.    Furthermore, on the per-symbol basis, a nonvanishing  lower bound on the probability of error for detecting the capacity-achieving distribution is established under the maximum a posteriori rule. 

{\bf \small Keywords: Amplitude constraint,  Poisson noise channel, optical communications, capacity, discrete distributions, strong data-processing inequality.}
\end{abstract}

\section{Introduction} 
\label{sec:Introduction}
We consider a discrete-time memoryless Poisson channel. The output $Y$ of this channel   takes value on the set of nonnegative integers $\mathbb{N}_0$ and the input $X$ takes value on the set of nonnegative real numbers.   The conditional probability mass function (pmf) of the output random variable $Y$ given the input $X$ that  specifies the channel is given by 
\begin{equation}
P_{Y|X}(y|x) = \frac{1}{y!} \, x^y \, \rme^{- x}, \quad x\ge 0, \, y\in \mathbb{N}_0. \label{eq:PoissonTransformation}
\end{equation}
In \eqref{eq:PoissonTransformation}, we use the standard convention that $0^0=1$ and $0!=1$. 

The capacity of this channel where the input $X$ is subject to the amplitude constraint $0\le X \le \sfA$ is given by 
\begin{equation}
C(\sfA)= \max_{ X :\:  0 \le X \le \sfA } I(X;Y), \quad \sfA>0. \label{eq:Def_Capacity}
\end{equation} 
Finding the  capacity of this channel remains to be an elusive task.  
The goal of this work is to make progress on this problem by studying the properties of 
the capacity-achieving distribution denoted by   $P_{X^\star}$. 

\subsection{Contributions and Paper Outline}
The outline and the contribution of the paper are as follows.  The remaining part of Section~\ref{sec:Introduction} will survey the relevant literature, present our notation, and will go over the key tools needed in our analysis, such as the oscillation theorem and the strong data-processing inequality. Section~\ref{sec:Main_Results} presents our main results, which include the following: a new  compact representation of the capacity; an upper bound on the values of probability masses of the optimal input distribution $P_{X^\star}$; a lower bound on the probability  of the largest mass points (i.e.,  $P_{X^\star}(\sfA)$); a lower bound on the size of the support of $P_{X^\star}$;  an upper bound on the size of the support of $P_{X^\star}$;   a nonvanishing  lower bound,  on the per-symbol basis, on the probability of error for detecting the capacity-achieving distribution. Section~\ref{sec:proofs} is dedicated to the proofs.   Section~\ref{sec:Conclusion} concludes the paper with some final remarks and possible future directions.  
%\item  
%\begin{itemize}
%\item Section~\ref{sec:Main_Results} presents our main results, which include the following: a new  compact representation of the capacity; an upper bound on the values of probability masses of the optimal input distribution $P_{X^\star}$; a lower bound on the probability  of the largest mass points (i.e.,  $P_{X^\star}(\sfA)$); a lower bound on the size of the support of $P_{X^\star}$; and an upper bound on the size of the support of $P_{X^\star}$. 
%\item Section~\ref{sec:proofs} is dedicated to the proofs. 
%\item 
% \end{itemize}

\subsection{Prior Work}

 The now classical approach developed by Witsenhausen in \cite{witsenhausen1980some} says that if the output alphabet has a cardinality $n$, then the support of the optimal input distribution cannot be more than $n$, irrespective of the size of the input alphabet.  
However, since the output alphabet of the Poisson noise channel has a countably infinite alphabet, the Witsenhausen approach does not apply. Instead, the approach that has been applied to the Poisson noise channel largely follows the analyticity idea introduced by Smith in \cite{smith1971information} in the context of amplitude-constrained Gaussian noise channel. The interested reader is referred to \cite{dytso2018discrete} for a summary of these techniques.  In this work, we also follow the latter technique. However, we considerably generalize and improve this approach.  In what follows, we summarize the known results on the Poisson noise channel and highlight the elements of the new technique.

The discrete-time Poisson noise channel is suited to model low-intensity, direct-detection optical communication channels \cite{gordon1962quantum}; the interested reader is also referred to  a survey on  free-space optical communications in \cite{khalighi2014survey}. The Poisson channel can be seen as a limiting case of the Binomial channel \cite{komninakis2001capacity}, which can be used to model the number of particles released by a sender unit in molecular communications \cite{Farsad2020}. A key difference in the mathematical formulation between the Poisson and the Binomial channel models is the infinite/finite nature of the output alphabet. In this work, we are concerned with the discrete-time channel; however, there also exists  a large literature on continuous-time channels, and the interested reader is referred to a survey in \cite{verdu1999poisson}.

The first major study of the capacity-achieving distribution for the Poisson channel was undertaken  in \cite{mceliece1979practical},  where the authors considered the capacity    with and without an additional  average-power constraint on the input,\footnote{In the Poisson noise channel the average-power is measured in terms the first moment of the input $X$.  The term power is appropriate if we consider laser communications where $\bbE[X]=\sfP$ represent average photon count. In this case, $ \sfP=\frac{ \eta P_t}{h f_0}$ where $P_t$ is the transmit power, $f_0$ is the optical carrier frequency, $h$ is Plank's constant and $\eta$ is the proportionality constant \cite{gordon1962quantum,pierce1981capacity}.} that is
\begin{equation}
 \bbE[X]\le \sfP.
\end{equation}
The authors of   \cite{mceliece1979practical} derived the Karush-Kuhn-Tucker (KKT) conditions that are necessary to study the structure of the capacity-achieving distribution.   These KKT conditions were then used to show that  the support of an optimal input distribution for any $\mathsf{A}$  can contain at most one mass point in the  open interval $(0,1)$. 
Moreover,  for any $\sfA\le 1$, it was shown that the optimal input distribution consists of two mass points at $0$ and $\sfA$  and is given by 
\begin{equation}
P_{X^\star}(\sfA)=\frac{1}{{\mathrm{e}}^{\frac{\sfA}{{\mathrm{e}}^\sfA-1}}-{\mathrm{e}}^{-\sfA}+1},\quad
P_{X^\star}(0)=1-P_{X^\star}(\sfA), \label{eq:OptimazingDistributionSmallAmplitude}
\end{equation} 
and the capacity is given by 
\begin{equation}
C(\sfA)= - \log \left( P_{X^\star}(0)+\rme^{-\sfA}P_{X^\star}(\sfA)   \right). \label{eq:ShamaiCapacity}
\end{equation}

The KKT conditions proposed in   \cite{mceliece1979practical} were rigorously derived   in \cite[Corollary 1]{shamai1990capacity} and extended to a more general case that includes the possibility of a nonzero dark current\footnote{A more general Poisson noise model also incorporates a nonnegative parameter $\lambda$ known as the dark current parameter, and  the channel is given  by $P_{Y|X}(y|x) = \frac{1}{y!} (x+\lambda)^y \rme^{- (x+\lambda)},\, x\ge 0,y\in \mathbb{N}_0 $. The dark current parameter $\lambda$ represents the intensity of an additional source of noise or interference, which produces  on average extra $\lambda$ photons   at a particle counter \cite{gordon1962quantum,pierce1981capacity, verdu1990asymptotic,shamai1990capacity}.} parameter. Moreover, using the analyticity idea of \cite{smith1971information}, in  \cite{shamai1990capacity} for $\sfA<\infty$ and any  $\sfP>0 $ it was shown that  the optimizing input distribution is unique and  discrete with finitely many mass points. Moreover,  for the case of $\sfP\ge \sfA$  (i.e., the average-power constraint is not active) and dark current is zero,  it  was shown that the distribution in \eqref{eq:OptimazingDistributionSmallAmplitude} continues to be capacity-achieving if and only if   $\sfA \le  \bar{\sfA}$ where $\bar{\sfA}\approx 3.3679$.   

Further studies of the conditions under which the capacity-achieving distribution is binary have been undertaken by the authors of \cite{cao2014capacity} and \cite{cao2014capacityPart2}.   For example, in \cite{cao2014capacity},  it was shown that with both the amplitude and the average-power constraint,   the optimal input distribution always contains a mass point at $0$. Moreover, in the case of only an amplitude constraint,  the optimal input distribution  contains mass points at both $0$ and~$\sfA$.  In \cite{cao2014capacityPart2}, it was shown that if  $\sfP < \frac{\sfA}{2}$ and the dark current  is large enough, the following binary distribution is optimal:
   \begin{equation}
   P_{X^\star}( 0) = 1 -\frac{\sfP}{\sfA} ,   \quad P_{X^\star}( \sfA) = \frac{\sfP}{\sfA}. 
   \end{equation}
   The capacity-achieving distribution with only an average-power constraint was  considered in \cite{cheraghchi2018improved}  and was shown to be discrete with infinitely many mass points.  
   
   The low-average-power and the low-amplitude asymptotics of the capacity have been studied in \cite{martinez2007spectral,martinez2008achievability,lapidoth2011discrete,wang2014refined,wang2014impact}. A number of papers have also focused on upper and lower bounds on the capacity. The first upper and lower bounds  on the capacity have been derived in  \cite{mceliece1979practical} for two situations: the case of the average-power constraint only, and the case of both the average-power and the amplitude constraint with $\sfA\le 1$. 
The authors of  \cite{verdu1990asymptotic}  derived upper and lower bounds, in the case of   the average-power constraint only, by focusing on the regime where both $\sfP$ and the dark current tend to infinity with a fixed ratio. 
Firm upper and lower bounds on the capacity in the case of only the average-power constraint and no dark current have been derived  in \cite{martinez2007spectral} and \cite{martinez2008achievability}.  Bounds in \cite{martinez2007spectral} and \cite{martinez2008achievability} have been further improved in \cite{cheraghchi2018improved} and \cite{cheraghchi2020non}. The most general bounds on the capacity  that consider both the amplitude and the average-power constraints on the input and hold for an arbitrary value of the dark current have been derived  in  \cite{lapidoth2009capacity}.   The bounds  in \cite{lapidoth2009capacity} have been shown to be  tight in the regime where both the average-power and the amplitude constraint approach infinity with a fixed ratio $\frac{\sfP}{\sfA}$.  Finally, the authors of \cite{yu2014lower} sharpened the results of \cite{lapidoth2009capacity} for small values of $\sfP$ and $\sfA$.

\subsection{Notation} 
Throughout the paper, the deterministic scalar quantities are denoted by lower-case letters and random variables are denoted by uppercase letters.   The set of all real numbers, nonnegative integers and positive integers are denoted by  $\mathbb{R}, \mathbb{N}_0$, and $\mathbb{N}$, respectively.     The  function $\mathds{1}_{\mathcal{A}}(x)$ denotes the indicator function over the set $\mathcal{A}$ where $\mathds{1}_{\mathcal{A}}=1$ if $x \in \mathcal{A}$ and  $\mathds{1}_{\mathcal{A}}=0$ otherwise. 

We denote the distribution  of a random variable $X$ by $P_{X}$. The support set of $P_X$ is denoted and defined as
\begin{equation}
\supp(P_{X})= \left\{x:  \text{ for every open set $ \mathcal{D} \ni x $ we have that $P_{X}( \mathcal{D})>0$}  \right\}. 
\end{equation} 
The relative entropy between distributions $P$ and $Q$ will be denoted by $\sfD(P\|Q)$. 
A Poisson pmf with expected value $x$ is denoted by ${\cal P}(x)$.

The number of zeros of a function $f \colon \mathbb{R} \to \mathbb{R} $  on the interval $\cI$ is denoted by  $\rmN(\cI, f)$. Let $g_1$ and $g_2$ be nonnegative functions, then  
\begin{itemize}
\item $g_1(x)=O(g_2(x))$ means that there exists a constant $c>0$ and $x_0$ such that $\frac{g_1(x)}{g_2(x)}\le c$ for all $x>x_0$;
\item $g_1(x)=\Omega(g_2(x))$ means that $g_2(x)=O(g_1(x))$; and
\item $g_1(x)=o(g_2(x))$ means $ \lim_{x \to \infty} \frac{g_1(x)}{g_2(x)}=0$. 
\end{itemize}

Finally,  the Lambert W-function is denoted by  $W_k$ where $k$ indicates the branch \cite{NIST:DLMF}. Since we are dealing with real numbers, we only use the principle branch $W_0$ and the lower branch $W_{-1}$ .   Recall that, for the real numbers $x$ and $y$,  the Lambert W-function provides a solution to the equation $y \rme^y=x$, which can be solved for $y$ only if $x \ge \rme^{-1}$. Moreover, in the regime $x \ge 0$,  the solution is unique and is given by $y=W_0(x)$, and in the regime $\rme^{-1} \le x<0$,  there are two solutions give by $y=W_0(x)$ and $y=W_{-1}(x)$.

 \subsection{Overview of the Key Tools} 
 In this section, we overview the main tools needed in our analysis.

 \paragraph{Strong Data-Processing Inequality}   In order to find an upper bound on the values of probability masses and a lower bound on the number of mass points, we will rely on the strong data-processing inequality for the relative entropy. The study of the strong data-processing inequalities has recently received some attention, and the interested reader is referred to \cite{RaginskyStrongData,polyanskiy2017strong,du2017strong} and references therein.

 Fix some channel $Q_{Y|X}:  \mathcal{X} \to \mathcal{Y}$ where $\mathcal{X}$ and $\mathcal{Y}$ are the input and output alphabets, respectively.   Let $Q_Y$ denote the distribution on $\mathcal{Y}$ induced by the push-forward of the distribution $Q_X$ on $\mathcal{X}$ through $Q_{Y|X}$. We denote this operation by $Q_X\to Q_{Y|X}\to Q_Y$.    
 
  The classical \emph{data-processing inequality} for the relative entropy states that  for  $Q_X\to Q_{Y|X}\to Q_Y$  and $P_X\to Q_{Y|X}\to P_Y$, we have 
 \begin{equation}
 \sfD(P_Y \| Q_Y) \le \sfD(P_X\| Q_X).  \label{eq:DT_KL_Classical}
 \end{equation} 
 The \emph{strong data-processing inequality} is the sharpening of \eqref{eq:DT_KL_Classical}  where one seeks to find a coefficient  $0 < \eta \le 1$ such that 
 \begin{equation}
  \sfD(P_Y \| Q_Y) \le  \eta \sfD(P_X\| Q_X).  \label{eq:SDP_def}
 \end{equation}
It is not difficult to see that
	\begin{equation}
	\eta \le 	 \eta_{\textsf{\textup{KL}}}(\mathcal{X};Q_{Y|X} ) \coloneqq\sup_{\substack{Q_X, P_X:  \\ \supp(Q_{X}) \subseteq \mathcal{X},   \supp(P_{X}) \subseteq \mathcal{X} \\ 0<D(Q_X\|P_{X})<\infty\\ Q_X\to Q_{Y|X}\to Q_Y \\ P_X\to Q_{Y|X}\to P_Y}}\frac{D(Q_Y\|P_{Y})}{D(Q_X\|P_{X})}.
		 \label{eq:Defintion_Of_Contraction_Coefficient}
	\end{equation}%
	The quantity $0< \eta_{\textsf{\textup{KL}}}(\mathcal{X};Q_{Y|X} ) \le 1 $ is known as the contraction coefficients. In Section~\ref{sec:Proof_bounds_On_Probabilities}, for the case of Poisson channel, we will provide a nontrivial upper bound on $ \eta_{\textsf{\textup{KL}}}(\mathcal{X},Q_{Y|X} )$.

%The strong data-processing inequality will be the key tool used provided bounds on the values of the probability masses of the capacity-achieving distribution. 

 \paragraph{Oscillation Theorem}    To find an upper bound on the number of points in the support of $P_{X^\star}$, we will follow the proof technique developed in \cite{dytso2019capacity} for the Gaussian noise channel.   The key step that we borrow from \cite{dytso2019capacity} is the use of the variation-diminishing property, which is captured by the oscillation theorem of Karlin \cite{karlin1957polya}.     To state the oscillation theorem we need the following definition. 
 
\begin{definition}[Sign Changes of a Function]  The number of sign changes of a function $\xi: \mathcal{X} \to \mathbb{R}$ is given by 
\begin{equation}
  \scrS(\xi) = \sup_{m\in \bbN } \left\{\sup_{y_1< \cdots< y_m  \subseteq \mathcal{X} } \scrN \{ \xi (y_i) \}_{i=1}^m\right\} \text{,}
\end{equation}
where  $\scrN\{ \xi (y_i) \}_{i=1}^m$ is the number of sign changes  of the sequence $\{ \xi (y_i) \}_{i=1}^m $.
\end{definition}

The following theorem, shown in \cite[Thm.~3]{karlin1957polya} (see also  \cite[Thm.~3.1,~p.~21]{KarlinBook1968}), will be a key step in the proof of the upper bound on the number of mass points.

\begin{theorem}[Oscillation Theorem]\label{thm:OscillationThoerem} Given domains $\bbI_1 $ and $\bbI_2$, let $p\colon \bbI_1\times \bbI_2  \to \bbR$ be a strictly totally positive  kernel.\footnote{A function $f:\bbI_1 \times \bbI_2 \to \bbR$ is said to be a strictly totally positive  kernel of order $n$ if $\det\left([f(x_i,y_j)]_{i,j = 1}^{m}\right) >0 $ for all $1\le m \le n $, and for all $x_1< \cdots < x_m \in \bbI_1  $, and $y_1< \cdots < y_m \in \bbI_2$.  If $f$ is a strictly totally positive  kernel  of order $n$ for all $n\in \bbN$, then $f$ is a strictly totally positive  kernel.} For an arbitrary $y$, suppose $p(\cdot, y)\colon \bbI_1 \to \bbR $ is an $n$-times differentiable function. Assume that $\mu$ is a  regular sigma-finite measure on $\bbI_2 $, and let $\xi \colon \bbI_2 \to \bbR $ be a function with $\scrS(\xi) = n$. For $x\in \bbI_1$, define
\begin{equation}
\Xi(x)=  \int  \xi (y)\: p(x ,y)  \, {\rm d} \mu(y) \text{.} \label{eq:Integral_Transform}
\end{equation}
If $\Xi \colon \bbI_1 \to \bbR$ is an $n$-times differentiable function, then either $\rmN(\bbI_1, \Xi) \le n$, or $\Xi\equiv 0$.  
\end{theorem}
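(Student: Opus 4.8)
The statement is Karlin's variation‑diminishing property for strictly totally positive (STP) kernels, so the plan is to reconstruct that classical argument. I would begin by extracting the global sign structure of $\xi$ forced by $\scrS(\xi)=n$: the domain $\bbI_2$ decomposes into consecutive subintervals $J_0,J_1,\dots,J_n$ (some possibly empty, with boundaries placed anywhere inside the set where $\xi$ vanishes) such that, for one fixed $\delta\in\{-1,+1\}$, $\delta(-1)^k\xi(y)\ge 0$ for every $y\in J_k$ and every $k$ — any finer alternation would produce more than $n$ sign changes among finitely many sample points, contradicting $\scrS(\xi)=n$. Setting $w=|\xi|$ and $\Phi_k(x)=\int_{J_k}w(y)\,p(x,y)\,\mathrm d\mu(y)$ gives $\Xi=\delta\sum_{k=0}^{n}(-1)^k\Phi_k$. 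Since $p>0$ everywhere (STP of order $1$) and $w\ge 0$, each $\Phi_k$ is either identically $0$ (when $\int_{J_k}w\,\mathrm d\mu=0$) or strictly positive on all of $\bbI_1$; after discarding the former, rename the survivors $\Phi_{k_0},\dots,\Phi_{k_r}$ with $0\le r\le n$, so $\Xi=\delta\sum_{j=0}^{r}(-1)^{k_j}\Phi_{k_j}$, a linear combination with all coefficients nonzero.

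The crux is to show $\{\Phi_{k_0},\dots,\Phi_{k_r}\}$ is a Chebyshev system on $\bbI_1$, i.e. $\det[\Phi_{k_j}(x_i)]_{i,j=0}^{r}>0$ for every $x_0<\cdots<x_r$ in $\bbI_1$. Here I would invoke the basic composition (Andréief/Cauchy–Binet) formula for kernels: writing $q_j(y)=w(y)\mathds{1}_{J_{k_j}}(y)$,
\[
\det\big[\Phi_{k_j}(x_i)\big]_{i,j=0}^{r}
=\int\cdots\int_{y_0<\cdots<y_r}\det\big[p(x_i,y_l)\big]_{i,l=0}^{r}\,
\det\big[q_j(y_l)\big]_{l,j=0}^{r}\,\prod_{l=0}^{r}\mathrm d\mu(y_l).
\]
Because the $J_{k_l}$ are disjoint and increasing, $\det[q_j(y_l)]$ vanishes on the region $y_0<\cdots<y_r$ except on the box $\{y_l\in J_{k_l}\ \forall l\}$, where $[q_j(y_l)]=\diag(w(y_0),\dots,w(y_r))$; that box has positive $\mu^{\otimes(r+1)}$‑measure precisely because each $\Phi_{k_l}\not\equiv 0$. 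On that box $\det[p(x_i,y_l)]>0$ by strict total positivity, so the whole integral is strictly positive. Consequently $\Xi$ cannot vanish at $r+1$ distinct points $x_0<\cdots<x_r$, since that would make $((-1)^{k_0},\dots,(-1)^{k_r})$ a nonzero null vector of $[\Phi_{k_j}(x_i)]$ and force the determinant to vanish; hence either $\Xi\equiv 0$ or $\rmN(\bbI_1,\Xi)\le r\le n$, which already proves the theorem when $\rmN$ counts distinct zeros.

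If $\rmN(\cI,\cdot)$ is to be read as counting zeros with multiplicity, one further step is needed, and this is where the differentiability hypotheses on $p(\cdot,y)$ and $\Xi$ enter: one upgrades ``$\{\Phi_{k_j}\}$ is a Chebyshev system'' to ``extended complete Chebyshev system'', so that a nontrivial combination has at most $r$ zeros counted with multiplicity. Concretely, one differentiates the composition formula under the integral sign and lets some of the nodes $x_i$ coalesce, obtaining generalized (confluent, Wronskian‑type) determinants, and checks they remain strictly positive. I expect this confluent‑limit step — not the composition formula — to be the genuine obstacle, since a limit of positive quantities is only nonnegative a priori; the standard remedy is to exploit the strictness in ``strictly totally positive'' (perturb to obtain a quantitative positivity gap before passing to the limit), or, equivalently, to combine the finite‑dimensional variation‑diminishing property of STP matrices with a Rolle‑type argument applied to $\Xi,\Xi',\dots,\Xi^{(n)}$.
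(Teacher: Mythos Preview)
The paper does not prove this theorem at all: it is quoted from Karlin \cite[Thm.~3]{karlin1957polya} (and \cite[Thm.~3.1,~p.~21]{KarlinBook1968}) and used as a black box. So there is no in-paper proof to compare against; your proposal is a reconstruction of the classical Karlin argument, and it is essentially the one found in those references: decompose $\bbI_2$ into the $n+1$ sign-constant pieces of $\xi$, define the integrals $\Phi_k$, and show that the surviving $\Phi_{k_j}$ form a Chebyshev system via the Andr\'eief/Cauchy--Binet composition formula together with strict total positivity of $p$. That line is correct, and your observation that the product box $\{y_l\in J_{k_l}\}$ has positive $\mu^{\otimes(r+1)}$-measure precisely because each retained $\Phi_{k_l}\not\equiv 0$ is the right way to pass from nonnegativity to strict positivity of the determinant.

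One clarification that would let you drop the last paragraph entirely: in this paper $\rmN(\cI,f)$ is defined simply as ``the number of zeros of $f$ on $\cI$'' (see the Notation subsection), i.e.\ distinct zeros, not zeros with multiplicity. Indeed, when the paper later applies the theorem it combines it with a separate Rolle-type step (their Lemma on $\rmN([-R,R],f)\le \rmN([-R,R],f')+1$) rather than relying on an extended/confluent Chebyshev property. So your Chebyshev-system argument already delivers exactly the conclusion stated here, and the confluent-limit discussion, while accurate, is not needed for the version of the result the paper uses.
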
 

The above theorem says that the number of zeros of a function $\Xi(x)$, which is the output of integral transformation, is less than the number of sign changes of the function $  \xi (y) $, which is the input to the integral transformation.   
In the rest of this paper, we will take the kernel  $ p(x ,y) $ to be the Poisson transition probability $P_{Y|X}$ in \eqref{eq:PoissonTransformation}, $\mu$ to be the counting measure, and  the domains will be set to $\bbI_1 =\{ x: x \ge 0\}$ and  $\bbI_2 = \mathbb{N}_0$. The fact that the Poisson transition probability is a  strictly totally positive  kernel was shown in \cite{karlin1957polya}.  The interested reader is also referred to \cite[p.~19]{KarlinBook1968},  where it is shown that members of the exponential family (e.g., Poisson, Gaussian, chi-squared)  are positive definite kernels.   Figure~\ref{fig:example_Oscillation_THm} shows an example of Theorem~\ref{thm:OscillationThoerem} for the Poisson kernel. 

\begin{figure}
\center
\begin{subfigure}{.4\linewidth}
% This file was created by matlab2tikz.
%
%The latest updates can be retrieved from
%  http://www.mathworks.com/matlabcentral/fileexchange/22022-matlab2tikz-matlab2tikz
%where you can also make suggestions and rate matlab2tikz.
%
\definecolor{mycolor1}{rgb}{0.00000,0.44700,0.74100}%
\definecolor{mycolor2}{rgb}{0.85000,0.32500,0.09800}%
\begin{tikzpicture}

\begin{axis}[%
width=6cm,
height=5cm,
at={(1.011in,0.642in)},
scale only axis,
xmin=0,
xmax=40,
xlabel={$y$},
ymin=-2,
ymax=2,
axis background/.style={fill=white},
xmajorgrids,
ymajorgrids,
legend style={legend cell align=left, align=left, draw=white!15!black}
]

\addplot[ycomb, color=red, mark=o, mark options={solid, red}] table[row sep=crcr] {%
0	0\\
1	1.105341068072\\
2	1.33074857582821\\
3	0.663122658240795\\
4	-0.0430415378907821\\
5	0.0689908389009757\\
6	0.992708874098054\\
7	1.85873427788249\\
8	1.80104164646985\\
9	0.822983865893657\\
10	-0.202902745543642\\
11	-0.40130223174067\\
12	0.239315664287558\\
13	0.866025403784438\\
14	0.626709739496882\\
15	-0.464723172043766\\
16	-1.52914806202523\\
17	-1.6890092696781\\
18	-0.935016242685415\\
19	-0.126683470313616\\
20	-0.126683470313614\\
21	-0.935016242685414\\
22	-1.68900926967809\\
23	-1.52914806202524\\
24	-0.46472317204377\\
25	0.626709739496879\\
26	0.86602540378444\\
27	0.239315664287562\\
28	-0.40130223174067\\
29	-0.202902745543646\\
30	0.822983865893651\\
31	1.80104164646985\\
32	1.85873427788249\\
33	0.992708874098059\\
34	0.0689908389009792\\
35	-0.0430415378907851\\
36	0.663122658240794\\
37	1.33074857582821\\
38	1.105341068072\\
39	5.51214855743627e-15\\
40	-1.10534106807199\\
};
\addplot[forget plot, color=white!15!black] table[row sep=crcr] {%
0	0\\
40	0\\
};
\addlegendentry{$\xi(y)$}

\end{axis}

\end{tikzpicture}%
\caption{Input function $\xi(y)$.}
\end{subfigure}
~
\hspace{2cm}
\begin{subfigure}{.4\linewidth}
\input{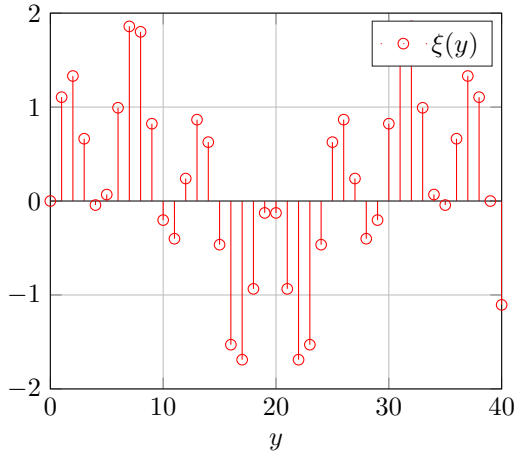}
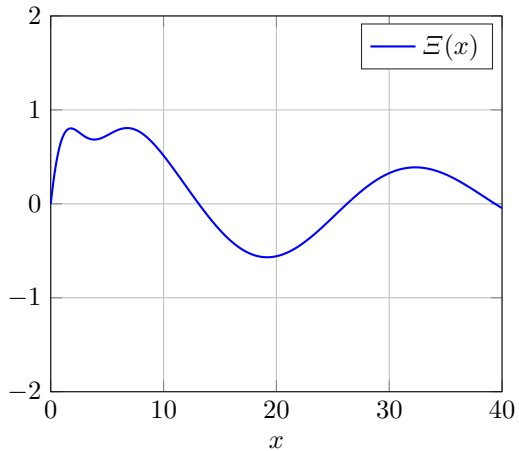
\caption{Output function $\Xi(y)$.}
\end{subfigure}
\caption{An example of the oscillation theorem with the Poisson kernel where we set  $\xi(y)=\sin\left( \frac{\pi}{3} y \right) +\sin \left( \frac{\pi}{13} y\right), \, y \in  \bbN_0 $.  }
\label{fig:example_Oscillation_THm}
\end{figure} 

The oscillation theorem allows to upper bound the number of points in the support of $P_{X^\star}$ with the number of sign changes of a function that is related to the output distribution $P_{Y^\star}$. 
 In the case of the Gaussian noise channel, in order to count the number of sign changes, one needs to resort to complex analytic techniques.  In contrast,  in the Poisson case, due to the discrete nature of the channel, we no longer need to rely on the complex analytic techniques, which simplifies this part of the analysis.
 
However, the analysis for the Poisson channel is not necessarily simpler than that in the Gaussian case. One crucial step in both proofs relies on finding a lower bound of the output pdf in the Gaussian case and the output pmf in the Poisson case.  In the Gaussian case, this can be done by using Jensen's inequality, and the resulting bound is universal and is independent of $P_{X^\star}$.  In the Poisson noise case, however, a distribution-independent bound cannot be obtained, and the lower bound on the tail depends on  $P_{X^\star}$. Specifically, the lower bound depends on  the value of $P_{X^\star}(\sfA)$. Therefore, to complete the proof, we need to also find a lower bound on $P_{X^\star}(\sfA)$. 

\paragraph{Connection Between Estimation and Information Measures} 
Some of the key intermediate steps in our proofs will rely on identities that connect  information measures and estimation measure. In particular, we will rely on the expression for the conditional expectation derived in \cite{dytso2020estimation}. 
For further connections between estimation and information theoretic measure the interested reader is referred to \cite{guo2008mutual} and \cite{atar2012mutual} and references therein.

\section{Main Results} 
\label{sec:Main_Results}

The main results of this paper are summarized in the following theorem.

\begin{theorem}\label{thm:Main_Result}    The capacity and the capacity-achieving  distribution $P_{X^\star}$ of an amplitude-constrained Poisson noise channel satisfy the following properties: 
\begin{itemize}
\item \emph{A New Capacity Expression:} For every $\sfA  \ge 0 $ , the capacity is given by 
\begin{equation}
C(\sfA) =  \log \frac{1}{P_{Y^\star}(0)}, \label{eq:New_Cap_Expression}
\end{equation}
where $P_{Y^\star}$ is the capacity-achieving output distribution induced by $P_{X^\star}$. 
\item \emph{An Upper Bound on the Probabilities:} For every $\sfA  \ge 0$,
\begin{align}
\text{Universal Bound: }P_{X^\star}(x) &\le \rme^{- \frac{C(\sfA)}{ 1-\rme^{-\sfA}}  }, &  & x\in \supp(P_{X^\star}), \label{Eq:Bound_On_Probability_Values_poisson} \\
\text{Location Dependent Bound: } P_{X^\star}(x) &\le \rme^{-C(\sfA)-x \frac{\rme^{-x}}{1-\rme^{-x}}},&  &  x\in \supp(P_{X^\star}) \setminus \{0\}.  \label{Eq:Bound_On_Probability_Values_poisson_loc_dependent}
\end{align}
In addition, if   $| \supp(P_{X^\star}) |=2$, then  the bound in \eqref{Eq:Bound_On_Probability_Values_poisson_loc_dependent} becomes equality. 
\item   \emph{A Lower Bound on the Probability of the Largest Point:} For all $\sfA$ such that 
$\rme^{ \frac{C(\sfA)}{ 1-\rme^{-\sfA}}  } \ge 4$, we have that 
\begin{equation}
  P_{X^\star}(\sfA) \ge    \frac{1-3\:\rme^{- \frac{C(\sfA)}{ 1-\rme^{-\sfA}}  } }{2\sfA^{2\sfA \rme  \log(\sfA)+2} \, \rme^{-2\sfA+1}  } .  \label{eq:lower_bound_on_P_X(A)}
\end{equation}
\item \emph{On the Location of Support Points:}

Suppose that $ | \supp(P_{X^\star}) | \ge 3$ and let  $\supp(P_{X^\star})\setminus \{0,\sfA\}$.  Then,
\begin{equation}
 \rme^{-\sqrt{2(\log(\sfA)-1)}} \le \sfA \, \rme^{W_{-1} \left(- \frac{1}{\sfA} \right)} \le x^\star  \le \sfA \, \rme^{W_{0} \left(- \frac{1}{\sfA} \right)} \le \sfA-1.   \label{eq:Bound_on_other_points}
\end{equation}

\item \emph{A Lower Bound on the Size of the Support:} For every $\sfA  \ge 0$,
\begin{equation}
    | \supp(P_{X^\star}) |  \ge    \rme^{ \frac{C(\sfA)}{  1-\rme^{- \sfA } }  }.  \label{eq:LowerBound_on_Number_of_Points}
\end{equation}
\item  \emph{An Upper Bound on the Size of the Support:} For every $\sfA \ge 0$,
 \begin{equation}
  | \supp(P_{X^\star}) |  \le  \left \lceil \sfA  - \log  \left(   P_{X^\star}(\sfA) \right) - C(\sfA) \right \rceil +2. \label{eq:First_implicit_bound_on_point}
  \end{equation}
  In addition, for all $\sfA$ such that 
$\rme^{ \frac{C(\sfA)}{ 1-\rme^{-\sfA}}  } \ge 4$, we have that 
\begin{equation}
     | \supp(P_{X^\star}) |     \le 2  \rme \, \sfA  \log^2(\sfA)+2\log(\sfA)-\sfA- \log  \left(  \frac{1-3\:\rme^{- \frac{C(\sfA)}{ 1-\rme^{-\sfA}}  } }{2 } \right) - C(\sfA)  +4.  \label{eq:Bounds_on_the_Number_Of_Poitns}
\end{equation}

\end{itemize} 

\end{theorem}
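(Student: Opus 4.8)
\textbf{Proof proposal for Theorem~\ref{thm:Main_Result}.}

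The plan is to prove the six bullet points essentially in the order stated, since each later item leans on earlier ones. I would begin with the new capacity expression \eqref{eq:New_Cap_Expression}. The starting point is the KKT characterization of $P_{X^\star}$ (from \cite{shamai1990capacity}): there is a constant, equal to $C(\sfA)$, such that the map $x\mapsto \sfD\!\left(P_{Y|X=x}\,\|\,P_{Y^\star}\right)$ equals $C(\sfA)$ on $\supp(P_{X^\star})$ and is $\le C(\sfA)$ elsewhere. Evaluating at $x=0$ gives $\sfD\!\left(P_{Y|X=0}\,\|\,P_{Y^\star}\right)\le C(\sfA)$, and since $P_{Y|X=0}$ is the point mass at $y=0$, this relative entropy is exactly $\log\frac{1}{P_{Y^\star}(0)}$. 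For the reverse inequality I would use that $0\in\supp(P_{X^\star})$ (shown in \cite{cao2014capacity}), so the KKT condition holds with equality at $x=0$, giving $C(\sfA)=\log\frac{1}{P_{Y^\star}(0)}$ directly.

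Next I would establish the probability bounds \eqref{Eq:Bound_On_Probability_Values_poisson} and \eqref{Eq:Bound_On_Probability_Values_poisson_loc_dependent}, which is where the strong data-processing inequality enters. The idea is to take $Q_X=\delta_x$ for a support point $x$ and $P_X=P_{X^\star}$, push both through the Poisson channel, and apply \eqref{eq:SDP_def}--\eqref{eq:Defintion_Of_Contraction_Coefficient} with a computable upper bound $\eta_{\textsf{\textup{KL}}}\le 1-\rme^{-\sfA}$ (or, for the location-dependent version, $1-\rme^{-x}$) on the contraction coefficient for the amplitude-$\sfA$ input set, which is to be proved in Section~\ref{sec:Proof_bounds_On_Probabilities}. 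The output-side relative entropy $\sfD(P_{Y|X=x}\|P_{Y^\star})$ equals $C(\sfA)$ by KKT; the input-side one is $\sfD(\delta_x\|P_{X^\star})=\log\frac{1}{P_{X^\star}(x)}$. Rearranging gives $P_{X^\star}(x)\le \rme^{-C(\sfA)/\eta}$, which is exactly the two claimed bounds once the correct $\eta$ is inserted. For the equality claim when $|\supp|=2$, the support is $\{0,\sfA\}$ and one can just verify the SDPI is tight there by a direct two-point computation (or invoke \eqref{eq:OptimazingDistributionSmallAmplitude}).

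The two support-size bounds then follow. For the lower bound \eqref{eq:LowerBound_on_Number_of_Points}: since $\sum_{x\in\supp}P_{X^\star}(x)=1$ and each term is $\le \rme^{-C(\sfA)/(1-\rme^{-\sfA})}$ by \eqref{Eq:Bound_On_Probability_Values_poisson}, we get $|\supp|\ge \rme^{C(\sfA)/(1-\rme^{-\sfA})}$. For the upper bound \eqref{eq:First_implicit_bound_on_point}, I would follow the oscillation-theorem route of \cite{dytso2019capacity}: the KKT slackness function $g(x)=C(\sfA)-\sfD(P_{Y|X=x}\|P_{Y^\star})$ vanishes on $\supp(P_{X^\star})$ and is a difference of an affine term in $x$ and $\sum_y P_{Y|X}(y|x)\log\frac{1}{P_{Y^\star}(y)}$; writing the latter as an integral transform of $\xi(y)=\log\frac{1}{P_{Y^\star}(y)}$ against the Poisson kernel, Theorem~\ref{thm:OscillationThoerem} bounds $\rmN([0,\sfA],\Xi)$ by $\scrS(\xi)$. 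Then one bounds the number of sign changes of $\xi(y)-(\text{affine in }y)$ on $\mathbb{N}_0$ by counting how far out the tail $y$ must go before $\log\frac{1}{P_{Y^\star}(y)}$ overtakes the affine comparison; this uses a \emph{lower} bound on the tail $P_{Y^\star}(y)$, which (as flagged in the overview) depends on $P_{X^\star}(\sfA)$ — concretely $P_{Y^\star}(y)\ge P_{X^\star}(\sfA)\,\rme^{-\sfA}\sfA^y/y!$. Carrying the Stirling estimate through produces the $\lceil \sfA-\log P_{X^\star}(\sfA)-C(\sfA)\rceil+2$ bound. Finally \eqref{eq:Bounds_on_the_Number_Of_Poitns} is obtained by substituting the lower bound \eqref{eq:lower_bound_on_P_X(A)} on $P_{X^\star}(\sfA)$ into \eqref{eq:First_implicit_bound_on_point} and simplifying $-\log P_{X^\star}(\sfA)\le 2\sfA\rme\log^2(\sfA)+2\log(\sfA)-3\sfA+1-\log\frac{1-3\rme^{-C(\sfA)/(1-\rme^{-\sfA})}}{2}$, which also explains why the hypothesis $\rme^{C(\sfA)/(1-\rme^{-\sfA})}\ge 4$ is inherited. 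The bracketing \eqref{eq:Bound_on_other_points} of the interior mass points is a separate lemma: the slackness function $g$ satisfies $g'(x^\star)=0$ at an interior maximizer, and $g'(x)=0$ reduces (after computing $\frac{d}{dx}\sfD(P_{Y|X=x}\|P_{Y^\star})$ via the conditional-expectation identity of \cite{dytso2020estimation}) to an equation whose relevant root is squeezed between the two Lambert-$W$ values; the outer bounds $\rme^{-\sqrt{2(\log\sfA-1)}}$ and $\sfA-1$ come from standard estimates on $W_{-1}$ and $W_0$ near $-1/\sfA$.

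\textbf{Main obstacle.} The crux is the tail lower bound on $P_{Y^\star}(y)$ together with the non-trivial estimate of the contraction coefficient $\eta_{\textsf{\textup{KL}}}$ for the Poisson channel restricted to $[0,\sfA]$ — everything downstream (both the probability bounds and the quantitative support-size bounds) funnels through these two ingredients, and unlike the Gaussian case there is no distribution-free shortcut, so one must circularly feed in the lower bound on $P_{X^\star}(\sfA)$, whose own proof \eqref{eq:lower_bound_on_P_X(A)} again uses \eqref{Eq:Bound_On_Probability_Values_poisson}. Managing that dependency chain cleanly, and pushing the Stirling/Lambert-$W$ asymptotics to the stated $\sfA\log^2(\sfA)$ order, is where the real work lies.
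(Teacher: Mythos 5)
Your high-level map of the dependency chain is right, and items (1), (5), and (6) are proved essentially as you sketch: the new capacity expression is the KKT condition at $x=0$ combined with $0\in\supp(P_{X^\star})$, the universal bound \eqref{Eq:Bound_On_Probability_Values_poisson} is the SDPI with the Dobrushin/total-variation estimate $\eta_{\textsf{\textup{KL}}}\le 1-\rme^{-\sfA}$, and the lower bound on the cardinality follows by summing. Two of the remaining steps, however, are argued differently in the paper than you propose, and in one case your proposal as stated would not go through.

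First, the location-dependent bound \eqref{Eq:Bound_On_Probability_Values_poisson_loc_dependent} does \emph{not} come from an $x$-dependent contraction coefficient. The quantity $\eta_{\textsf{\textup{KL}}}(\mathcal{X};P_{Y|X})$ is a single constant for the fixed input domain $\mathcal{X}=[0,\sfA]$, and the Dobrushin bound in \eqref{eq:Dobrusin_bound}--\eqref{eq:Bound_on_tv} only gives $1-\rme^{-\sfA}$. There is no legitimate specialization to ``$1-\rme^{-x}$'' for a fixed support point $x<\sfA$, since the SDPI requires both the reference distribution $P_{X^\star}$ and the test distribution $\delta_x$ to live on the same $\mathcal{X}$, and $P_{X^\star}$ has mass at $\sfA$. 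The paper instead proves the exact identity $P_{X^\star}(x)=\rme^{-C(\sfA)}\exp\bigl(\bbE[\log P_{X^\star|Y^\star}(x|Y)\mid X=x]\bigr)$ (Theorem~\ref{thm:Exact_Expression}), splits the expectation at $Y=0$ where $P_{X^\star|Y^\star}(x|0)=P_{X^\star}(x)\rme^{-x}/P_{Y^\star}(0)$ is computable from \eqref{eq:New_Cap_Expression}, bounds the remaining posterior terms by $1$, and solves the resulting implicit inequality in $P_{X^\star}(x)$. That is a genuinely different mechanism and it is also what gives the exactness claim in the binary case.

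Second, for the upper bound on $|\supp(P_{X^\star})|$, the slackness function is $i(x;P_{X^\star})-C(\sfA)=G(x)+x\log x-x-C(\sfA)$, and the term $x\log x$ is not affine in $x$ and, more to the point, cannot be written as $\bbE[h(Y)\mid X=x]$ for any $h:\bbN_0\to\bbR$ (the paper's remark about the non-analyticity of $\rme^{x}x\log x$ at $0$). So you cannot directly cast the whole slackness function as a Poisson integral transform and apply Theorem~\ref{thm:OscillationThoerem}. The paper's workaround is the divide-by-$x$, differentiate (Rolle, costing $+1$ zero), multiply-by-$x^2$ sequence, which kills the $x\log x$ term and produces a function of the required form $\bbE[\psi(Y)\mid X=x]$; only then is the oscillation theorem applied, and the lower bound $k!P_{Y^\star}(k)\ge P_{X^\star}(\sfA)\sfA^k\rme^{-\sfA}$ enters exactly as you anticipate. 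Similarly, for the location bound \eqref{eq:Bound_on_other_points}, the first-order condition $i'(x^\star)=0$ alone is not enough; the paper combines it with the second-order KKT condition $i''(x^\star)\le 0$ and the explicit conditional-expectation identity for $i''$ to reach $0\ge\log(x^\star/\sfA)+1/x^\star$, whose two roots give the Lambert-$W$ bracket. Finally, the proof of the lower bound \eqref{eq:lower_bound_on_P_X(A)} is the longest part of the argument (Lemmas~\ref{lem:first_lower_bound_on_p_a}--\ref{lem:bound_poisson_tail}), built on the same second-derivative inequality plus a careful tail split, and your proposal does not attempt it; it is not a routine consequence of \eqref{Eq:Bound_On_Probability_Values_poisson}.
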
 
The proof of Theorem~\ref{thm:Main_Result} is given in Section~\ref{sec:proofs}. A few comments are now in order.

\subsection{Numerical Simulations} 
In order to aid our discussion, we have also numerically computed the optimal input distributions for values of $\sfA$ up to $15$. 
Fig.~\ref{fig:Plot_of_probabilitiues} depicts the output of this simulation.    

We note that there are several numerical recipes for generating an optimal input distribution \cite{blahut1972computation,chang1988calculating,huang2005characterization}.  However, most of these approaches  ultimately optimize over the space of distributions, which is an infinite-dimensional space.  As was already alluded to in \cite{smith1971information} and \cite{dytso2019capacity},  a firm upper bound on the number of mass points, such as the one in  Theorem~\ref{thm:Main_Result}, allows us to move the optimization  from the space of probability distributions  to the space $\mathbb{R}^{2n}$ where $n$ is the number of points.   Working in $\mathbb{R}^{2n}$ allows us to employ methods such as the projected gradient ascent \cite{shalev2014understanding}, which was used to generate the plots in Fig.~\ref{fig:Plot_of_probabilitiues}. A quick sketch of how to implement the projected gradient ascent is given in Appendix~\ref{app:Gradient_Ascent}.

\begin{figure}
\center
\begin{subfigure}[t]{.4\linewidth}
\input{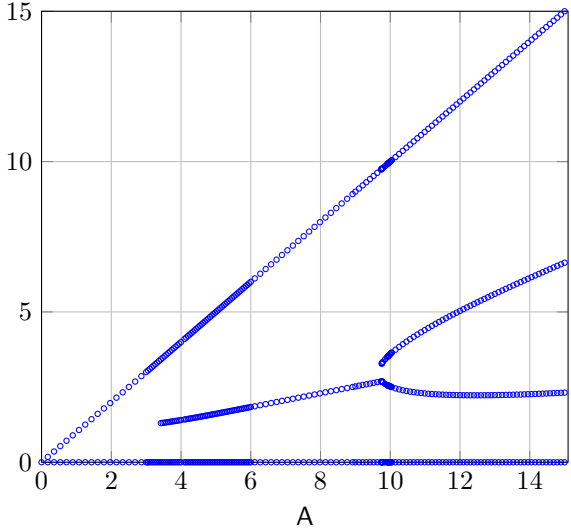}
\caption{Plot of the locations of the support points of $P_{X^\star}$ vs. $\sfA$.}
\label{fig:Locations}
\end{subfigure}
~
\hspace{1cm}
\begin{subfigure}[t]{.4\linewidth}
\input{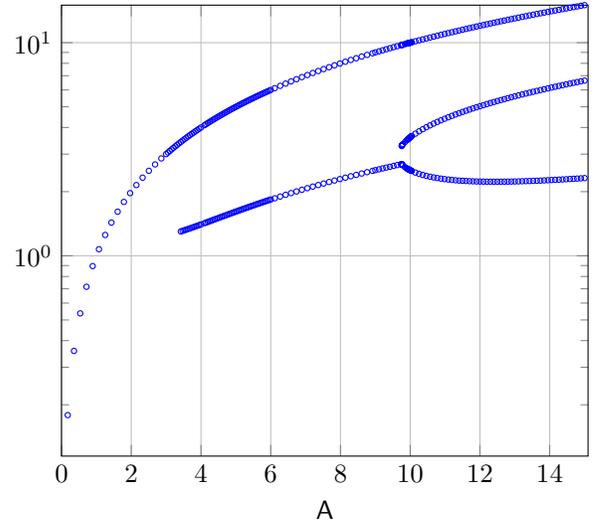}
\caption{Plot of the locations of the support points of $P_{X^\star}$ vs. $\sfA$ where the log scale is used for the vertical axis.}
\label{fig:Locations_Log_scale}
\end{subfigure}
~
\vspace{1cm}

\begin{subfigure}[t]{.4\linewidth}
\input{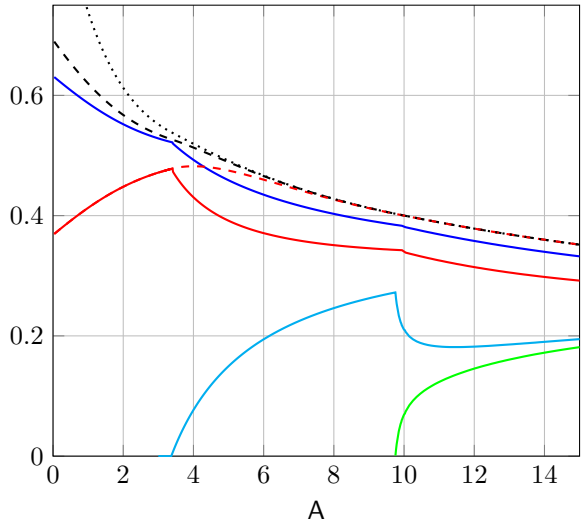}
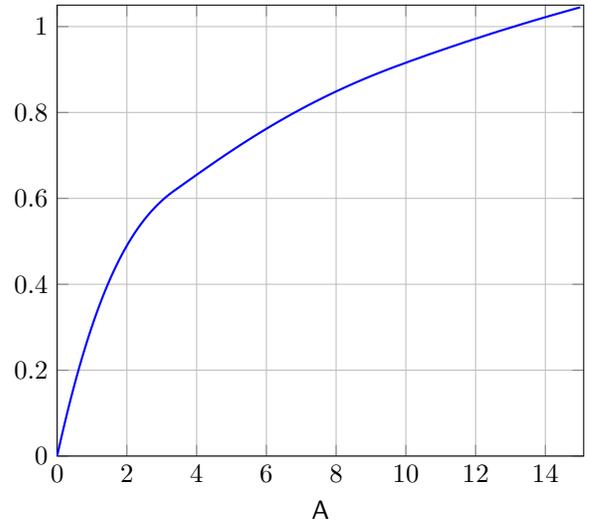
\caption{Plot of the probability values of the support points vs.~$\sfA$ where points of the support satisfy $0\le x_1 \le x_2 \le \sfA$.   The curves show:  $P_{X^\star}(0)$ (solid blue line); $P_{X^\star}(\sfA)$  (solid red line);  $P_{X^\star}(x_1)$ (solid cyan line);  $P_{X^\star}(x_2)$ (solid green line); upper bound in  \eqref{Eq:Bound_On_Probability_Values_poisson}  (dashed black line); upper bound  on $P_{X^\star}(\sfA)$ in  \eqref{Eq:Bound_On_Probability_Values_poisson_loc_dependent} (dashed red line); and upper bound  in  \eqref{eq:nostrongdataprocessingineq} with no strong data-processing term (dotted black line).   }
\label{fig:Probabilites_and_boound}
\end{subfigure}
~
\hspace{1cm}
\begin{subfigure}[t]{.4\linewidth}
% This file was created by matlab2tikz.
%
%The latest updates can be retrieved from
%  http://www.mathworks.com/matlabcentral/fileexchange/22022-matlab2tikz-matlab2tikz
%where you can also make suggestions and rate matlab2tikz.
%
\definecolor{mycolor1}{rgb}{0.00000,0.44700,0.74100}%
\begin{tikzpicture}

\begin{axis}[%
width=7cm,
height=6cm,
at={(1.011in,0.642in)},
scale only axis,
xmin=0,
xmax=15.1,
xlabel={$\sfA$},
ymin=0,
ymax=1.05,
axis background/.style={fill=white},
xmajorgrids,
ymajorgrids,
legend style={legend cell align=left, align=left, draw=white!15!black}
]
\addplot [color=blue,thick]
  table[row sep=crcr]{%
  0	0\\
0.0343434343434343	0.0125544841919615\\
0.0686868686868687	0.0249496729123573\\
0.103030303030303	0.0371859270805923\\
0.137373737373737	0.0492636426640856\\
0.171717171717172	0.0611832504629394\\
0.206060606060606	0.072945215864674\\
0.24040404040404	0.084550038569355\\
0.274747474747475	0.0959982522854861\\
0.309090909090909	0.10729042439709\\
0.343434343434343	0.118427155602457\\
0.377777777777778	0.129409079525065\\
0.412121212121212	0.140236862297263\\
0.446464646464646	0.150911202117306\\
0.480808080808081	0.161432828780409\\
0.515151515151515	0.171802503184514\\
0.549494949494949	0.182021016811511\\
0.583838383838384	0.192089191184679\\
0.618181818181818	0.202007877303165\\
0.652525252525253	0.211777955054353\\
0.686868686868687	0.221400332604977\\
0.721212121212121	0.230875945771921\\
0.755555555555556	0.240205757373605\\
0.78989898989899	0.249390756562943\\
0.824242424242424	0.258431958142845\\
0.858585858585859	0.267330401865266\\
0.892929292929293	0.276087151714831\\
0.927272727272727	0.284703295178066\\
0.961616161616162	0.293179942499294\\
0.995959595959596	0.301518225924243\\
1.03030303030303	0.30971929893246\\
1.06464646464646	0.317784335459578\\
1.0989898989899	0.325714529110537\\
1.13333333333333	0.333511092364831\\
1.16767676767677	0.341175255774858\\
1.2020202020202	0.34870826715845\\
1.23636363636364	0.356111390786652\\
1.27070707070707	0.363385906567815\\
1.30505050505051	0.370533109229041\\
1.33939393939394	0.377554307496036\\
1.37373737373737	0.384450823272376\\
1.40808080808081	0.391223990819211\\
1.44242424242424	0.397875155936383\\
1.47676767676768	0.40440567514593\\
1.51111111111111	0.410816914878929\\
1.54545454545455	0.417110250666596\\
1.57979797979798	0.423287066336547\\
1.61414141414141	0.42934875321509\\
1.64848484848485	0.435296709336392\\
1.68282828282828	0.441132338659349\\
1.71717171717172	0.446857050292936\\
1.75151515151515	0.452472257730796\\
1.78585858585859	0.457979378095801\\
1.82020202020202	0.463379831395269\\
1.85454545454545	0.468675039787508\\
1.88888888888889	0.473866426860309\\
1.92323232323232	0.478955416921972\\
1.95757575757576	0.483943434305452\\
1.99191919191919	0.488831902686119\\
2.02626262626263	0.493622244413642\\
2.06060606060606	0.498315879858448\\
2.09494949494949	0.502914226773174\\
2.12929292929293	0.507418699669505\\
2.16363636363636	0.511830709210747\\
2.1979797979798	0.51615166162045\\
2.23232323232323	0.520382958107378\\
2.26666666666667	0.524525994307061\\
2.3010101010101	0.528582159740169\\
2.33535353535354	0.53255283728787\\
2.36969696969697	0.536439402684357\\
2.4040404040404	0.540243224026638\\
2.43838383838384	0.543965661301712\\
2.47272727272727	0.547608065931173\\
2.50707070707071	0.551171780333303\\
2.54141414141414	0.554658137502639\\
2.57575757575758	0.558068460607015\\
2.61010101010101	0.561404062602029\\
2.64444444444444	0.564666245862872\\
2.67878787878788	0.56785630183342\\
2.71313131313131	0.570975510692485\\
2.74747474747475	0.574025141037079\\
2.78181818181818	0.577006449582549\\
2.81616161616162	0.57992068087939\\
2.85050505050505	0.582769067046558\\
2.88484848484848	0.585552827521056\\
2.91919191919192	0.58827316882358\\
2.95353535353535	0.590931284339967\\
2.98787878787879	0.593528354118199\\
3.02222222222222	0.59606554468068\\
3.05656565656566	0.598544008851516\\
3.09090909090909	0.600964885598484\\
3.12525252525253	0.603329299889409\\
3.15959595959596	0.605638362562609\\
3.19393939393939	0.607893170211105\\
3.22828282828283	0.610094805080252\\
3.26262626262626	0.612244334978458\\
3.2969696969697	0.614342813200654\\
3.33131313131313	0.616391278464157\\
3.36565656565657	0.618390754856579\\
3.4	0.62034225179543\\
%3	0.594430610901382\\
%3.05263157894737	0.598263039952178\\
%3.10526315789474	0.601959911883644\\
%3.15789473684211	0.605525267935662\\
%3.21052631578947	0.608963071153005\\
%3.26315789473684	0.612277205483695\\
%3.31578947368421	0.615471475047094\\
%3.36842105263158	0.61854960787881\\
3.42105263157895	0.621574619298788\\
3.47368421052632	0.62460408682152\\
3.52631578947368	0.62763679939487\\
3.57894736842105	0.630671618135925\\
3.63157894736842	0.633707472852837\\
3.68421052631579	0.636743358114095\\
3.73684210526316	0.639778329489677\\
3.78947368421053	0.642811499959296\\
3.84210526315789	0.645842036484976\\
3.89473684210526	0.648869156745762\\
3.94736842105263	0.651892126032699\\
4	0.654910254302241\\
4.1	0.660629071257674\\
4.14871794871795	0.663406754762156\\
4.1974358974359	0.666178340296656\\
4.24615384615385	0.668943421687397\\
4.29487179487179	0.671701615274627\\
4.34358974358974	0.674452558583944\\
4.39230769230769	0.677195909080287\\
4.44102564102564	0.679931343001307\\
4.48974358974359	0.682658554266618\\
4.53846153846154	0.68537725345927\\
4.58717948717949	0.688087166875698\\
4.63589743589744	0.690788035640284\\
4.68461538461538	0.693479614880702\\
4.73333333333333	0.696161672960167\\
4.78205128205128	0.698833990762816\\
4.83076923076923	0.70149636102851\\
4.87948717948718	0.704148587733465\\
4.92820512820513	0.706790485513268\\
4.97692307692308	0.709421879125008\\
5.02564102564103	0.712042602945394\\
5.07435897435897	0.714652500501997\\
5.12307692307692	0.71725142403485\\
5.17179487179487	0.719839234085946\\
5.22051282051282	0.722415799114292\\
5.26923076923077	0.724980995134416\\
5.31794871794872	0.727534705376404\\
5.36666666666667	0.730076819965737\\
5.41538461538462	0.732607235621323\\
5.46410256410256	0.735125855370352\\
5.51282051282051	0.737632588278671\\
5.56153846153846	0.740127349195568\\
5.61025641025641	0.742610058511938\\
5.65897435897436	0.745080641930906\\
5.70769230769231	0.747539030250136\\
5.75641025641026	0.749985159155026\\
5.8051282051282	0.752418969022166\\
5.85384615384615	0.754840404732453\\
5.9025641025641	0.757249415493264\\
5.95128205128205	0.759645954669228\\
6	0.762029979621073\\
6.11557788944724	0.767637280998367\\
6.2713567839196	0.775079961973713\\
6.42713567839196	0.782392773568015\\
6.58291457286432	0.789575262092383\\
6.73869346733668	0.796627253213853\\
6.89447236180905	0.803548821586225\\
7.05025125628141	0.810340263947354\\
7.20603015075377	0.817002075084587\\
7.36180904522613	0.823534926230899\\
7.51758793969849	0.829939645600877\\
7.67336683417085	0.836217200863585\\
7.82914572864322	0.842368683389873\\
7.98492462311558	0.848395294125637\\
8.14070351758794	0.854298330919391\\
8.2964824120603	0.860079177119965\\
8.45226130653266	0.865739291257573\\
8.60804020100502	0.871280197630848\\
8.76381909547739	0.876703477661874\\
8.91959798994975	0.882010761922579\\
9	0.884781728659965\\
9.1045	0.88812540624133\\
9.2091	0.891488282850996\\
9.3136	0.894846063975276\\
9.4182	0.898194184781119\\
9.5227	0.901494604995558\\
9.6273	0.904707031029915\\
9.7318	0.907868494956731\\
9.752	0.908474135391212\\
9.755	0.90856393368696\\
9.76	0.908713557426875\\
9.765	0.908863139392592\\
9.77	0.909012679064479\\
9.79	0.909610418546407\\
9.85	0.911274107068111\\
9.9	0.91288600431212\\
9.92	0.913481669596017\\
9.937	0.913920615738572\\
9.945	0.914134143092975\\
9.95	0.9143682553399\\
9.955	0.914516253580029\\
9.965	0.914812126089211\\
9.99	0.915464440109554\\
10	0.915923605461455\\
10.025	0.916580517677273\\
10.03	0.916805782642913\\
10.0455	0.917237827091861\\
10.15	0.920381806841576\\
10.2545	0.923285199152007\\
10.3591	0.926229485990104\\
10.4636	0.929439319248056\\
10.5682	0.932428104651601\\
10.6727	0.935466135534588\\
10.7773	0.938264322753206\\
10.8818	0.941334084189751\\
10.9864	0.944182134553018\\
11.0909	0.947074243436847\\
11.1955	0.949977607925751\\
11.3	0.952920193967189\\
11.4	0.955595851887975\\
11.5	0.958333968485789\\
11.6	0.961093630809974\\
11.7	0.963866645340828\\
11.8	0.966383207559083\\
11.9	0.969199095059503\\
12	0.971783789674196\\
12.1	0.974367353119987\\
12.2	0.976981807526348\\
12.3	0.979619459193345\\
12.4	0.982280719041044\\
12.5	0.984939804175918\\
12.6	0.987356655435422\\
12.7	0.990058156876808\\
12.8	0.992516683259354\\
12.9	0.994992065632089\\
13	0.997492437594037\\
13.1053	1.00001599669174\\
13.2105	1.00281314419144\\
13.3158	1.00537496526126\\
13.4211	1.00789852600142\\
13.5263	1.01051230651864\\
13.6316	1.0130998965014\\
13.7368421052632	1.01559051211973\\
13.8421052631579	1.01811237323968\\
13.9473684210526	1.02061926919419\\
14.0526315789474	1.02311123804826\\
14.1578947368421	1.02558833473102\\
14.2631578947368	1.02805061410964\\
14.3684210526316	1.03049813101142\\
14.4736842105263	1.03293094025887\\
14.5789473684211	1.03534909670134\\
14.6842105263158	1.0377526552434\\
14.7894736842105	1.04014167087016\\
14.8947368421053	1.04251619866975\\
15	1.0448762938533\\
};

\end{axis}

\end{tikzpicture}%
\caption{Plot of $C(\sfA)$ vs. $\sfA$.}
\label{fig:capacity}
\end{subfigure}
\caption{Examples of the optimal input distributions and capacity  vs. $\sfA$. The  probabilities and the capacity for  $\sfA>\bar{\sfA}\approx 3.4$  were computed numerically, and the probabilities and the capacity for $\sfA \le \bar{\sfA}$ are given in \eqref{eq:OptimazingDistributionSmallAmplitude} and \eqref{eq:ShamaiCapacity}, respectively.  }
\label{fig:Plot_of_probabilitiues}
\end{figure}

\subsection{On the Order of the Bounds} 
 First, note that almost all of our bounds  depend on  the value of  $C(\sfA)$, which is currently unknown for $\sfA \ge \bar{\sfA}\approx 3.4$.  However, this not a limitation of our result,
 as we do have access to upper  and lower bounds on $C(\sfA)$   that are tight for large $\sfA$ such as those in \cite{lapidoth2009capacity}, which suggest that
 \begin{equation}
C(\sfA)= \frac{1}{2} \log(\sfA) -\frac{1}{2} \log \left(  \frac{\pi \rme}{2}\right)+ o_{\sfA}(1).  \label{eq:Moser_Lapidos_bound}
\end{equation} 
Moreover, some of the bounds in Theorem~\ref{thm:Main_Result}, such as those in \eqref{eq:lower_bound_on_P_X(A)} and \eqref{eq:Bounds_on_the_Number_Of_Poitns}, while are firm, are meant to be used for large values of $\sfA$. Therefore, combing the bounds in Theorem~\ref{thm:Main_Result} with the bound in \eqref{eq:Moser_Lapidos_bound}, we arrive at the following: 
\begin{align}
 \Omega \left(\frac{\rme^{2\sfA}}{ \sfA^{2\sfA \rme  \log(\sfA)+2}   }  \right)  \le P_{X^\star} (x) & \le O \left(\frac{1}{\sqrt{\sfA}} \right) , \qquad \,  x\in \supp(P_{X^\star}),\\
   \Omega( \sqrt{\sfA} )\le   | \supp(P_{X^\star}) | &\le  O \left(  \sfA  \log^2(\sfA) \right). 
\end{align} 
Thus, the order of the lower bound on the number of points  is $\sqrt{\sfA}$, and the order of the upper bound  on the number of points  is $\sfA  \log^2(\sfA) $.  It is interesting to speculate as to the reason why the bounds do not match and  have different orders.

  First, note that Theorem~\ref{thm:Main_Result} presents two upper bounds on the number of points. The first and implicit bound in \eqref{eq:First_implicit_bound_on_point} depends on the value of $P_{X^\star}(\sfA)$.  The second bound in \eqref{eq:Bounds_on_the_Number_Of_Poitns} is an explicit bound in terms of $\sfA$ and is derived by plugging in the lower bound on $P_{X^\star}(\sfA)$ in \eqref{eq:lower_bound_on_P_X(A)} into the first bound in  \eqref{eq:First_implicit_bound_on_point}.  We suspect that one of the reason why the  bounds do not match  is due to the lower bound on   $P_{X^\star}(\sfA)$ in \eqref{eq:lower_bound_on_P_X(A)}, which we think is not tight.  Hence, one interesting future direction is to improve the lower bound on  the value of $P_{X^\star}(\sfA)$ which, in view of \eqref{eq:First_implicit_bound_on_point}, would lead to a better upper bound on $ | \supp(P_{X^\star}) |$. However, to the best of our knowledge, there are no other methods for finding lower or upper bounds on the probabilities of the optimal input distribution. Indeed, one of the contributions of this work is the introduction of two such methods, one for finding an upper bound  on the values of probabilities and one for finding a lower bound. 
  
Finally, we would like to point out that numerical simulations are not useful for predicting the order of the number of points. For large values of $\sfA$, the simulations become numerically unstable, and it is difficult to calculate the optimal input distribution and predict the order of the number of points. 
 
 \subsection{On the Upper Bounds on the Probabilities in  \eqref{Eq:Bound_On_Probability_Values_poisson} and \eqref{Eq:Bound_On_Probability_Values_poisson_loc_dependent}}
It is also interesting to ask how tight the upper bounds on the probabilities  in  \eqref{Eq:Bound_On_Probability_Values_poisson} and \eqref{Eq:Bound_On_Probability_Values_poisson_loc_dependent} are.  

Note that the bound in \eqref{Eq:Bound_On_Probability_Values_poisson} is universal and does not depend on the  positions of the probability masses, while the bound in \eqref{Eq:Bound_On_Probability_Values_poisson_loc_dependent} depends on the position of the points. 
The advantage of the bound in  \eqref{Eq:Bound_On_Probability_Values_poisson_loc_dependent} is that it can be tighter than the universal bound in  \eqref{Eq:Bound_On_Probability_Values_poisson}.    For example, in the regime where $\sfA \le \bar{\sfA}$ where we only have two points in the support, the bound is achieved with equality.  The clear disadvantage of the bound in \eqref{Eq:Bound_On_Probability_Values_poisson_loc_dependent} is that we do not know the location of the points (except $0$ and $\sfA$). However,  such a bound might become useful once  better estimates for the locations of the mass  points are found.  Some preliminary estimates of the locations  are provided in \eqref{eq:Bound_on_other_points}.

Fig.~\ref{fig:Probabilites_and_boound} plots the upper bounds in \eqref{Eq:Bound_On_Probability_Values_poisson} and \eqref{Eq:Bound_On_Probability_Values_poisson_loc_dependent} and compares them to the values of $P_{X^\star}$.  To create the plot, in the regime  $\sfA \le \bar{\sfA}$, we have used the exact expressions for $P_{X^\star}(\sfA)$ and $P_{X^\star}(0)$ in \eqref{eq:OptimazingDistributionSmallAmplitude}.  To create the plot in the regime $\sfA >\bar{\sfA}$, first note that the upper bounds in \eqref{Eq:Bound_On_Probability_Values_poisson} can be loosened to 
\begin{align}
P_{X^\star}(x)& \le \rme^{- \frac{I(\widetilde{X};\widetilde{Y})}{ 1-\rme^{-\sfA}}  }, \quad x\in \supp(P_{X^\star}),\\
P_{X^\star}(x) &\le \rme^{-I(\widetilde{X};\widetilde{Y})-x \frac{\rme^{-x}}{1-\rme^{-x}}}, \quad x\in \supp(P_{X^\star}) \setminus \{0\}, \label{eq:Second_boud_points_temp}
\end{align} 
where we have used the fact that $C(\sfA) \ge I(\widetilde{X};\widetilde{Y})$ for any random variable $\widetilde{X} \in [0,\sfA]$ and where $\widetilde{Y}$ is induced by $\widetilde{X}$.  Therefore, since we can choose any $\widetilde{X} \in [0,\sfA]$, we selected it to be the one that is the output of the numerical simulation.  The bound in \eqref{eq:Second_boud_points_temp} is only computed for $x=\sfA$ as we do not know the locations of other points and only have estimates for these. 
%Furthermore, to compute  the term $x \frac{\rme^{-x}}{1-\rme^{-x}}$  in  \eqref{eq:Second_boud_points_temp}, we also use the value of $x$ computed by the simulations.  We, however, warn that while using the distribution found numerically to substitute $C(\sfA)$ by  $I(\widetilde{X};Y)$ results in a valid upper bound, substituting $x\in \supp(P_{X^\star}) $ by a numerically computed point  does not result in a valid bound. Therefore, the latter plot should only be taught of as  approximation of the real bounds. 
From the simulations in Fig.~\ref{fig:Probabilites_and_boound}, the bounds in \eqref{Eq:Bound_On_Probability_Values_poisson} and \eqref{Eq:Bound_On_Probability_Values_poisson_loc_dependent} appear to be relatively tight. 

The bound in \eqref{Eq:Bound_On_Probability_Values_poisson} relies on the strong data-processing inequality.   Specifically, the factor $\frac{1}{1-\rme^{-\sfA}}$ in the exponent comes from using  the strong data-processing inequality.  The dotted black curve in Fig.~\ref{fig:Probabilites_and_boound} plots the loosened version of the bound in \eqref{Eq:Bound_On_Probability_Values_poisson} that ignores the contribution of the strong data-processing inequality, that is we plot
\begin{equation}\label{eq:nostrongdataprocessingineq}
P_{X^\star}(x) \le \rme^{- I(\widetilde{X};\widetilde{Y})},  \quad x\in \supp(P_{X^\star}).
\end{equation} 
From the comparison Fig.~\ref{fig:Probabilites_and_boound}, we see that the contribution of the strong data-processing inequality is nontrivial, especially for small and medium values of $\sfA$.

\subsection{On the Bound in \eqref{eq:Bound_on_other_points} } 
In addition to finding bounds on  the number of points and the values of the probabilities,   we have also provided additional information about the location of the points. Specifically,  \eqref{eq:Bound_on_other_points}  provides  information about the location of support points other than $0$ and $\sfA$.   

From the bound in \eqref{eq:Bound_on_other_points}, we see that the second-largest point can  never be too close to $\sfA$. Specifically,  according to the bound in \eqref{eq:Bound_on_other_points},  the gap between $\sfA$  (i.e., the largest point) and the second-largest point is at least one.  In fact, the numerical simulations shown in Fig.~\ref{fig:Locations} suggest that this gap is  much larger.  In particular, the simulations suggest that the gap is not constant but is an increasing function of $\sfA$.  Therefore, one interesting future direction would be to verify this behavior and produce a better bound in~\eqref{eq:Bound_on_other_points} than $\sfA-1$.

Similarly, from the lower bound in \eqref{eq:Bound_on_other_points},  we see that the second smallest point cannot be too close to the  zero point. However,  as $\sfA$ increases, the distance is allowed to get smaller.  Note that  our limited simulation results suggest a better lower bound, namely $x^\star\ge1$.  Therefore, one interesting future direction would be to either demonstrate the existence of a mass point in the range $(0,1)$ or show that there is no such mass point. Note that the work of \cite{mceliece1979practical} already showed that there is at most one point in the range $(0,1)$. 

Beyond theoretical interest, the existence of the estimates for the mass points' location might also be of interest from the practical point of view. 
 As the existence of such estimates can also impact the design of practical constellations for the Poisson noise channel.

 \subsection{On the Equivocation, Symbol  Error Probability, and Entropy} 
 
 It is well-known that the capacity-achieving distribution should be `difficult' to detect or estimate on the per-symbol basis.  To  make this statement explicit, we consider the equivocation   $H(X^\star|Y^\star)$  and the probability of error under the maximum a  posteriori (MAP) rule (i.e., $P_e=\mathbb{P}[X^\star \neq \hat{X}(Y^\star)]$ where  $\hat{X}(Y^\star)$ is the MAP decoder). The plots in Fig.~\ref{fig:H(X|Y)} and Fig.~\ref{fig:P_e_Plot} show that the equivocation and $P_e$ for the capacity-achieving input have relatively high values.  
With Theorem~\ref{thm:Main_Result} at our disposal, we can now show the following result regarding the asymptotic behavior of  the error probability. 
\begin{prop} Let $P_e=\mathbb{P}[X^\star \neq \hat{X}(Y^\star)]$ where  $\hat{X}(y) = \max_{x \in \supp(P_{X^\star}) } P_{X^\star|Y^\star}(x|y) $. Then, 
\begin{align}
\liminf_{\sfA \to \infty} P_e & \ge 1-  \sqrt{\frac{2 }{ \pi}}, \label{eq:LowerBound_P-e}\\
\liminf_{\sfA \to \infty}  H(X^\star|Y^\star)  &\ge 2 \left(1-  \sqrt{\frac{2 }{ \pi}}\right).  \label{eq:LowerBound_HXY}
\end{align}
\end{prop}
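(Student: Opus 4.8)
The plan is to bound $P_e$ from below by the error probability of the \emph{simpler} detector that only distinguishes the output symbol $Y^\star=0$ from $Y^\star\ge 1$, and to exploit the structure established in Theorem~\ref{thm:Main_Result}. First I would recall that by the new capacity expression \eqref{eq:New_Cap_Expression}, $P_{Y^\star}(0)=\rme^{-C(\sfA)}$, and by the Moser--Lapidoth asymptotics \eqref{eq:Moser_Lapidos_bound}, $\rme^{-C(\sfA)} = \sqrt{2/(\pi\rme\,\sfA)}\,(1+o_\sfA(1))\to 0$. The key conditioning identity is: given $Y^\star=0$, the posterior $P_{X^\star|Y^\star}(x|0)\propto P_{X^\star}(x)\rme^{-x}$, so the dominant atom under this posterior is $x=0$ (since $\rme^{-x}$ is decreasing and $P_{X^\star}(0)$ is, by Fig.~\ref{fig:Probabilites_and_boound} and the bounds, the largest mass). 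Thus the MAP decoder declares $\hat X(0)=0$. Then
\begin{align}
P_e
&\ge \mathbb{P}[Y^\star=0]\cdot \mathbb{P}[X^\star\ne 0 \mid Y^\star=0] \nonumber\\
&= \mathbb{P}[Y^\star=0]\left(1 - P_{X^\star|Y^\star}(0|0)\right) \nonumber\\
&= P_{Y^\star}(0)\left(1 - \frac{P_{X^\star}(0)}{P_{Y^\star}(0)}\right) = P_{Y^\star}(0) - P_{X^\star}(0).
\end{align}
Hence $\liminf_\sfA P_e \ge \liminf_\sfA\bigl(P_{Y^\star}(0)-P_{X^\star}(0)\bigr)$, and since $P_{Y^\star}(0)=\rme^{-C(\sfA)}\to 0$ this particular route gives only $P_e\ge -o(1)$, which is useless. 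So the crude split is \emph{not} the right decomposition; instead I would decompose over \emph{all} output symbols.

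The correct approach is: $1-P_e = \sum_{y\in\mathbb{N}_0} P_{Y^\star}(y)\,\max_{x} P_{X^\star|Y^\star}(x|y) = \sum_y \max_x P_{X^\star}(x)P_{Y|X}(y|x)$. To get a useful upper bound on $1-P_e$, I would bound $\max_x P_{X^\star}(x)P_{Y|X}(y|x) \le \sum_x P_{X^\star}(x)^2\cdot(\text{something})$ — no; rather, use the universal probability bound \eqref{Eq:Bound_On_Probability_Values_poisson}: every atom satisfies $P_{X^\star}(x)\le \rme^{-C(\sfA)/(1-\rme^{-\sfA})}$. Combined with \eqref{eq:LowerBound_on_Number_of_Points}, $|\supp(P_{X^\star})|\ge \rme^{C(\sfA)/(1-\rme^{-\sfA})}$. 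The cleanest path: bound $1-P_e = \sum_y \max_x P_{X^\star}(x)P_{Y|X}(y|x) \le \max_x P_{X^\star}(x)\cdot \sum_y \max_{x'}P_{Y|X}(y|x')$. Now $\sum_{y}\max_{x'\in[0,\sfA]}P_{Y|X}(y|x')$ is a deterministic quantity: for each $y$, $\max_{x\ge 0}\frac{x^y \rme^{-x}}{y!}$ is attained at $x=\min(y,\sfA)$; for $y\le \sfA$ this equals $\frac{y^y\rme^{-y}}{y!}\sim \frac{1}{\sqrt{2\pi y}}$ by Stirling, and for $y>\sfA$ it equals $\frac{\sfA^y\rme^{-\sfA}}{y!}$ whose tail sum is $O(1)$. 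So $\sum_y \max_{x'}P_{Y|X}(y|x') = \sum_{y=0}^{\lfloor\sfA\rfloor}\frac{y^y\rme^{-y}}{y!} + O(1) = \sqrt{\frac{2\sfA}{\pi}}(1+o_\sfA(1))$ by $\sum_{y\le\sfA}(2\pi y)^{-1/2}\approx \int_0^\sfA (2\pi t)^{-1/2}dt = \sqrt{2\sfA/\pi}$. Meanwhile $\max_x P_{X^\star}(x)\le \rme^{-C(\sfA)}$ is too weak by a $\sqrt{1-\rme^{-\sfA}}$ factor — I would instead keep $\max_x P_{X^\star}(x) \le \rme^{-C(\sfA)/(1-\rme^{-\sfA})}$ and note $C(\sfA)/(1-\rme^{-\sfA}) = C(\sfA)(1+o(1)) = \tfrac12\log\sfA - \tfrac12\log\tfrac{\pi\rme}{2} + o(1)$, so $\rme^{-C(\sfA)/(1-\rme^{-\sfA})} = \sqrt{\frac{\pi\rme}{2\sfA}}(1+o(1))$. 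Therefore $1-P_e \le \sqrt{\frac{\pi\rme}{2\sfA}}\cdot\sqrt{\frac{2\sfA}{\pi}}(1+o(1)) = \sqrt{\rme}(1+o(1))$, giving $P_e\ge 1-\sqrt{\rme}<0$ — still vacuous. This tells me the $\max_x P_{X^\star}(x)\cdot\sum_y\max P_{Y|X}$ factorization is too lossy.

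The genuinely viable route, and the one I would pursue in detail, is \textbf{Fano-type plus an explicit pairing argument}: write $1-P_e = \sum_y P_{Y^\star}(y)\max_x P_{X^\star|Y^\star}(x|y)$ and use Cauchy--Schwarz within each $y$ against the \emph{collision probability}, namely $\max_x P_{X^\star|Y^\star}(x|y) \le \sqrt{\sum_x P_{X^\star|Y^\star}(x|y)^2}$, so $1-P_e \le \sum_y P_{Y^\star}(y)\sqrt{\sum_x P_{X^\star|Y^\star}(x|y)^2} \le \sqrt{\sum_y P_{Y^\star}(y)\sum_x P_{X^\star|Y^\star}(x|y)^2}$ by Jensen. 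The inner double sum is $\sum_{x,y}\frac{P_{X^\star}(x)^2 P_{Y|X}(y|x)^2}{P_{Y^\star}(y)}$. Bounding $P_{X^\star}(x)\le \rme^{-C(\sfA)/(1-\rme^{-\sfA})}=:p_{\max}$ gives $\le p_{\max}\sum_{x,y}\frac{P_{X^\star}(x)P_{Y|X}(y|x)^2}{P_{Y^\star}(y)} = p_{\max}\sum_y\frac{1}{P_{Y^\star}(y)}\sum_x P_{X^\star}(x)P_{Y|X}(y|x)^2 = p_{\max}\sum_y \frac{\mathbb{E}[P_{Y|X}(y|X^\star)^2]}{P_{Y^\star}(y)} = p_{\max}\,\bigl(1+\chi^2\text{-type term}\bigr)$. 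The remaining work — and this is where I expect the main obstacle — is to show $\sum_y \mathbb{E}[P_{Y|X}(y|X^\star)^2]/P_{Y^\star}(y) = (1+o(1))\cdot 2\sqrt{\sfA/(\pi\rme)}\cdot\sqrt{2\pi e\sfA}$-scale quantities conspire so that $p_{\max}\cdot(\text{this sum}) \to 2/\pi$. Concretely I would need the sharp estimate that $\sum_y \mathbb{E}[P_{Y|X}(y|X^\star)^2]/P_{Y^\star}(y) \le \frac{2}{\pi p_{\max}}(1-\sqrt{2/\pi})^{-2}(1+o(1))$, which I would establish by splitting the $y$-sum at $y=\sfA$, using the Moser--Lapidoth-type Gaussian approximation to the capacity-achieving output $P_{Y^\star}$ (so that $P_{Y^\star}(y)\approx$ a half-Gaussian-ish profile on $[0,\sfA]$ with the known normalization $P_{Y^\star}(0)=\rme^{-C(\sfA)}$), and bounding $\mathbb{E}[P_{Y|X}(y|X^\star)^2]$ via $\max_x P_{Y|X}(y|x)\cdot P_{Y^\star}(y)$, which collapses the ratio and returns the $\sum_y \max_x P_{Y|X}(y|x) \approx \sqrt{2\sfA/\pi}$ estimate from before, now \emph{without} the extra Jensen loss. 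The hard part is controlling the constant precisely enough that the factor $\sqrt{2/\pi}$ (not $\sqrt{\rme/\pi}$ or similar) emerges; this requires the exact asymptotic normalization of $P_{Y^\star}$ near $0$ from \eqref{eq:New_Cap_Expression}--\eqref{eq:Moser_Lapidos_bound} and a careful Abel/Euler--Maclaurin treatment of $\sum_{y\le\sfA}(2\pi y)^{-1/2}$. Finally, \eqref{eq:LowerBound_HXY} follows from \eqref{eq:LowerBound_P-e} by the reverse Fano inequality $H(X^\star|Y^\star)\ge$ (a function of $P_e$); but to get the clean constant $2(1-\sqrt{2/\pi})$ I would instead use $H(X^\star|Y^\star)\ge P_e\cdot$(something)$\ge 2P_e$ via the bound $H(X^\star|Y^\star)\ge 2\mathbb{P}[X^\star\ne\hat X(Y^\star)]$ valid when the conditional distribution is ``spread'' — more carefully, I would invoke $H(X^\star|Y^\star)\ge \log\frac{1}{1-P_e}\cdot$ combined with convexity, or directly $H(X^\star|Y^\star) \ge -\,(1-P_e)\log(1-P_e) - P_e\log P_e + P_e\log(M-1)$ and use $M=|\supp(P_{X^\star})|\to\infty$ from \eqref{eq:LowerBound_on_Number_of_Points}, so that the $P_e\log(M-1)$ term forces $H(X^\star|Y^\star)\ge 2P_e(1+o(1))$ once one checks $\log(M-1)\ge 2(1+o(1))$, which holds since $M\ge\rme^{C(\sfA)/(1-\rme^{-\sfA})}\to\infty$; taking $\liminf$ and inserting \eqref{eq:LowerBound_P-e} yields \eqref{eq:LowerBound_HXY}.
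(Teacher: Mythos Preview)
Your second attempt---pulling out $\max_x P_{X^\star}(x)$ via the bound \eqref{Eq:Bound_On_Probability_Values_poisson} and then estimating $\sum_{y\ge 0}\max_{x'\in[0,\sfA]}P_{Y|X}(y|x')$ by splitting at $y=\sfA$ and applying Stirling to get $\sqrt{2\sfA/\pi}$---is \emph{exactly} the paper's proof. The paper writes
\[
1-P_e \;\le\; \rme^{-C(\sfA)/(1-\rme^{-\sfA})}\sum_{y}\max_{x}P_{Y|X}(y|x)
\;\le\; \rme^{-C(\sfA)/(1-\rme^{-\sfA})}\Bigl(1+\sqrt{\tfrac{2\sfA}{\pi}}+O(1)\Bigr),
\]
and then plugs in \eqref{eq:Moser_Lapidos_bound}. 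You carried out the identical computation and obtained $1-P_e\lesssim\sqrt{\rme}$, which is vacuous; this is a correct reading of what the argument gives under the asymptotic \eqref{eq:Moser_Lapidos_bound} as stated. The constant $\sqrt{2/\pi}$ announced in the proposition emerges from this chain only if one uses $C(\sfA)=\tfrac12\log\sfA+o(1)$ \emph{without} the $-\tfrac12\log(\pi\rme/2)$ offset. In other words, you found the right method and abandoned it because of a genuine numerical inconsistency between \eqref{eq:Moser_Lapidos_bound} and the claimed constant in the proposition---not because the factorization is ``too lossy.''

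Your subsequent Cauchy--Schwarz/collision-probability route is unnecessary and does not close the gap: after Jensen you are left needing a sharp estimate on $\sum_y \bbE[P_{Y|X}(y|X^\star)^2]/P_{Y^\star}(y)$, which you never actually control, and the sketch of how $\sqrt{2/\pi}$ should appear is hand-waving. For \eqref{eq:LowerBound_HXY} the paper simply invokes Baladov\'a's inequality $H(X|Y)\ge 2P_e$, which you mention as a possibility but then second-guess in favor of Fano-type bounds; the Fano detour with $\log(M-1)$ is not how the factor $2$ arises and is not needed.
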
 
\begin{proof}
The probability of error for the MAP  rule can be written as
\begin{align}
P_e&=1- \bbE \left[ \max_{x \in \supp(P_{X^\star}) } P_{X^\star|Y^\star}(x|Y^\star) \right]  \label{eq:Pe_expression_Sason_Verdu}\\
&=1- \bbE \left[ \max_{x \in \supp(P_{X^\star}) } \frac{P_{Y|X}(Y^\star|x) P_{X^\star}(x)}{P_{Y^\star}(Y^\star)} \right ]\\
& \ge 1-  \rme^{- \frac{C(\sfA)}{ 1-\rme^{-\sfA}}  }\: \bbE \left[ \max_{x \in \supp(P_{X^\star}) } \frac{P_{Y|X}(Y^\star|x) }{P_{Y^\star}(Y^\star)} \right] \label{eq:P_e_bound_Using_Bound_on_Probabilities}\\
& = 1-  \rme^{- \frac{C(\sfA)}{ 1-\rme^{-\sfA}}  }  \sum_{y=0}^\infty \max_{x \in \supp(P_{X^\star}) } P_{Y|X}(y|x) \\
& = 1-    \rme^{- \frac{C(\sfA)}{ 1-\rme^{-\sfA}}  }  \left( \sum_{y<\sfA} \max_{x \in \supp(P_{X^\star}) } P_{Y|X}(y|x)   +\sum_{y \ge \sfA} \max_{x \in \supp(P_{X^\star}) } P_{Y|X}(y|x)  \right)\\
& \ge 1-  \rme^{- \frac{C(\sfA)}{ 1-\rme^{-\sfA}}  }   \left( \sum_{y<\sfA}  P_{Y|X}(y|y)+\sum_{y \ge \sfA}  P_{Y|X}(y|\sfA) \right) \label{eq:P_e_bound_max_of_P_{Y|X}}\\
& = 1-  \rme^{- \frac{C(\sfA)}{ 1-\rme^{-\sfA}}  }   \left( 1+ \sum_{1 \le y<\sfA}  \frac{y^y e^{-y}}{y!} + \mathbb{P}[Y \ge \sfA|X=\sfA] \right)\\
& \ge  1-  \rme^{- \frac{C(\sfA)}{ 1-\rme^{-\sfA}}  }   \left( 1+ \sum_{1 \le y<\sfA}  \frac{1}{\sqrt{2 \pi y}} + \mathbb{P}[Y \ge \sfA|X=\sfA] \right) \label{eq:P_e_bound_Using_Stirlings_bound}\\
& \ge 1- \rme^{- \frac{C(\sfA)}{ 1-\rme^{-\sfA}}  }    \left( 1+ \sqrt{\frac{2 \sfA}{ \pi}}+\frac{1}{ \sqrt{2\pi \sfA}} +c + \mathbb{P}\left[Y \ge \sfA \middle| X=\sfA \right] \right), \label{eq:Harmonic_Numer_bounds}
\end{align}
where the expression for the probability of error in \eqref{eq:Pe_expression_Sason_Verdu} comes from \cite{sason2017arimoto}; \eqref{eq:P_e_bound_Using_Bound_on_Probabilities} follows by using the bound in \eqref{Eq:Bound_On_Probability_Values_poisson}; \eqref{eq:P_e_bound_max_of_P_{Y|X}} follows by using that $\max_{x \in [0,\sfA]}   P_{Y|X}(y|x) \le  P_{Y|X}(y|y)$ for $y < \sfA$ and  $\max_{x \in [0,\sfA]}   P_{Y|X}(y|x) \le  P_{Y|X}(y|\sfA)$ for $y  \ge  \sfA$;  \eqref{eq:P_e_bound_Using_Stirlings_bound} follows by using Stirling's bound  $\frac{y^y e^{-y}}{y!} \le \frac{1}{ \sqrt{2 \pi y}}, \, y \ge 1$; and~\eqref{eq:Harmonic_Numer_bounds} follows by using the bound $\sum_{1 \le y<\sfA}  \frac{1}{\sqrt{y}} \le 2 \sqrt{y}+y^{-\frac{1}{2}}+c$ where $c>0$ is some universal constant \cite{apostol1998introduction}. 

To conclude the proof of \eqref{eq:LowerBound_P-e},  we use the bound in \eqref{eq:Harmonic_Numer_bounds}  together with the asymptotic expression for $C(\sfA)$ in \eqref{eq:Moser_Lapidos_bound}. 
The proof of \eqref{eq:LowerBound_HXY} follows by using the following bound  \cite{baladova1966minimum}: 
\begin{equation}
H(X|Y) \ge 2 P_e. 
\end{equation} 
\end{proof}

\begin{rem}

The bound in  \eqref{eq:LowerBound_HXY} can be improved, albeit at the expense of a more complicated expression. In particular, the bound relies on the inequality $H(X|Y) \ge 2P_e$. There are several stronger versions of this inequality; the interested reader is referred to \cite{sason2017arimoto} for a summary of these inequalities. 
\end{rem}

The entropy of the optimal input distribution  vs.~$\sfA$ is plotted in Fig.~\ref{fig:Entropy}.   We observe a particular behavior of the entropy in the simulated range of $\sfA$: the rate of increase has finite jumps approximately at the levels $\log(k)$ of entropy, where the cardinality of the optimal input distribution increases from $k$ to $k+1$ points. These levels correspond to an approximate uniform input distribution on the $k$ amplitude levels: this behavior is also confirmed by the mass probabilities plotted in Fig.~\ref{fig:Probabilites_and_boound}. When the rate of increase of the entropy is not compensated by a sufficiently large rate of decrease of the equivocation (see Fig.~\ref{fig:H(X|Y)}), the rate of increase of capacity must be sustained by boosting the input entropy: This is done by increasing the cardinality of the input distribution. It is interesting to understand how the rate of increase of capacity is split between entropy and equivocation: If one could prove that the equivocation is upper-bounded by a constant, then this would show that the equivocation does not provide degrees of freedom to channel capacity for large $\sfA$, and thus the whole rate should be sustained by the input entropy by increasing the cardinality of $\supp(P_{X^\star})$. This hypothesis would imply $|\supp(P_{X^\star})| \approx \sqrt{\sfA}$ for large $\sfA$.

\begin{figure}
\center

%\begin{subfigure}[t]{.4\linewidth}
%\input{FIG1/Positions_log.tex}
%\caption{Plot of the locations of the support points of $P_{X^\star}$ vs. $\sfA$ where the log scale is used for the vertical axis.}
%\label{fig:Locations_Log_scale}
%\end{subfigure}

\begin{subfigure}[t]{.4\linewidth}
% This file was created by matlab2tikz.
%
%The latest updates can be retrieved from
%  http://www.mathworks.com/matlabcentral/fileexchange/22022-matlab2tikz-matlab2tikz
%where you can also make suggestions and rate matlab2tikz.
%
\definecolor{mycolor1}{rgb}{0.00000,0.44700,0.74100}%
\begin{tikzpicture}

\begin{axis}[%
width=7cm,
height=6cm,
at={(1.011in,0.642in)},
scale only axis,
xmin=0,
xmax=18.1,
xlabel style={font=\color{white!15!black}},
xlabel={$\sfA$},
ymin=0.05,
ymax=0.55,
ylabel style={font=\color{white!15!black}},
ylabel={$H(X^*|Y^*)$},
axis background/.style={fill=white},
xmajorgrids,
ymajorgrids,
legend style={legend cell align=left, align=left, draw=white!15!black}
]
\addplot [color=blue,thick]
  table[row sep=crcr]{%
0.377777777777778	0.537432353443274\\
0.755555555555556	0.433887071830777\\
1.13333333333333	0.346212245400557\\
1.51111111111111	0.273130226595039\\
1.88888888888889	0.213143119650757\\
2.26666666666667	0.164631509706577\\
2.64444444444444	0.125950719202421\\
3	0.0971029249094638\\
3.05263157894737	0.0933769446658873\\
3.10526315789474	0.0897800833833398\\
3.15789473684211	0.0863086406966694\\
3.21052631578947	0.082958980286725\\
3.26315789473684	0.0797275310620276\\
3.31578947368421	0.0766107881909628\\
3.36842105263158	0.0736037904064378\\
3.42105263157895	0.111498675650296\\
3.47368421052632	0.13680754600138\\
3.52631578947368	0.157232163262063\\
3.57894736842105	0.174425530247318\\
3.63157894736842	0.189211963979847\\
3.68421052631579	0.202096135758312\\
3.73684210526316	0.213422024113454\\
3.78947368421053	0.22344018583169\\
3.84210526315789	0.232341551348572\\
3.89473684210526	0.240276432138401\\
3.94736842105263	0.247366108636111\\
4	0.253710336256024\\
4.1	0.263998707931128\\
4.14871794871795	0.268282994645116\\
4.1974358974359	0.272149321451534\\
4.24615384615385	0.275634590250871\\
4.29487179487179	0.278771383405236\\
4.34358974358974	0.281588623461423\\
4.39230769230769	0.28411210752734\\
4.44102564102564	0.286364944776504\\
4.48974358974359	0.288367918063434\\
4.53846153846154	0.290139785363962\\
4.58717948717949	0.291697532980831\\
4.63589743589744	0.293056589707024\\
4.68461538461538	0.294231009108039\\
4.73333333333333	0.295233625562257\\
4.78205128205128	0.29607618854375\\
4.83076923076923	0.296769478745462\\
4.87948717948718	0.297323408953285\\
4.92820512820513	0.297747112043203\\
4.97692307692308	0.29804901804816\\
5.02564102564103	0.298236921902384\\
5.07435897435897	0.298318043198701\\
5.12307692307692	0.298299079074282\\
5.17179487179487	0.298186251161112\\
5.22051282051282	0.297985347390747\\
5.26923076923077	0.297701759322133\\
5.31794871794872	0.297340515561135\\
5.36666666666667	0.296906311757287\\
5.41538461538462	0.296403537593681\\
5.46410256410256	0.295836301127491\\
5.51282051282051	0.29520845078958\\
5.56153846153846	0.294523595309917\\
5.61025641025641	0.293785121800365\\
5.65897435897436	0.292996212196327\\
5.70769230769231	0.292159858233132\\
5.75641025641026	0.291278875111074\\
5.8051282051282	0.290355913984266\\
5.85384615384615	0.289393473392117\\
5.9025641025641	0.288393909738459\\
5.95128205128205	0.287359446911022\\
6	0.286292185123673\\
6.11557788944724	0.283627267710297\\
6.2713567839196	0.27982244741629\\
6.42713567839196	0.275796269574819\\
6.58291457286432	0.271591798621647\\
6.73869346733668	0.267245531043759\\
6.89447236180905	0.262788517771597\\
7.05025125628141	0.258247264431799\\
7.20603015075377	0.253644459998786\\
7.36180904522613	0.248999571532421\\
7.51758793969849	0.244329333212627\\
7.67336683417085	0.239648151011236\\
7.82914572864322	0.234968439338767\\
7.98492462311558	0.230300902345517\\
8.14070351758794	0.225654769953358\\
8.2964824120603	0.221037996727951\\
8.45226130653266	0.216457430203114\\
8.60804020100502	0.211918954086837\\
8.76381909547739	0.207427610756166\\
8.91959798994975	0.202987706600543\\
9	0.200718167500728\\
9.1045	0.197793449147779\\
9.2091	0.194878058576402\\
9.3136	0.192005544098693\\
9.4182	0.189168985944413\\
9.5227	0.186345885104605\\
9.6273	0.183554376707498\\
9.7318	0.180795057600753\\
9.752	0.180265114800374\\
9.755	0.186376432083928\\
9.76	0.201462521142822\\
9.765	0.213410572590054\\
9.77	0.223679791783632\\
9.79	0.251352702554589\\
9.85	0.294789648268221\\
9.9	0.312648111136672\\
9.92	0.316900345712881\\
9.937	0.32165614351444\\
9.945	0.322538558881021\\
9.95	0.323669888675554\\
9.955	0.324526994997502\\
9.965	0.326157853527545\\
9.99	0.330024888680164\\
10	0.33113145863495\\
10.025	0.332849277477727\\
10.03	0.333775715171045\\
10.0455	0.336044549050555\\
10.15	0.344322060537033\\
10.2545	0.349092282237055\\
10.3591	0.352182254094863\\
10.4636	0.354279510419702\\
10.5682	0.355662159913069\\
10.6727	0.356498465900498\\
10.7773	0.356907536284063\\
10.8818	0.35722678230814\\
10.9864	0.357158500690765\\
11.0909	0.356889222524595\\
11.1955	0.356580524702522\\
11.3	0.356042557245211\\
11.4	0.355502070232395\\
11.5	0.354853345422582\\
11.6	0.354060015819026\\
11.7	0.353248266510368\\
11.8	0.352396166368191\\
11.9	0.351488691686627\\
12	0.350473995362393\\
12.1	0.349424020765704\\
12.2	0.348333682527325\\
12.3	0.347216524933202\\
12.4	0.346018487061506\\
12.5	0.344881378169409\\
12.6	0.343611983919757\\
12.7	0.342419065345844\\
12.8	0.341073506153503\\
12.9	0.339791414187781\\
13	0.338440940655898\\
13.1053	0.336956635118002\\
13.2105	0.335560773870686\\
13.3158	0.334074912279833\\
13.4211	0.332612830523618\\
13.5263	0.331069692139602\\
13.6316	0.329572574518937\\
13.7368421052632	0.328019813491933\\
13.8421052631579	0.32646637266027\\
13.9473684210526	0.324898620712211\\
14.0526315789474	0.32331880147547\\
14.1578947368421	0.321727894971027\\
14.2631578947368	0.320126825471994\\
14.3684210526316	0.318516466580601\\
14.4736842105263	0.316897644918266\\
14.5789473684211	0.315271143437563\\
14.6842105263158	0.313637704406657\\
14.7894736842105	0.311998032108491\\
14.8947368421053	0.310352795289925\\
15	0.308702629391614\\
15.5	0.300814786341096\\
16	0.292884750860906\\
16.5	0.284957586571362\\
17	0.277069356301693\\
17.5	0.269249010821099\\
18	0.261519790270362\\
18.1	0.259986590913357\\
18.12	0.259680490552468\\
18.13	0.259527508653385\\
18.135	0.259451034828787\\
18.136	0.259435741432521\\
};
%\addlegendentry{data1}

\end{axis}
\end{tikzpicture}%
\caption{Equivocation $H(X^\star|Y^\star)$ vs. $\sfA$.}
\label{fig:H(X|Y)}
\end{subfigure}
~
\hspace{1.4cm}
\begin{subfigure}[t]{.4\linewidth}
% This file was created by matlab2tikz.
%
%The latest updates can be retrieved from
%  http://www.mathworks.com/matlabcentral/fileexchange/22022-matlab2tikz-matlab2tikz
%where you can also make suggestions and rate matlab2tikz.
%
\definecolor{mycolor1}{rgb}{0.00000,0.44700,0.74100}%
\begin{tikzpicture}

\begin{axis}[%
width=7cm,
height=6cm,
at={(1.011in,0.642in)},
scale only axis,
xmin=0,
xmax=18.1,
xlabel style={font=\color{white!15!black}},
xlabel={$\sfA$},
ymin=0,
ymax=0.3,
ylabel style={font=\color{white!15!black}},
ylabel={$P_e$},
axis background/.style={fill=white},
xmajorgrids,
ymajorgrids,
legend style={legend cell align=left, align=left, draw=white!15!black}
]
\addplot [color=blue, thick]
  table[row sep=crcr]{%
0.377777777777778	0.264434213299954\\
0.755555555555556	0.18917003775631\\
1.13333333333333	0.134661466399557\\
1.51111111111111	0.0953890865646213\\
1.88888888888889	0.0672502471631763\\
2.26666666666667	0.0472019436592879\\
2.64444444444444	0.0329965252956417\\
3	0.0234797319455637\\
3.05263157894737	0.0223208875198573\\
3.10526315789474	0.0212179690705541\\
3.15789473684211	0.020168367865371\\
3.21052631578947	0.0191695908670964\\
3.26315789473684	0.0182192559616381\\
3.31578947368421	0.0173150873650068\\
3.36842105263158	0.0164543225022948\\
3.42105263157895	0.0234383622260754\\
3.47368421052632	0.0300759011076643\\
3.52631578947368	0.036457466950774\\
3.57894736842105	0.0425992762083183\\
3.63157894736842	0.0485161264684912\\
3.68421052631579	0.0542215471623891\\
3.73684210526316	0.059727931167635\\
3.78947368421053	0.0650466505502805\\
3.84210526315789	0.0701881587049273\\
3.89473684210526	0.075162080778499\\
3.94736842105263	0.0799772939560945\\
4	0.0846419989351371\\
4.1	0.0924909974411163\\
4.14871794871795	0.0938453791582812\\
4.1974358974359	0.0951989585104102\\
4.24615384615385	0.0965517566588222\\
4.29487179487179	0.0979037676753671\\
4.34358974358974	0.0992549614757611\\
4.39230769230769	0.10060528661478\\
4.44102564102564	0.101954672929588\\
4.48974358974359	0.103303034020764\\
4.53846153846154	0.104650269563743\\
4.58717948717949	0.10599626744654\\
4.63589743589744	0.107340905732459\\
4.68461538461538	0.108684054449174\\
4.73333333333333	0.110025577207864\\
4.78205128205128	0.111365332658106\\
4.83076923076923	0.112703175785895\\
4.87948717948718	0.113022367947153\\
4.92820512820513	0.112385669824533\\
4.97692307692308	0.111795561194915\\
5.02564102564103	0.11125079077286\\
5.07435897435897	0.110750138683141\\
5.12307692307692	0.110292415241836\\
5.17179487179487	0.109876459851469\\
5.22051282051282	0.109501139994507\\
5.26923076923077	0.109165350312055\\
5.31794871794872	0.108868011756691\\
5.36666666666667	0.108608070810169\\
5.41538461538462	0.108384498758152\\
5.46410256410256	0.10819629101531\\
5.51282051282051	0.108042466495102\\
5.56153846153846	0.107922067019289\\
5.61025641025641	0.107834156762896\\
5.65897435897436	0.107777821730777\\
5.70769230769231	0.107752169262371\\
5.75641025641026	0.107756327561562\\
5.8051282051282	0.107789445248802\\
5.85384615384615	0.107850690932902\\
5.9025641025641	0.107939252800119\\
5.95128205128205	0.10805433821832\\
6	0.108195173354227\\
6.11557788944724	0.106515397672548\\
6.2713567839196	0.103130159354142\\
6.42713567839196	0.100170596652132\\
6.58291457286432	0.0976107816215154\\
6.73869346733668	0.0954257690333188\\
6.89447236180905	0.0935916362885072\\
7.05025125628141	0.0920855113215663\\
7.20603015075377	0.0908855888390409\\
7.36180904522613	0.0899711361024462\\
7.51758793969849	0.0893224896175134\\
7.67336683417085	0.0862955078952846\\
7.82914572864322	0.0832303662094631\\
7.98492462311558	0.0804889241335851\\
8.14070351758794	0.0780542727848014\\
8.2964824120603	0.0759099463030531\\
8.45226130653266	0.0740399784332811\\
8.60804020100502	0.0724289467033775\\
8.76381909547739	0.0710620054130491\\
8.91959798994975	0.0699249087765061\\
9	0.0694388459765884\\
9.1045	0.0688724353320046\\
9.2091	0.0683898267912628\\
9.3136	0.0666385479753181\\
9.4182	0.0649220381016314\\
9.5227	0.0633152056682676\\
9.6273	0.0618134234544477\\
9.7318	0.0604165632658519\\
9.752	0.0601581822298974\\
9.755	0.0609265836233668\\
9.76	0.0636519579564573\\
9.765	0.0663104393244308\\
9.77	0.0689012119111124\\
9.79	0.077256735109694\\
9.85	0.0952094277677484\\
9.9	0.105057413715267\\
9.92	0.107718304586647\\
9.937	0.110879875110295\\
9.945	0.111499277126846\\
9.95	0.112286373602133\\
9.955	0.112897560637427\\
9.965	0.114083575885326\\
9.99	0.117033950638714\\
10	0.117910699281349\\
10.025	0.119341971521885\\
10.03	0.120093543690266\\
10.0455	0.122044957746936\\
10.15	0.130254650372299\\
10.2545	0.134399318987212\\
10.3591	0.137303482742734\\
10.4636	0.139703416216643\\
10.5682	0.141772459447027\\
10.6727	0.143570629387268\\
10.7773	0.144319152623951\\
10.8818	0.144783623795885\\
10.9864	0.14394923008241\\
11.0909	0.143212857989041\\
11.1955	0.142631573361211\\
11.3	0.14210687722437\\
11.4	0.141734004244341\\
11.5	0.141446155589162\\
11.6	0.141214388161159\\
11.7	0.141091881792966\\
11.8	0.141066544281402\\
11.9	0.14061645606149\\
12	0.139553137888542\\
12.1	0.138598852327453\\
12.2	0.137752905557644\\
12.3	0.137010385654885\\
12.4	0.136346243089098\\
12.5	0.135821654977712\\
12.6	0.135342374867745\\
12.7	0.135002903567827\\
12.8	0.134690880040991\\
12.9	0.134507379829865\\
13	0.134108400619736\\
13.1053	0.132985100326945\\
13.2105	0.132018828172555\\
13.3158	0.131124327786872\\
13.4211	0.13034379056009\\
13.5263	0.129649265157578\\
13.6316	0.129075414142791\\
13.7368421052632	0.128562402542185\\
13.8421052631579	0.128186327642971\\
13.9473684210526	0.127878096805564\\
14.0526315789474	0.127657147932328\\
14.1578947368421	0.127290034654932\\
14.2631578947368	0.125948907651761\\
14.3684210526316	0.124627405343773\\
14.4736842105263	0.123393716777005\\
14.5789473684211	0.122246323989933\\
14.6842105263158	0.121183678245351\\
14.7894736842105	0.12020420505264\\
14.8947368421053	0.11930630880346\\
15	0.118488377046045\\
15.5	0.114981182846302\\
16	0.110023920587585\\
16.5	0.106786049111517\\
17	0.103129051582506\\
17.5	0.0996707693924427\\
18	0.0976999257519676\\
18.1	0.0970576152625313\\
18.12	0.0968982960691417\\
18.13	0.0968195322645383\\
18.135	0.0967803738698635\\
18.136	0.0967725600477052\\
};

\end{axis}

\end{tikzpicture}%
\caption{Plot of $P_e$ vs. $\sfA$.}
\label{fig:P_e_Plot}
\end{subfigure}
~
\vspace{0.5cm}

\begin{subfigure}[t]{.4\linewidth}
\input{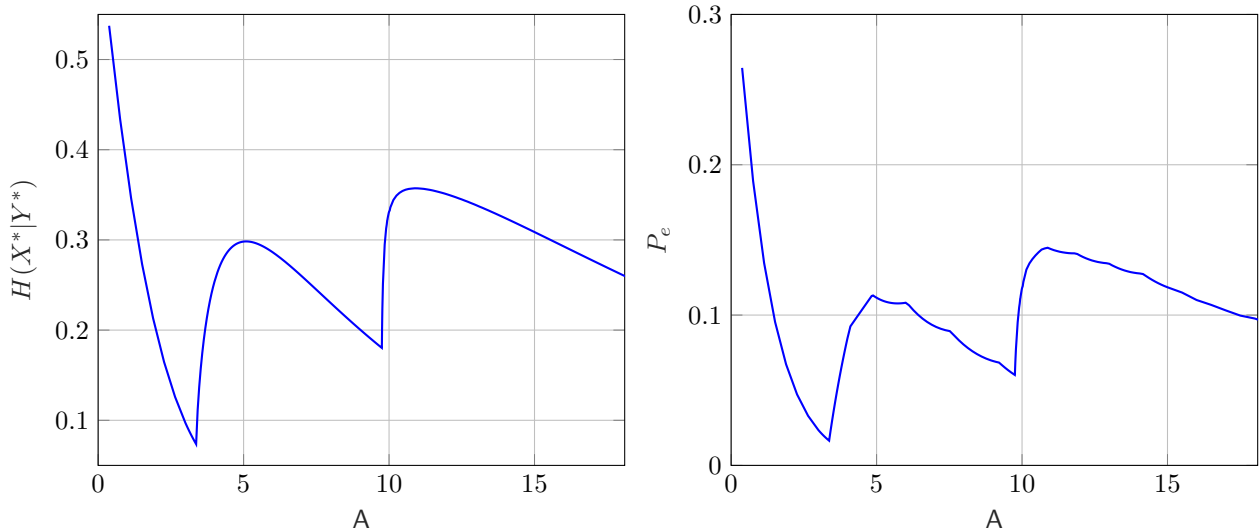}
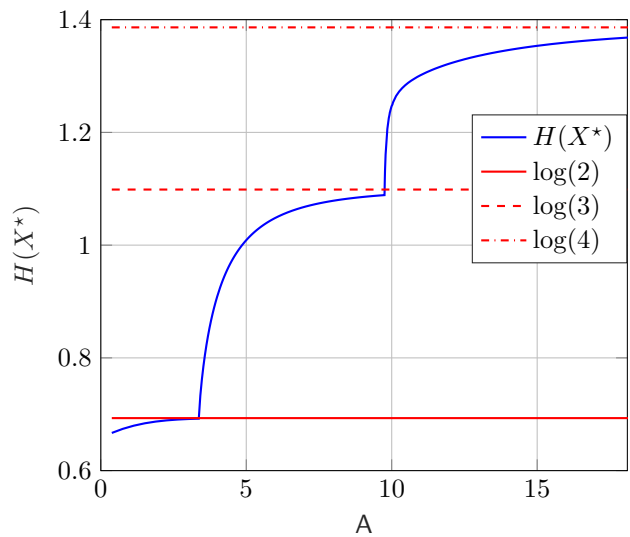
\caption{Entropy vs. $\sfA$. }
\label{fig:Entropy}
\end{subfigure}
\vspace{0.3cm}

\caption{Entropy (i.e., $H(X^\star)$), equivocation  (i.e, $H(X^\star|Y^\star)$), and the probability of error detection under MAP rule of the optimal input distribution.}
\label{fig:Plot_of_Entropy_Equivacation_and_Pe}
\end{figure}

\section{Proof of the Main Result}
\label{sec:proofs} 

The starting point for most of our proofs are the following KKT conditions  shown in  \cite{gallager1968information}; see also  the derivation in \cite{shamai1990capacity} done in the context of a Poisson noise channel. 
\begin{lemma} \label{lem:KKTconditions} The capacity-achieving distribution $P_{X^\star}$ and induced capacity-achieving output  $P_Y^\star$ distribution  satisfy the following:
\begin{subequations}
\begin{align}
&i(x; P_{X^\star})\le C(\sfA) , \quad  x \in [0, \mathsf{A}]\\
& i(x; P_{X^\star})    =  C(\sfA) , \quad  x \in \supp(P_{X^\star}),  \label{eq:EqualityCondition} 
\end{align} 
\label{eq:OptimalityMcElieceEquations}
\end{subequations}
where
\begin{equation}
	i(x; P_{X^\star})=  \sfD(P_{Y|X}(\cdot |x) \|  P_{Y^\star}  ).
	%\sum_{k=0}^\infty  P_{Y|X}(k|x) \log  \left( \frac{P_{Y|X}(k|x)}{P_{Y^\star}(k)} \right)  =\bbE \left[ \log  \left( \frac{P_{Y|X}( Y|X)}{P_{Y^\star}(Y)} \right)   |X =x  \right ]
	\end{equation}
		\end{lemma}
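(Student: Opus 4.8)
The plan is to obtain \eqref{eq:OptimalityMcElieceEquations} as the first-order Karush--Kuhn--Tucker conditions for the convex program \eqref{eq:Def_Capacity}. The set $\cP$ of input distributions supported on the compact interval $[0,\sfA]$ is convex and weak-$*$ compact, the map $P_X\mapsto I(X;Y)$ is concave and weak-$*$ upper semicontinuous on $\cP$ (the channel $P_{Y|X}$ being fixed), so a maximizer $P_{X^\star}$ exists; I would take its existence, and the fact that the optimal output satisfies $P_{Y^\star}(y)>0$ for all $y\in\bbN_0$, as starting points. The positivity is elementary: since $\sfA>0$ forces $\supp(P_{X^\star})$ to contain some $x_0>0$ (otherwise $Y\equiv 0$ and $I(X;Y)=0<C(\sfA)$), one has $P_{Y^\star}(y)\ge P_{X^\star}(x_0)\,x_0^y\rme^{-x_0}/y!>0$, so $i(x;P_{X^\star})=\sfD(P_{Y|X}(\cdot|x)\|P_{Y^\star})$ is finite for every $x\in[0,\sfA]$.

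The key step is the one-sided directional derivative of $I$ at $P_{X^\star}$ toward a point mass. For $x\in[0,\sfA]$ and $\lambda\in[0,1]$ put $P_\lambda=(1-\lambda)P_{X^\star}+\lambda\,\delta_x$, with induced output $P_{Y,\lambda}=(1-\lambda)P_{Y^\star}+\lambda P_{Y|X}(\cdot|x)$. Starting from the identity
\begin{equation}
I(P_X)=\int \sfD\bigl(P_{Y|X}(\cdot|x')\,\big\|\,P_{Y^\star}\bigr)\,\rmd P_X(x')-\sfD(P_Y\|P_{Y^\star}),
\end{equation}
which holds for any reference output distribution and follows by splitting $\log\tfrac{P_{Y|X}}{P_{Y^\star}}=\log\tfrac{P_{Y|X}}{P_Y}+\log\tfrac{P_Y}{P_{Y^\star}}$, one gets
\begin{equation}
I(P_\lambda)=(1-\lambda)\,C(\sfA)+\lambda\, i(x;P_{X^\star})-\sfD(P_{Y,\lambda}\|P_{Y^\star}),
\end{equation}
using $\int i(x';P_{X^\star})\,\rmd P_{X^\star}(x')=I(P_{X^\star})=C(\sfA)$. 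The last term is nonnegative, vanishes at $\lambda=0$, and (by convexity in $\lambda$ together with a dominated-convergence justification of termwise differentiation) its right derivative at $0$ equals $\sum_{y}\bigl(P_{Y|X}(y|x)-P_{Y^\star}(y)\bigr)=0$, so $\tfrac{\rmd}{\rmd\lambda}I(P_\lambda)\big|_{\lambda=0^+}=i(x;P_{X^\star})-C(\sfA)$. Since $P_\lambda$ is feasible and $P_{X^\star}$ is optimal, $I(P_\lambda)\le I(P_0)=C(\sfA)$ on $[0,1]$, hence this right derivative is $\le 0$, which is exactly $i(x;P_{X^\star})\le C(\sfA)$ for all $x\in[0,\sfA]$. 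Integrating this inequality against $P_{X^\star}$ and using $\int i(\cdot;P_{X^\star})\,\rmd P_{X^\star}=C(\sfA)$ forces $i(x;P_{X^\star})=C(\sfA)$ for $P_{X^\star}$-almost every $x$.

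To promote this to equality on all of $\supp(P_{X^\star})$ I would verify that $x\mapsto i(x;P_{X^\star})$ is continuous on $[0,\sfA]$. Writing $i(x;P_{X^\star})=x\log x-x-\sum_y P_{Y|X}(y|x)\log y!-\sum_y P_{Y|X}(y|x)\log P_{Y^\star}(y)$ (with the convention $0\log 0=0$, which also yields $i(0;P_{X^\star})=-\log P_{Y^\star}(0)$), both series converge locally uniformly: $\log y!=O(y\log y)$, the crude bound above gives $-\log P_{Y^\star}(y)=O(y\log y)$, and $\sum_y\sup_{x\in[0,\sfA]}P_{Y|X}(y|x)\,y\log y$ splits into a finite head ($y\le\sfA$, a finite sum of continuous functions) and a tail dominated by a convergent series. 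The almost-everywhere set $S$ on which $i=C(\sfA)$ has full $P_{X^\star}$-measure, so by definition of the support $\supp(P_{X^\star})\subseteq\overline{S}$, and continuity of $i$ extends the equality to $\overline{S}$, giving \eqref{eq:EqualityCondition}.

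The main obstacle is the analytic bookkeeping forced by the countably infinite output alphabet: in Gallager's finite-alphabet treatment both the termwise differentiation of $\sfD(P_{Y,\lambda}\|P_{Y^\star})$ and the continuity of $i(\cdot;P_{X^\star})$ are immediate, whereas here each requires a dominated-convergence estimate controlling the Poisson tail against the (at most polynomial-in-$y$) growth of $-\log P_{Y^\star}(y)$. These are mild --- any positive lower bound on the $P_{Y^\star}(y)$ with the above growth rate suffices, far short of the sharp tail bounds needed elsewhere in the paper --- but they must be stated with care; alternatively one may simply invoke the derivation carried out in \cite{shamai1990capacity} in precisely the Poisson setting, as quoted in the statement of the lemma.
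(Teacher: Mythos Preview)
The paper does not give its own proof of this lemma: it simply quotes the result from \cite{gallager1968information} and points to \cite{shamai1990capacity} for the Poisson-specific derivation. Your proposal is correct and is precisely the variational argument underlying those references --- perturb toward a point mass, use the identity $I(P_X)=\int \sfD(P_{Y|X}(\cdot|x')\|P_{Y^\star})\,\rmd P_X(x')-\sfD(P_Y\|P_{Y^\star})$ to isolate the linear part, observe that the derivative of the relative-entropy correction vanishes at $\lambda=0$, and upgrade the almost-everywhere equality to the full support via continuity of $x\mapsto i(x;P_{X^\star})$. The analytic checks you flag (dominated convergence for the termwise differentiation, local uniform convergence of the two series in $i(x;P_{X^\star})$ using $-\log P_{Y^\star}(y)=O(y\log y)$) are exactly the ones needed to carry the finite-alphabet argument over to the countable Poisson output, and your sketch of them is correct; they are indeed what \cite{shamai1990capacity} supplies.
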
 
	
Lemma~\ref{lem:KKTconditions} asserts that the points of the support  of $P_{X^\star}$  are zeros of the function  $ x\mapsto i(x; P_{X^\star}) -C(\sfA)$. Therefore, we have the following inclusion: 
\begin{equation}
 \supp(P_{X^\star}) 
 \subseteq  \left\{x \in [0,\sfA] :    i(x;P_{X^\star}) -C(\sfA)=0 \right \} .   \label{eq:InclusiongInequality}
\end{equation} 
This important observation will be a key step in the proof of the upper bound on the cardinality of  $ \supp(P_{X^\star}) $. 

\subsection{Proof of the New Capacity Expression in \eqref{eq:New_Cap_Expression} } 
As was shown in \cite{cao2014capacity}, $\{0\} \in \supp(P_{X^\star})$. Therefore, by using \eqref{eq:EqualityCondition}, we have that
\begin{align}
  C(\sfA) &=\sfD(P_{Y|X}(\cdot |0) \|  P_{Y^\star}  )\\
  &= \log \frac{1}{ P_{Y^\star}(0)}.
\end{align}
This concludes the proof.  Despite the simplicity of the above proof, to the best of our knowledge, the above expression has not been observed in the past.

 \subsection{Proof of the Upper Bound in \eqref{Eq:Bound_On_Probability_Values_poisson} and the Lower Bound in \eqref{eq:LowerBound_on_Number_of_Points} } 
 \label{sec:Proof_bounds_On_Probabilities} 
 
In this section, we develop upper bounds on the probability masses and a lower bound on the cardinality of the support.  Specifically, we will show that the probabilities are bounded by a term of the form $\rme^{-C(\sfA)}$, and the cardinality of the support points are lower bounded by a term of the form $\rme^{ C(\sfA) }$.

We first show an upper bound on the number of points. We show two methods for finding bounds on the probabilities. The first method relies on the strong data-processing inequality and the second method relies on the exact expression for the values of the probability distribution. 
An interesting feature of both methods is that they work  for all channels for which a capacity-achieving distribution is discrete.   In this section,  with some abuse of notation,  $P_{X^\star}$ and $P_{Y^\star}$  will denote the capacity-achieving  input and output distribution, respectively, not only for the Poisson noise channel but for a generic channel $P_{Y|X}$.

\begin{theorem}\label{thm:Bound_On_The_Number_Of_Points}    Fix some channel $P_{Y|X}$ and consider the  optimization problem 
\begin{equation}
C(\sfA)=\max_{ X \in \mathcal{X} } I(X;Y),
\end{equation} 
Moreover, suppose that a maximizing distribution $P_{X^\star}$ is  discrete. Then,
\begin{equation}
P_{X^\star}(x) \le \exp\left( - \frac{1}{  \eta_{\textsf{KL}} \left(\mathcal{X}; P_{Y|X} \right) }  C(\sfA) \right)
% \rme^{- \frac{1}{ \eta_{\textsf{KL}}(\sfA; P_{Y|X})}  C(\sfA)}
, \quad x\in   \supp(P_{X^\star}) \label{Eq:Bound_On_Probability_Values}
\end{equation} 
where    $0 <    \eta_{\textsf{KL}} \left(\mathcal{X}; P_{Y|X} \right) \le 1   $ is the contraction coefficient defined in \eqref{eq:Defintion_Of_Contraction_Coefficient}.
\end{theorem}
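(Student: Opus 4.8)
The plan is to exploit the KKT conditions (Lemma~\ref{lem:KKTconditions}), which are valid for any channel whose capacity-achieving distribution is discrete, combined with the strong data-processing inequality. First I would fix a mass point $x_0 \in \supp(P_{X^\star})$ and consider the two distributions on $\mathcal{X}$: the point mass $\delta_{x_0}$ and the capacity-achieving distribution $P_{X^\star}$ itself. Pushing both through the channel $P_{Y|X}$ yields $P_{Y|X}(\cdot|x_0)$ and $P_{Y^\star}$ respectively. The key identity is
\begin{equation}
\sfD(\delta_{x_0} \| P_{X^\star}) = \log \frac{1}{P_{X^\star}(x_0)},
\end{equation}
since a point mass against a discrete distribution has relative entropy $-\log$ of the probability it assigns to that atom (this requires $x_0 \in \supp(P_{X^\star})$ so the quantity is finite). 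On the output side, by the KKT equality condition \eqref{eq:EqualityCondition},
\begin{equation}
\sfD\bigl(P_{Y|X}(\cdot|x_0) \,\|\, P_{Y^\star}\bigr) = i(x_0; P_{X^\star}) = C(\sfA).
\end{equation}

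Next I would apply the strong data-processing inequality in the form \eqref{eq:SDP_def}–\eqref{eq:Defintion_Of_Contraction_Coefficient}: with $Q_X = \delta_{x_0}$ and $P_X = P_{X^\star}$ pushed forward through $Q_{Y|X} = P_{Y|X}$, we have
\begin{equation}
\sfD\bigl(P_{Y|X}(\cdot|x_0) \,\|\, P_{Y^\star}\bigr) \le \eta_{\textsf{KL}}\bigl(\mathcal{X}; P_{Y|X}\bigr)\, \sfD(\delta_{x_0} \| P_{X^\star}).
\end{equation}
Substituting the two computed values gives $C(\sfA) \le \eta_{\textsf{KL}}\bigl(\mathcal{X}; P_{Y|X}\bigr) \log \tfrac{1}{P_{X^\star}(x_0)}$, and rearranging yields exactly \eqref{Eq:Bound_On_Probability_Values}. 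The lower bound on $|\supp(P_{X^\star})|$ in \eqref{eq:LowerBound_on_Number_of_Points} then follows immediately: summing \eqref{Eq:Bound_On_Probability_Values} over all mass points and using $\sum_{x \in \supp(P_{X^\star})} P_{X^\star}(x) = 1$ forces $|\supp(P_{X^\star})| \ge \exp\!\bigl(C(\sfA)/\eta_{\textsf{KL}}\bigr)$; for the Poisson channel one then invokes the bound $\eta_{\textsf{KL}}(\mathcal{X}; P_{Y|X}) \le 1 - \rme^{-\sfA}$ promised for Section~\ref{sec:Proof_bounds_On_Probabilities}.

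The main obstacle is a subtle but important edge case: the strong data-processing inequality as stated in \eqref{eq:Defintion_Of_Contraction_Coefficient} takes the supremum over pairs with $0 < \sfD(Q_X \| P_X) < \infty$, so one must check that $\delta_{x_0}$ is an admissible input — that is, that $\sfD(\delta_{x_0} \| P_{X^\star}) = -\log P_{X^\star}(x_0)$ is genuinely finite and positive. Finiteness holds precisely because $x_0$ is a mass point of $P_{X^\star}$ (if $P_{X^\star}(x_0) = 0$ the relative entropy would be infinite and the argument degenerates). Positivity holds unless $P_{X^\star}$ is itself a point mass, in which case $C(\sfA) = 0$ and the bound $P_{X^\star}(x) \le 1$ is trivial. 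One should also confirm that the definition of $\eta_{\textsf{KL}}$, being a supremum over \emph{all} admissible input pairs supported on $\mathcal{X}$, indeed upper-bounds the single ratio arising from this particular choice — which it does by construction. A secondary, milder point is ensuring the output distribution $P_{Y^\star}$ puts positive mass wherever $P_{Y|X}(\cdot|x_0)$ does, so that $\sfD(P_{Y|X}(\cdot|x_0)\|P_{Y^\star})$ is finite; but this is automatic since $x_0 \in \supp(P_{X^\star})$ implies $P_{Y^\star} \ge P_{X^\star}(x_0) P_{Y|X}(\cdot|x_0)$ pointwise.
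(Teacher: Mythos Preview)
Your proposal is correct and follows essentially the same approach as the paper: apply the KKT equality $\sfD(P_{Y|X}(\cdot|x_0)\|P_{Y^\star})=C(\sfA)$ at a mass point, then invoke the strong data-processing inequality with the input pair $(\delta_{x_0},P_{X^\star})$ and use $\sfD(\delta_{x_0}\|P_{X^\star})=-\log P_{X^\star}(x_0)$. Your added discussion of the edge cases (finiteness and positivity of $\sfD(\delta_{x_0}\|P_{X^\star})$, absolute continuity on the output side) is more careful than the paper's write-up but does not change the argument.
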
 
\begin{proof}
Let $Y_{x}$ be the output of the channel $P_{Y|X}$ when the input is  $P_X=\delta_x$, where   $\delta_x$ is the Dirac delta function centered in $x$.  
Next, suppose that  $x\in   \supp(P_{X^\star})$. Then,   by using \eqref{eq:OptimalityMcElieceEquations}, we have that
\begin{align}
C(\sfA)&= \sfD(P_{Y|X}(\cdot |x) \|  P_{Y^\star}  )\\
&=  \sfD(P_{Y_{x}} \|  P_{Y^\star}  )\\
& \le  \eta_{\textsf{KL}} \left(\mathcal{X}; P_{Y|X} \right) \,  \sfD(\delta_x \|  P_{X^\star}  )  \label{eq:DataProcessingInequality}\\
&= \eta_{\textsf{KL}} \left(\mathcal{X}; P_{Y|X} \right)  \,\log \frac{1}{ P_{X^\star}(x)},  
\end{align}
where in \eqref{eq:DataProcessingInequality}  we have noted that $\delta_x  \to P_{Y|X}  \to P_{Y_x}$ and $P_{X^\star}  \to P_{Y|X}  \to P_{Y^\star}$ and  used the strong data-processing inequality  in \eqref{eq:SDP_def}. 
This concludes the proof.  
\end{proof}

The next result provides an upper bound on the contraction coefficient for  the Poisson channel.  
\begin{lemma} Let $ P_{Y|X}$ be a Poisson channel as in \eqref{eq:PoissonTransformation}. Then,  for all $\sfA \ge 0$
\begin{equation} 
\eta_{\textsf{KL}}([0,\sfA];P_{Y|X})  \le  1-\rme^{- \sfA }.  \label{eq:Bound_on_ContractionCoefficient}
\end{equation} 
\end{lemma}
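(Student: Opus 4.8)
The plan is to bound the KL–contraction coefficient of the Poisson kernel by Dobrushin's total–variation contraction coefficient, and then to evaluate (an upper bound on) the latter by a coupling argument that exploits the infinite divisibility of the Poisson law.

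First I would invoke the standard domination of $f$-divergence contraction coefficients by Dobrushin's coefficient (see \cite{RaginskyStrongData,polyanskiy2017strong} and references therein): for \emph{any} channel $P_{Y|X}$ with input alphabet $\mathcal{X}$, and in particular for the relative entropy,
\[
\eta_{\textsf{\textup{KL}}}(\mathcal{X};P_{Y|X}) \;\le\; \eta_{\mathrm{TV}}(\mathcal{X};P_{Y|X}) \;\coloneqq\; \sup_{x,x'\in\mathcal{X}} \bigl\| P_{Y|X}(\cdot\mid x) - P_{Y|X}(\cdot\mid x') \bigr\|_{\mathrm{TV}}.
\]
Hence it suffices to show $\eta_{\mathrm{TV}}([0,\sfA];P_{Y|X}) \le 1-\rme^{-\sfA}$ for the Poisson kernel \eqref{eq:PoissonTransformation}.

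Second, I would fix $0\le x\le x'\le \sfA$ (the case $x>x'$ follows by the symmetry of total variation) and use the superposition property of the Poisson distribution: if $N\sim\mathcal{P}(x)$ and $M\sim\mathcal{P}(x'-x)$ are independent, then $N+M\sim\mathcal{P}(x')$. The pair $(N,N+M)$ is therefore a coupling of $P_{Y|X}(\cdot\mid x)$ and $P_{Y|X}(\cdot\mid x')$, so by the coupling bound on total variation,
\[
\bigl\| P_{Y|X}(\cdot\mid x) - P_{Y|X}(\cdot\mid x') \bigr\|_{\mathrm{TV}} \;\le\; \mathbb{P}[\,N\neq N+M\,] \;=\; \mathbb{P}[\,M\ge 1\,] \;=\; 1-\rme^{-(x'-x)} \;\le\; 1-\rme^{-\sfA},
\]
since $0\le x'-x\le \sfA$. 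Taking the supremum over $x,x'\in[0,\sfA]$ gives $\eta_{\mathrm{TV}}([0,\sfA];P_{Y|X}) \le 1-\rme^{-\sfA}$, and combining with the first step yields \eqref{eq:Bound_on_ContractionCoefficient}. (The bound is in fact tight: taking $x=0$, $x'=\sfA$ compares a point mass at $0$ with $\mathcal{P}(\sfA)$, whose total-variation distance is exactly $1-\rme^{-\sfA}$; but only the inequality is needed.)

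Both ingredients are essentially off the shelf, so I do not anticipate a serious obstacle. The only point requiring a little care is that the contraction coefficient in \eqref{eq:Defintion_Of_Contraction_Coefficient} is defined with the inputs restricted to $\mathcal{X}=[0,\sfA]$; one must therefore apply the domination $\eta_{\textsf{\textup{KL}}}\le\eta_{\mathrm{TV}}$ with the \emph{same} restricted input set, so that the supremum defining $\eta_{\mathrm{TV}}$ ranges only over $x,x'\in[0,\sfA]$ and $x'-x$ is capped by $\sfA$ — this is precisely what makes the nontrivial factor $1-\rme^{-\sfA}$ appear instead of the trivial bound $1$.
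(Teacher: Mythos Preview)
Your proof is correct and follows essentially the same two-step route as the paper: bound $\eta_{\textsf{KL}}$ by Dobrushin's coefficient $\sup_{x,x'}\mathsf{TV}(P_{Y|X=x},P_{Y|X=x'})$, then bound the total variation between two Poisson laws by $1-\rme^{-|x-x'|}$. The only difference is that the paper cites \cite[Corollary~3.1]{adell2005sharp} for the TV bound, whereas you supply a self-contained coupling argument via the Poisson superposition property; your argument is in fact the standard proof of that cited bound.
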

\begin{proof}
As was shown in \cite[Proposition~II.4.10]{cohen1998comparisons}, the contraction coefficient is upper bounded by 
\begin{align}
\eta_{\textsf{KL}}([0,\sfA];P_{Y|X}) &\le  \sup_{x, x':\:  0 \le  x \le \sfA ,\, 0 \le  x' \le \sfA  } \mathsf{TV}(P_{Y|X=x} \|  P_{Y|X=x'})  \label{eq:Dobrusin_bound}\\
& \le  \sup_{x, x':\:  0 \le  x \le \sfA ,\, 0 \le  x' \le \sfA  }(  1-\rme^{- |x-x'| }) \label{eq:Bound_on_tv}\\
& =   1-\rme^{- \sfA }, 
\end{align} 
where in \eqref{eq:Dobrusin_bound}  $\mathsf{TV}(P_{Y|X=x} \|  P_{Y|X=x'}) $ is the total variation distance;  and \eqref{eq:Bound_on_tv} follows  from the bound in \cite[Corollary 3.1]{adell2005sharp}. 
 \end{proof} 
 
 Combining \eqref{Eq:Bound_On_Probability_Values} and \eqref{eq:Bound_on_ContractionCoefficient} leads to the bound in \eqref{Eq:Bound_On_Probability_Values_poisson}.  Next, we  show the proof of the second  bound in \eqref{Eq:Bound_On_Probability_Values_poisson_loc_dependent}. The following result holds for all channels that have  a discrete  capacity-achieving distribution.
  
 \begin{theorem}\label{thm:Exact_Expression} Fix some channel $P_{Y|X}$ and consider the  optimization problem 
\begin{equation}
C(\sfA)=\max_{ X \in \mathcal{X} } I(X;Y),
\end{equation} 
Moreover, suppose that a maximizing distribution $P_{X^\star}$ is  discrete.  Then,
for $x \in \supp(P_{X^\star})$
 \begin{equation}
 P_{X^\star}(x)= \rme^{-C(\sfA)} \rme^{\bbE \left[ \log \left(  P_{X^\star|Y^\star}(x|Y)\right) \middle | X=x\right]}. \label{eq:exact_Expression_for_probabilities}
 \end{equation}

\end{theorem}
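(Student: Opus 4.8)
The plan is to start from the KKT equality condition in Lemma~\ref{lem:KKTconditions}, which states that for every $x \in \supp(P_{X^\star})$ we have $i(x;P_{X^\star}) = \sfD(P_{Y|X}(\cdot|x) \| P_{Y^\star}) = C(\sfA)$. Writing out the relative entropy explicitly, this reads
\begin{equation}
C(\sfA) = \sum_y P_{Y|X}(y|x) \log \frac{P_{Y|X}(y|x)}{P_{Y^\star}(y)} = \bbE\left[ \log \frac{P_{Y|X}(Y|x)}{P_{Y^\star}(Y)} \,\middle|\, X=x \right].
\end{equation}
The first step is then to apply Bayes' rule to the ratio inside the logarithm: $\frac{P_{Y|X}(y|x)}{P_{Y^\star}(y)} = \frac{P_{X^\star|Y^\star}(x|y)}{P_{X^\star}(x)}$, valid for $x \in \supp(P_{X^\star})$ since then $P_{X^\star}(x) > 0$ and the joint distribution assigns positive mass to $(x,y)$ whenever $P_{Y|X}(y|x)>0$.

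Substituting this identity in, the equality becomes
\begin{equation}
C(\sfA) = \bbE\left[ \log \frac{P_{X^\star|Y^\star}(x|Y)}{P_{X^\star}(x)} \,\middle|\, X=x \right] = \bbE\left[ \log P_{X^\star|Y^\star}(x|Y) \,\middle|\, X=x \right] - \log P_{X^\star}(x),
\end{equation}
where the second equality uses that $\log P_{X^\star}(x)$ is a constant with respect to the conditional expectation over $Y$. Rearranging gives $\log P_{X^\star}(x) = -C(\sfA) + \bbE[\log P_{X^\star|Y^\star}(x|Y) \mid X=x]$, and exponentiating both sides yields exactly \eqref{eq:exact_Expression_for_probabilities}. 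So the proof is essentially a two-line manipulation once the KKT equality is in hand.

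I would note that this is precisely the kind of identity connecting information and estimation measures alluded to in the excerpt's overview (the reference to \cite{dytso2020estimation} for conditional-expectation expressions); here it is the very elementary version where the "estimation" object is the posterior $P_{X^\star|Y^\star}$. There is no real obstacle: the only thing to be careful about is the support/positivity bookkeeping — one must restrict attention to $x\in\supp(P_{X^\star})$ so that both the Bayes inversion and the logarithm are well-defined, and one should confirm that the conditional expectation over $Y$ given $X=x$ is exactly the sum against $P_{Y|X}(\cdot|x)$ that appears in the relative entropy defining $i(x;P_{X^\star})$. Since $P_{X^\star}$ being discrete is assumed, all sums are well-defined and the interchange of the constant term out of the expectation is immediate. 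The mild subtlety worth a sentence in the writeup is that one should verify $P_{X^\star|Y^\star}(x|y) > 0$ on the relevant event so the logarithm is finite $P_{Y|X}(\cdot|x)$-almost surely, which again follows from $x$ being a support point.
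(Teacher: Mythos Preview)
Your proposal is correct and follows essentially the same approach as the paper: start from the KKT equality $C(\sfA)=\sfD(P_{Y|X}(\cdot|x)\|P_{Y^\star})$ for $x\in\supp(P_{X^\star})$, rewrite the likelihood ratio via Bayes' rule as $P_{X^\star|Y^\star}(x|Y)/P_{X^\star}(x)$, pull out the constant $\log P_{X^\star}(x)$, and rearrange. The paper's proof is a terse three-line version of exactly this computation; your additional remarks on positivity and well-definedness are sound but not needed beyond what the paper assumes.
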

\begin{proof}
By using \eqref{eq:OptimalityMcElieceEquations}, for  $x \in \supp(P_{X^\star})$ we have that
 \begin{align}
 C(\sfA)&=\sfD(P_{Y|X}(\cdot |x) \|  P_{Y^\star}  )\\
 &=\bbE \left[ \log \left(  \frac{P_{Y|X}(Y|x)}{ P_{Y^\star}(Y) }\right) \middle | X=x\right]\\
 &=\log\frac{1}{P_{X^\star}(x)}+\bbE \left[ \log \left(  P_{X^\star|Y^\star}(x|Y)\right) \middle | X=x\right].
 \end{align}
This concludes the proof.  
\end{proof} 
 
 \begin{rem} 
 Upon using the bound $P_{X^\star|Y^\star}(x|y)\ \le 1$ in \eqref{eq:exact_Expression_for_probabilities}, we arrive at 
  \begin{equation}
 P_{X^\star}(x)\le \rme^{-C(\sfA)}, \quad x \in \supp(P_{X^\star}).  \label{eq:Loose_bound_on_values}
 \end{equation}
This is  clearly a weaker version of the bound in  \eqref{Eq:Bound_On_Probability_Values}. However, as shown next, we can improve this by using a better bound on the term 
$\bbE \left[ \log \left(  P_{X^\star|Y^\star}(x|Y)\right) \middle | X=x\right]$.
 \end{rem} 
 
 \begin{lemma}
 Let $ P_{Y|X}$ be a Poisson channel as in \eqref{eq:PoissonTransformation}. Then,  for all $\sfA \ge 0$
and $x \in \supp(P_{X^\star}) \setminus \{0\}$
	\begin{equation}\label{eq:upperPx}
	P_{X^\star}(x) \le \rme^{-C(\sfA)-x \frac{\rme^{-x}}{1-\rme^{-x}}}.
	\end{equation}
	In addition, the bound in \eqref{eq:upperPx} becomes equality if $|\supp(P_{X^\star})|=2$. 
 \end{lemma}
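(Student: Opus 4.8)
The plan is to substitute the Poisson transition law and the new capacity formula \eqref{eq:New_Cap_Expression} into the exact identity of Theorem~\ref{thm:Exact_Expression}, and then solve a resulting self-referential inequality. Fix $x \in \supp(P_{X^\star}) \setminus \{0\}$ and abbreviate $p = P_{X^\star}(x)$. By Theorem~\ref{thm:Exact_Expression},
\[
\log p = -C(\sfA) + \bbE\!\left[ \log P_{X^\star|Y^\star}(x|Y) \,\middle|\, X = x \right] = -C(\sfA) + \sum_{y=0}^\infty P_{Y|X}(y|x)\, \log P_{X^\star|Y^\star}(x|y).
\]
Since $P_{X^\star|Y^\star}(x|y)\le 1$, every summand is nonpositive, so the first step is to discard all terms but $y=0$, obtaining
\[
\log p \le -C(\sfA) + P_{Y|X}(0|x)\, \log P_{X^\star|Y^\star}(x|0) = -C(\sfA) + \rme^{-x}\, \log P_{X^\star|Y^\star}(x|0),
\]
where $P_{Y|X}(0|x) = \rme^{-x}$ from \eqref{eq:PoissonTransformation}. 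Note that $P_{X^\star|Y^\star}(x|0)>0$ because $x\in\supp(P_{X^\star})$ and $\rme^{-x}>0$, so the logarithm is finite.

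The second step is to evaluate the posterior at $y=0$ in closed form. By Bayes' rule, $P_{Y|X}(0|x)=\rme^{-x}$, and the new capacity expression $P_{Y^\star}(0)=\rme^{-C(\sfA)}$ from \eqref{eq:New_Cap_Expression}, so
\[
P_{X^\star|Y^\star}(x|0) = \frac{P_{Y|X}(0|x)\, P_{X^\star}(x)}{P_{Y^\star}(0)} = \rme^{C(\sfA)-x}\, p.
\]
Plugging this back gives a self-referential inequality in $p$, namely $\log p \le -C(\sfA) + \rme^{-x}\bigl( C(\sfA) - x + \log p \bigr)$. Collecting the $\log p$ terms and dividing by $1-\rme^{-x}>0$ (legitimate since $x>0$) yields
\[
\log p \le -C(\sfA) - \frac{x\,\rme^{-x}}{1 - \rme^{-x}},
\]
which is \eqref{eq:upperPx} after exponentiation.

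For the equality clause, I would invoke the fact (shown in \cite{cao2014capacity}, as already used above) that both $0$ and $\sfA$ lie in $\supp(P_{X^\star})$; hence if $|\supp(P_{X^\star})|=2$ then the support is exactly $\{0,\sfA\}$ and the only relevant point is $x=\sfA$. The single inequality used in the argument — dropping the $y\ge 1$ terms — is an equality precisely when $P_{X^\star|Y^\star}(\sfA|y)=1$ for every $y\ge 1$ with $P_{Y|X}(y|\sfA)>0$. But $P_{Y|X}(y|0)=0$ for $y\ge 1$ by the convention $0^y=0$, so for such $y$ the posterior puts all its mass on $\sfA$, and equality follows throughout.

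The computations are essentially mechanical. The only point requiring a moment's care is recognizing that the bound obtained from the $y=0$ term still contains $P_{X^\star}(x)$ on its right-hand side and must be solved for it, together with checking $1-\rme^{-x}>0$ so that the rearrangement preserves the inequality direction. I do not anticipate a genuine obstacle.
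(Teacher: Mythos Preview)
Your proof is correct and follows essentially the same approach as the paper: both isolate the $y=0$ term in the expectation of Theorem~\ref{thm:Exact_Expression}, evaluate it exactly via Bayes' rule and the identity $P_{Y^\star}(0)=\rme^{-C(\sfA)}$, bound the remaining $y\ge 1$ terms by $\log P_{X^\star|Y^\star}(x|y)\le 0$, and then solve the resulting self-referential inequality for $P_{X^\star}(x)$. The equality argument in the binary case is likewise identical.
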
 
 \begin{proof}
	 An upper bound to the expectation in \eqref{eq:exact_Expression_for_probabilities} is as follows: for $x \in \supp(P_{X^\star})$
	 \begin{align}
	 \bbE \left[ \log \left(  P_{X^\star|Y^\star}(x|Y)\right) \middle | X=x\right]&= \sum_{k=0}^{\infty} P_{Y|X}(k|x) \log\frac{P_{Y|X}(k|x) P_{X^\star}(x)}{P_{Y^\star}(k)} \\
	 &=P_{Y|X}(0|x) \log\frac{P_{Y|X}(0|x) P_{X^\star}(x)}{P_{Y^\star}(0)}+ \sum_{k=1}^{\infty} P_{Y|X}(k|x) \log P_{X^\star | Y^\star}(x|k) \\
	 &=\rme^{-x} \log\frac{\rme^{-x} P_{X^\star}(x)}{\rme^{-C(\sfA)}}+ \sum_{k=1}^{\infty} P_{Y|X}(k|x) \log P_{X^\star | Y^\star}(x|k) \label{eq:Using_expression_for_cap}\\
	 &\le \rme^{-x} \log\frac{\rme^{-x} P_{X^\star}(x)}{\rme^{-C(\sfA)}} \label{eq:Using_the_bound P_XY le 1} \\
	 &= \rme^{-x}  \left (C(\sfA)-x+\log P_{X^\star}(x) \right),  \label{Eq:Final_Express_for_Expected_log}
	 \end{align}
	 where in \eqref{eq:Using_expression_for_cap} we have used the expression for the capacity in \eqref{eq:New_Cap_Expression}; and in \eqref{eq:Using_the_bound P_XY le 1} we have used the bound $P_{X^\star | Y^\star}(x|k) \le 1$.
	 
	 Next, for  $x \in \supp(P_{X^\star}) \setminus \{0\}$,
combining \eqref{eq:exact_Expression_for_probabilities} and \eqref{Eq:Final_Express_for_Expected_log}, 
we arrive at
\begin{equation}
P_{X^\star}(x) \le  \rme^{-C(\sfA)+ \rme^{-x}  \left (C(\sfA)-x+\log P_{X^\star}(x) \right)} , \quad x \in \supp(P_{X^\star}) \setminus \{0\}. \label{eq:Intermidiate_Step_Proof_Second_bound}
\end{equation} 
Now,   solving \eqref{eq:Intermidiate_Step_Proof_Second_bound}  for  $P_{X^\star}(x)$ yields~\eqref{eq:upperPx}. Finally, if $P_{X^\star}$ is binary, then by using Bayes rule  and the fact that $P_{Y|X}(0|0)=1$, it is not difficult to check that for $k \ge 1$
\begin{equation}
P_{X^\star | Y^\star}(\sfA|k) =1,
\end{equation}
and, therefore, the bound in \eqref{eq:Using_the_bound P_XY le 1} is tight.  This concludes the proof. 
 \end{proof} 
 
This concludes the proof of the upper bounds on the values of the probabilities.  The lower bound on the number of points in \eqref{eq:Bounds_on_the_Number_Of_Poitns} is now a consequence of the upper bound on the values of $P_{X^\star} $: 
 \begin{equation}
 1= \sum_{x \in \supp(P_{X^\star})  }  P_{X^\star}(x)  \le   |  \supp(P_{X^\star}) |  \rme^{- \frac{1}{  1-\rme^{- \sfA } }  C(\sfA)} ,
 \end{equation} 
 which simplifies to 
 \begin{equation}
   \rme^{ \frac{1}{  1-\rme^{- \sfA } }  C(\sfA)}  \le  |  \supp(P_{X^\star}) |  . \label{eq:LowerBOund_via_strong_data_processing}
 \end{equation}
 
 \begin{rem}
 An alternative way of finding a lower bound on $|  \supp(P_{X^\star}) | $ is by using the following sequence of elementary inequalities:
 \begin{align}
 C(\sfA)&= I(X^\star;Y^\star)\\
 &\le H(X^\star)\\
 &\le \log \left( |  \supp(P_{X^\star}) | \right). 
 \end{align}
 However, this upper bound is weaker than that in \eqref{eq:LowerBOund_via_strong_data_processing}. 
 \end{rem} 
 
 \subsection{Proof of the  Lower Bound  on  $P_{X^\star}(\sfA)$ in \eqref{eq:lower_bound_on_P_X(A)} and the Bounds in \eqref{eq:Bound_on_other_points} } 
 In this section, we establish bounds on  $P_{X^\star}(\sfA)$ and bounds the location of the support points.
 Let 
 \begin{align}
 i'(x ; P_{X^\star})&= \frac{ \rm d}{ {\rm d} x} i(x ; P_{X^\star}), \\
  i''(x ; P_{X^\star})&= \frac{ \rm d^2}{ {\rm d} x^2} i(x ; P_{X^\star}).
 \end{align}
\noindent The starting place for both bounds is the fact that if $x^\star \in \supp(P_{X^\star})$ is a point that is neither equal to $0$ nor $\sfA$, then  by using the KKT conditions in \eqref{eq:OptimalityMcElieceEquations}, we have that 
 \begin{align}
i'(x^\star ; P_{X^\star})=0,\\
i''(x^\star ; P_{X^\star}) \le 0.
\end{align} 
Then, by letting $\overline{Y} \sim \mathcal{P}(x^\star )$, we have that 
	\begin{align}
	0&\ge i''(x^\star ; P_{X^\star})\\
	&=-\bbE \left[ \log  \left( \bbE[X^{\star} |Y^\star=\overline{Y}+1]  \right)\right]  -i'(x^\star ; P_{X^\star})+\log(x^\star )+\frac{1}{x^\star } \label{eq:Using_identity_for_second_derivative_v1}\\
	&=-\bbE \left[ \log  \left( \bbE[X^{\star} |Y^\star=\overline{Y}+1]  \right)\right]  +\log(x^\star )+\frac{1}{x^\star } ,\label{eq:Using_Derivative_is_zeros_at_x_0} 
\end{align} 
where \eqref{eq:Using_identity_for_second_derivative_v1} follows by using the identity for the second derivative of $i(x^\star ; P_{X^\star})$ in Lemma~\ref{lem:Information_Density_Derivatives} in Appendix~\ref{app:sec:Derivatives_info_density};   and \eqref{eq:Using_Derivative_is_zeros_at_x_0} follows by using that $i'(x^\star ; P_{X^\star})=0$. 

The inequality in \eqref{eq:Using_Derivative_is_zeros_at_x_0} will be  the key to both proofs.  We start by showing the bound in \eqref{eq:Bound_on_other_points}.  We also use a result of \cite{mceliece1979practical} to establish a lower bound on the location of the second largest point. 
 \begin{lemma}\label{lem:Bound_Size_of_the_Support} Suppose that $ | \supp(P_{X^\star}) | \ge 3$. Then, for $x^\star  \in \supp(P_{X^\star})\setminus \{0,\sfA\}$
\begin{equation}
 \rme^{-\sqrt{2(\log(\sfA)-1)}} \le \sfA \, \rme^{W_{-1} \left(- \frac{1}{\sfA} \right)} \le x^\star  \le \sfA \, \rme^{W_{0} \left(- \frac{1}{\sfA} \right)} \le \sfA-1.  \label{eq:Bounds_on_smalles_and_lagrest}
\end{equation}
In addition, let $x_0 =\max \{  \supp(P_{X^\star})\setminus  \{ \sfA \} \}  $  (i.e., the second largest point in the support). Then, for  $ | \supp(P_{X^\star}) | \ge 4$,
\begin{equation}
x_0 \ge 1.  \label{eq:LowerBound_Other_Points}
\end{equation}
\end{lemma}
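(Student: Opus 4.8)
The plan is to start from the inequality \eqref{eq:Using_Derivative_is_zeros_at_x_0}, which is valid for any interior support point $x^\star \in \supp(P_{X^\star}) \setminus \{0,\sfA\}$; such a point exists whenever $|\supp(P_{X^\star})| \ge 3$, since $0,\sfA \in \supp(P_{X^\star})$ by \cite{cao2014capacity}. Because $X^\star$ takes values in $[0,\sfA]$, the conditional expectation appearing in \eqref{eq:Using_Derivative_is_zeros_at_x_0} is at most $\sfA$, so $\bbE[\log(\bbE[X^\star \mid Y^\star = \overline{Y}+1])] \le \log \sfA$, and \eqref{eq:Using_Derivative_is_zeros_at_x_0} collapses to the scalar inequality
\[
\log(x^\star) + \frac{1}{x^\star} \le \log \sfA .
\]
Substituting $x^\star = \sfA\,\rme^{t}$ rewrites this as $t\,\rme^{t} \le -1/\sfA$, and the remainder of the argument is just an analysis of this one inequality.

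Next I would identify the solution set $\{t : t\rme^t \le -1/\sfA\}$. The map $u\mapsto u\rme^u$ is decreasing on $(-\infty,-1)$ and increasing on $(-1,\infty)$, with minimum $-1/\rme$ at $u=-1$. Moreover, the hypothesis $|\supp(P_{X^\star})|\ge 3$ forces $\sfA > \bar{\sfA} > \rme$ (by the uniqueness of the capacity-achieving distribution and the characterization of the binary regime recalled in Section~\ref{sec:Introduction}), so $-1/\sfA \in (-1/\rme, 0)$ and both real branches of the Lambert $W$-function are defined at $-1/\sfA$. Hence $\{t: t\rme^t \le -1/\sfA\} = [\,W_{-1}(-1/\sfA),\, W_0(-1/\sfA)\,]$, which already gives the two middle inequalities in \eqref{eq:Bounds_on_smalles_and_lagrest}, i.e. $\sfA\,\rme^{W_{-1}(-1/\sfA)} \le x^\star \le \sfA\,\rme^{W_0(-1/\sfA)}$.

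The two outer bounds in \eqref{eq:Bounds_on_smalles_and_lagrest} would then be obtained by elementary estimates. For the right end it suffices to prove $W_0(-1/\sfA) \le -1/(\sfA-1)$; since $u\mapsto u\rme^u$ is increasing on $(-1,0)$, this is equivalent to $-\tfrac{1}{\sfA-1}\rme^{-1/(\sfA-1)} \ge -\tfrac{1}{\sfA}$, i.e. $1 - \tfrac{1}{\sfA} \ge \rme^{-1/(\sfA-1)}$, which follows from the standard bound $\log(1-y) \ge -y/(1-y)$ with $y = 1/\sfA$. For the left end, I would use that at the boundary value $m := \sfA\,\rme^{W_{-1}(-1/\sfA)}$ one has the equality $\log m + 1/m = \log\sfA$; writing $s := -\log m$ (positive in the relevant range) this reads $\rme^{s} - s = \log\sfA$, and combining it with $\rme^{s} \ge 1+s+s^2/2$ gives $s^2/2 \le \rme^{s} - s - 1 = \log\sfA - 1$, whence $s \le \sqrt{2(\log\sfA-1)}$ and $x^\star \ge m = \rme^{-s} \ge \rme^{-\sqrt{2(\log\sfA-1)}}$.

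For \eqref{eq:LowerBound_Other_Points}, when $|\supp(P_{X^\star})| \ge 4$ the set $\supp(P_{X^\star})\setminus\{0,\sfA\}$ contains at least two points of $(0,\sfA)$; by the result of \cite{mceliece1979practical}, at most one support point lies in the open interval $(0,1)$, so at least one interior point lies in $[1,\sfA)$, and therefore its largest element — which is $x_0 = \max\{\supp(P_{X^\star})\setminus\{\sfA\}\}$ — satisfies $x_0 \ge 1$. The step needing the most care is the Lambert-$W$ bookkeeping in the second paragraph: one must check that $|\supp(P_{X^\star})|\ge 3$ genuinely puts $\sfA$ in the regime $\sfA>\rme$ where both real branches exist, and that the solution set of $t\rme^t \le -1/\sfA$ is the closed interval between the two branch values rather than a half-line. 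Once that is pinned down, the outer bounds reduce to short convexity/monotonicity estimates, and the second claim is a direct appeal to \cite{mceliece1979practical}.
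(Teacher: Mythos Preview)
Your proposal is correct and follows essentially the same route as the paper: both start from \eqref{eq:Using_Derivative_is_zeros_at_x_0}, bound the conditional expectation by $\sfA$ to obtain $\log(x^\star/\sfA)+1/x^\star\le 0$, identify the solution set via the two real branches of the Lambert $W$-function (using that $|\supp(P_{X^\star})|\ge 3$ forces $\sfA>\bar\sfA>\rme$), and then obtain the outer elementary bounds from the same inequalities $\log t\ge 1-1/t$ and $\rme^{s}\ge 1+s+s^2/2$; the final claim $x_0\ge 1$ is in both cases the direct corollary of the result of \cite{mceliece1979practical} that at most one support point lies in $(0,1)$. The only difference is cosmetic: you pass to the variable $t$ with $x^\star=\sfA\rme^{t}$ and work with $t\rme^{t}\le -1/\sfA$ directly, whereas the paper analyses $f(x)=\log(x/\sfA)+1/x$ and substitutes $x=\rme^{t}$ only for the lower elementary bound.
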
 

\begin{proof}
By using \eqref{eq:Using_Derivative_is_zeros_at_x_0}, we have that
\begin{align}
0 
&\ge \log  \left(\frac{1 }{\sfA}  \right) +\log(x^\star )+\frac{1}{x^\star } \label{eq:using_bound_on_CE}\\
&=\log  \left(\frac{x^\star  }{\sfA}  \right) +\frac{1}{x^\star }, \label{eq:logequationtosolve}
\end{align} 
where the inequality follows by using the bound $ \bbE[X^\star|Y^\star] \le \sfA$. We now show that \eqref{eq:logequationtosolve} implies \eqref{eq:Bounds_on_smalles_and_lagrest}.

 The function $f(x)=\log  \left(\frac{x }{\sfA}  \right) +\frac{1}{x}$ is decreasing for $x<1$ and increasing for $x>1$, and has two zeros for $\sfA>\rme$. 
Note that under the assumption $ | \supp(P_{X^\star}) | \ge 3$, we have that $\sfA>\bar{\sfA}\approx 3.4$.  Therefore, we operate in the regime where we have two zeros. 

The exact solution to \eqref{eq:logequationtosolve} is given in terms of branches of the Lambert W-function:
\begin{equation}
\sfA \rme^{W_{-1} \left(- \frac{1}{\sfA} \right)} \le x^\star \le  \sfA \rme^{W_{0} \left(- \frac{1}{\sfA} \right)}.
\end{equation}
To make the above bounds more useful, we further loosen them to  involve  simpler functions.

To find the upper bound on the largest zero,  we use the bound $\log( t) \ge 1 -\frac{1}{t}$, which leads to 
\begin{align}
&0 \ge \log  \left(\frac{x^\star }{\sfA}  \right) +\frac{1}{x^\star } \ge 1- \frac{\sfA }{x^\star } +\frac{1}{x^\star } \notag \\
&\implies  x^\star  \le \sfA-1. 
\end{align}

We now find a lower bound on the smallest zero.  By substituting $x=\rme^t$, we can equivalently study the smallest zero of $f(t)=t-\log(\sfA)+\rme^{-t}$, which is the solution $t_{\min}<0$. Since $f(t)$ is decreasing for $t<0$, a lower bound to $f(t)$ will provide a lower bound to $t_{\min}$. By using $\rme^{-t}\ge 1-t+\frac{t^2}{2}$, for $t<0$, we get to study the smallest zero of
	\begin{align}
	\underline{f}(t)&=t-\log(\sfA)+1-t+\frac{t^2}{2} \\
	&=-\log(\sfA)+1+\frac{t^2}{2},
	\end{align}
	which yields
	\begin{equation}
	t_{\min}\ge -\sqrt{2(\log(\sfA)-1)},
	\end{equation}
	and finally
	\begin{equation}
	x^\star \ge \rme^{-\sqrt{2(\log(\sfA)-1)}}, \qquad \sfA>\rme.
	\end{equation}
	
	Finally, under the additional assumption that $ | \supp(P_{X^\star}) | \ge 4$ and using the fact that there is at most one point on the open interval $(0,1)$ \cite{mceliece1979practical}, it follows that the second largest  point must satisfy $x_0 \ge 1$.  This concludes the proof. 
\end{proof}

\begin{rem}  We note that from the statement of Lemma~\ref{lem:Bound_Size_of_the_Support},  the lower bound in \eqref{eq:Bounds_on_smalles_and_lagrest} is only useful to bound the location of the second smallest point since by \eqref{eq:LowerBound_Other_Points} the other larger points are lower-bounded by one.
\end{rem}

We now show the lower bound on $P_{X^\star}(\sfA)$. For ease of presentation and to emphasize key steps, the proof is split among three lemmas.

\begin{lemma}\label{lem:first_lower_bound_on_p_a}  Suppose that $ | \supp(P_{X^\star}) | \ge 3$. Let $x_0 =\max \{  \supp(P_{X^\star})\setminus  \{ \sfA \} \}  $ and $\overline{Y} \sim \mathcal{P}(x_0)$. Then, for every  $c\ge0$ such that
\begin{equation}
\bbP \left[\overline{Y}>c \right] \le \frac{1}{\sfA}, 
\end{equation}
we have that
	\begin{equation}
	P_{X^\star}(\sfA)  \ge   \frac{-\frac{1}{\sfA} \log\bigl(\frac{\sfA}{x_0}\bigr)+\frac{1}{x_0}}{\bigl(\frac{\sfA}{x_0}-1\bigr)   \bbE \left[\frac{P_{Y|X}(\overline{Y}+1|\sfA)}{P_{Y^\star}(\overline{Y}+1)} \mathds{1}_{\overline{Y}<c}\right]}\ge 0. 
	\end{equation}
\end{lemma}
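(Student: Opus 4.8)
\textbf{Proof proposal for Lemma~\ref{lem:first_lower_bound_on_p_a}.}

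The plan is to start from the inequality \eqref{eq:Using_Derivative_is_zeros_at_x_0} applied at the second-largest support point $x_0$ (which is neither $0$ nor $\sfA$ since $|\supp(P_{X^\star})|\ge 3$), and then split the expectation $\bbE[\log(\bbE[X^\star|Y^\star=\overline{Y}+1])]$ into the regions $\{\overline{Y}<c\}$ and $\{\overline{Y}\ge c\}$, where $\overline{Y}\sim\mathcal{P}(x_0)$. On the tail region $\{\overline{Y}\ge c\}$ I would use the crude bound $\bbE[X^\star|Y^\star=\cdot]\le\sfA$, so that $\log\bbE[X^\star|Y^\star=\overline{Y}+1]\le\log\sfA$; this contributes at most $\log(\sfA)\,\bbP[\overline{Y}\ge c]\le \log(\sfA)/\sfA$ by the hypothesis on $c$. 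On the bulk region $\{\overline{Y}<c\}$ I would lower-bound the conditional expectation by keeping only the contribution of the mass point at $\sfA$: since $\sfA\in\supp(P_{X^\star})$ (shown in \cite{cao2014capacity}),
\begin{equation}
\bbE[X^\star|Y^\star=k]=\sum_{x\in\supp(P_{X^\star})}x\,P_{X^\star|Y^\star}(x|k)\ge \sfA\,P_{X^\star|Y^\star}(\sfA|k)=\sfA\,\frac{P_{Y|X}(k|\sfA)P_{X^\star}(\sfA)}{P_{Y^\star}(k)}.
\end{equation}
Taking logs and using $\log t\ge 1-1/t$ converts the bulk term into something linear in the ratio $P_{Y^\star}(\overline{Y}+1)/(\sfA P_{Y|X}(\overline{Y}+1|\sfA)P_{X^\star}(\sfA))$, whose expectation over $\overline{Y}<c$ is exactly the quantity $\bbE[P_{Y|X}(\overline{Y}+1|\sfA)/P_{Y^\star}(\overline{Y}+1)\,\mathds{1}_{\overline{Y}<c}]$ appearing in the denominator of the claimed bound.

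Concretely, after these two substitutions \eqref{eq:Using_Derivative_is_zeros_at_x_0} becomes an inequality of the form
\begin{equation}
0\ge -\Bigl(\text{linear lower bound of the bulk log term}\Bigr)-\frac{1}{\sfA}\log\sfA+\log x_0+\frac{1}{x_0},
\end{equation}
and I would rearrange it to isolate $P_{X^\star}(\sfA)$. Grouping the $\log\sfA$ and $\log x_0$ terms gives the factor $\log(\sfA/x_0)$, grouping the ``$1$'' from $\log t\ge 1-1/t$ against the $1/\sfA$ weight of the bulk region gives the $(\sfA/x_0-1)$-type prefactor on the denominator (here using $\bbP[\overline{Y}<c]\le 1$ where needed), and solving the resulting linear inequality for $P_{X^\star}(\sfA)$ yields
\begin{equation}
P_{X^\star}(\sfA)\ge \frac{-\frac{1}{\sfA}\log\bigl(\frac{\sfA}{x_0}\bigr)+\frac{1}{x_0}}{\bigl(\frac{\sfA}{x_0}-1\bigr)\,\bbE\left[\frac{P_{Y|X}(\overline{Y}+1|\sfA)}{P_{Y^\star}(\overline{Y}+1)}\mathds{1}_{\overline{Y}<c}\right]}.
\end{equation}
Nonnegativity of the right-hand side follows because $x_0\le\sfA$ forces $\log(\sfA/x_0)\ge 0$, and one checks $-\frac{1}{\sfA}\log(\sfA/x_0)+\frac{1}{x_0}\ge 0$ from $x_0\le\sfA$ (equivalently from the fact that $f(x)=\log(x/\sfA)+1/x$ is nonpositive at the interior zero $x_0$, exactly \eqref{eq:logequationtosolve}), while the denominator is a positive expectation.

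The main obstacle I anticipate is bookkeeping the region-splitting cleanly: I need to make sure the constant ``$1$'' coming out of $\log t\ge 1-1/t$ on the bulk and the $\log\sfA$ contribution from the tail are combined with the correct probability weights so that the final denominator is precisely $(\sfA/x_0-1)\,\bbE[\cdots\mathds{1}_{\overline{Y}<c}]$ and not something with stray factors of $\bbP[\overline{Y}<c]$ or $\bbP[\overline{Y}\ge c]$. A secondary delicate point is justifying that dropping all support points other than $\sfA$ in the conditional expectation is legitimate (it is, since all points are nonnegative), and that the identity \eqref{eq:Using_identity_for_second_derivative_v1} from Lemma~\ref{lem:Information_Density_Derivatives} genuinely applies at $x_0$ — which it does because $x_0$ is an interior maximizer of $i(\cdot;P_{X^\star})-C(\sfA)$ by the KKT conditions. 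Everything else is elementary algebra on a single linear inequality.
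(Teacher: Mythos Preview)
Your plan has the inequality running the wrong direction on the bulk region, and this is not just a bookkeeping issue. The starting point \eqref{eq:Using_Derivative_is_zeros_at_x_0} says
\[
\bbE\bigl[\log\bbE[X^\star\!\mid Y^\star=\overline{Y}+1]\bigr]\ \ge\ \log x_0+\tfrac{1}{x_0}.
\]
To force $p_\sfA:=P_{X^\star}(\sfA)$ to be large you need an \emph{upper} bound on the left-hand side that is an increasing function of $p_\sfA$. Your step ``keep only the contribution of the mass point at $\sfA$'' gives instead a \emph{lower} bound $\bbE[X^\star|Y^\star=k]\ge \sfA\,p_\sfA\,P_{Y|X}(k|\sfA)/P_{Y^\star}(k)$, and after $\log$ and $\log t\ge 1-1/t$ you still have a lower bound on $\log\bbE[X^\star|Y^\star=\overline{Y}+1]$. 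Combining that with the lower bound already furnished by \eqref{eq:Using_Derivative_is_zeros_at_x_0} yields no constraint on $p_\sfA$. Concretely, in your displayed inequality ``$0\ge -(\text{linear lower bound of bulk})-\frac{1}{\sfA}\log\sfA+\log x_0+\frac{1}{x_0}$'' the first substitution replaces $-\bbE[\log(\cdot)]$ by something \emph{larger}, which does not preserve ``$0\ge$''. (You can also see the confusion in the sentence that follows: the $\log t\ge 1-1/t$ step produces the ratio $P_{Y^\star}/P_{Y|X}$, the reciprocal of the quantity $\bbE[P_{Y|X}/P_{Y^\star}\,\mathds{1}_{\overline{Y}<c}]$ you want in the denominator.)

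What the paper does instead is exploit that $x_0$ is the \emph{second-largest} support point: every $x\in\supp(P_{X^\star})\setminus\{\sfA\}$ satisfies $x/x_0\le 1$, so
\[
\bbE\Bigl[\tfrac{X^\star}{x_0}\Bigm|Y^\star\Bigr]
= \bbE\Bigl[\tfrac{X^\star}{x_0}\mathds{1}_{X^\star\le x_0}\Bigm|Y^\star\Bigr]+\tfrac{\sfA}{x_0}\,p_\sfA\,\tfrac{P_{Y|X}(Y^\star|\sfA)}{P_{Y^\star}(Y^\star)}
\ \le\ 1+\Bigl(\tfrac{\sfA}{x_0}-1\Bigr)p_\sfA\,\tfrac{P_{Y|X}(Y^\star|\sfA)}{P_{Y^\star}(Y^\star)} ,
\]
an \emph{upper} bound that is increasing in $p_\sfA$. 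Then Jensen (concavity of $\log$) and the tail bound $\bbE[X^\star/x_0|Y^\star]\le \sfA/x_0$ give
\[
0\ \ge\ -\bbP[\overline{Y}<c]\,\log\!\Bigl(1+\bigl(\tfrac{\sfA}{x_0}-1\bigr)p_\sfA\,\bbE\bigl[\tfrac{P_{Y|X}(\overline{Y}+1|\sfA)}{P_{Y^\star}(\overline{Y}+1)}\bigm|\overline{Y}<c\bigr]\Bigr)
-\bbP[\overline{Y}\ge c]\log\tfrac{\sfA}{x_0}+\tfrac{1}{x_0},
\]
which one solves for $p_\sfA$ and then simplifies via $\rme^x\ge 1+x$ and $\bbP[\overline{Y}\ge c]\le 1/\sfA$. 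The missing idea in your proposal is precisely this upper bound on $\bbE[X^\star/x_0|Y^\star]$; dropping the non-$\sfA$ contributions goes the wrong way.
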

\begin{proof}
	Let 	$p_\sfA =P_{X^\star}(\sfA)$. Then, starting with \eqref{eq:Using_Derivative_is_zeros_at_x_0}, we have that 
	\begin{align}
	0&\ge 
	-\bbE \left[ \log  \left( \bbE\left[\frac{X^{\star}}{x_0} \middle |Y^\star=\overline{Y}+1\right]  \right)\right] +\frac{1}{x_0} \\
	&= 
	-\bbE \left[( \mathds{1}_{\overline{Y}\ge c}+\mathds{1}_{\overline{Y}< c}) \log  \left( \bbE\left[\frac{X^{\star}}{x_0} \middle |Y^\star=\overline{Y}+1\right]  \right)\right] +\frac{1}{x_0} \\
	&\ge  
	-\bbP\left[\overline{Y}<c\right] \bbE \left[ \log  \left( \bbE\left[\frac{X^{\star}}{x_0}  \middle |Y^\star=\overline{Y}+1\right]  \right) 
	\middle | \overline{Y}< c\right] -\bbP \left[\overline{Y}\ge c \right]\log\left(\frac{\sfA}{x_0}\right)+\frac{1}{x_0} \label{eq:Bounding E[X|Y]<A} \\
	&= -\bbP \left[ \overline{Y}<c \right]\bbE \left[ \log  \left( \left(\frac{\sfA}{x_0}-1\right) \, p_\sfA \, \frac{P_{Y|X}(\overline{Y}+1|\sfA)}{P_{Y^\star}(\overline{Y}+1)}+1  \right) \middle |\overline{Y}<c\right] -\bbP\left[\overline{Y}\ge c \right]\log\left(\frac{\sfA}{x_0}\right)+\frac{1}{x_0}  \label{eq:bouning_CE_with-x_0} \\
	&\ge -\bbP \left[\overline{Y}<c \right] \log  \left( \left(\frac{\sfA}{x_0}-1\right)  \,  p_\sfA \,  \bbE \left[\frac{P_{Y|X}(\overline{Y}+1|\sfA)}{P_{Y^\star}(\overline{Y}+1)} \middle |\overline{Y}<c\right]+1  \right) -\bbP \left[\overline{Y}\ge c \right] \log\left(\frac{\sfA}{x_0}\right)+\frac{1}{x_0}, \label{eq:Using_jensens_inequality_proof_pa_lowerbound}
	\end{align}
	where in \eqref{eq:Bounding E[X|Y]<A} we have used the bound $ \bbE\left[\frac{X^{\star}}{x_0} \middle |Y^\star\right]  \le  \frac{\sfA}{x_0}$;  and \eqref{eq:bouning_CE_with-x_0}  follows by first using that
	\begin{equation}
	  \bbE\left[\frac{X^{\star}}{x_0} \middle |Y^\star\right] =\bbE\left[ \frac{X^{\star}}{x_0}  \mathds{1}_{X^{\star} \le x_0} \middle |Y^\star \right] +\bbE\left[ \frac{X^{\star}}{x_0}  \mathds{1}_{X^{\star} =\sfA } \middle|Y^\star \right] =\bbE\left[ \frac{X^{\star}}{x_0}  \mathds{1}_{X^{\star} \le x_0} \middle |Y^\star \right]+ \frac{\sfA}{x_0} \, p_\sfA \, \frac{P_{Y|X}(Y^\star|\sfA)}{P_{Y^\star}(Y^\star)},
	\end{equation} 
	and then bounding the term $\bbE\left[ \frac{X^{\star}}{x_0}  \mathds{1}_{X^{\star} \le x_0}   \middle |Y^\star \right]$ as follows:
	\begin{align}
	\bbE\left[ \frac{X^{\star}}{x_0}  \mathds{1}_{X^{\star} \le x_0} \middle |Y^\star \right]&\le  \bbE \left[ \mathds{1}_{X^{\star} \le x_0} \middle |Y^\star \right]\\
	&= \mathbb{P}\left[X^{\star} \le x_0 \middle |Y^\star \right]\\
	&=  1-  P_{X^\star|Y^\star}(\sfA|Y^\star)\\
	&=1-p_\sfA \, \frac{P_{Y|X}(Y^\star|\sfA)}{P_{Y^\star}(Y^\star)};
	\end{align}
	 and \eqref{eq:Using_jensens_inequality_proof_pa_lowerbound} follows by using Jensen's inequality. 
	
	Solving  \eqref{eq:Using_jensens_inequality_proof_pa_lowerbound} for $p_\sfA$, we obtain 
	\begin{equation}
	p_\sfA \ge \frac{\exp\left(\left(-\bbP \left[\overline{Y}\ge c \right]\log\bigl(\frac{\sfA}{x_0}\bigr)+\frac{1}{x_0}\right) \, \frac{1}{\bbP \left[\overline{Y}<c\right]}\right)-1}{\bigl(\frac{\sfA}{x_0}-1\bigr)   \bbE \left[\frac{P_{Y|X}(\overline{Y}+1|\sfA)}{P_{Y^\star}(\overline{Y}+1)} \middle |\overline{Y}<c\right]}. \label{eq:Solving_p_A_firstS-tep}
	\end{equation}
	Now, the assumption $\bbP \left[\overline{Y}>c\right] \le \frac{1}{\sfA}$ guarantees that the argument of the exponential  in \eqref{eq:Solving_p_A_firstS-tep} is positive, which leads to 
	\begin{align}
p_\sfA	&\ge\bbP \left[\overline{Y}<c \right]\frac{\exp\left(\left(-\bbP \left[\overline{Y}\ge c \right]\log\bigl(\frac{\sfA}{x_0}\bigr)+\frac{1}{x_0}\right)\frac{1}{\bbP \left[\overline{Y}<c\right]}\right)-1}{\bigl(\frac{\sfA}{x_0}-1\bigr)   \bbE \left[\frac{P_{Y|X}(\overline{Y}+1|\sfA)}{P_{Y^\star}(\overline{Y}+1)}\mathds{1}_{\overline{Y}<c}\right]}  \\
	&\ge \frac{-\bbP \left[\overline{Y}\ge c\right]\log\bigl(\frac{\sfA}{x_0}\bigr)+\frac{1}{x_0}}{\bigl(\frac{\sfA}{x_0}-1\bigr)   \bbE \left[\frac{P_{Y|X}(\overline{Y}+1|\sfA)}{P_{Y^\star}(\overline{Y}+1)}\mathds{1}_{\overline{Y}<c}\right]} \label{eq:exp>x+1 bound}\\
	&\ge \frac{-\frac{1}{\sfA}\log\bigl(\frac{\sfA}{x_0}\bigr)+\frac{1}{x_0}}{\bigl(\frac{\sfA}{x_0}-1\bigr)   \bbE \left[\frac{P_{Y|X}(\overline{Y}+1|\sfA)}{P_{Y^\star}(\overline{Y}+1)}\mathds{1}_{\overline{Y}<c}\right]}, \label{eq:Bounding_by_1/A}
	\end{align} 
	where in \eqref{eq:exp>x+1 bound} we have used the inequality  $\exp(x)\ge x+1$; and in \eqref{eq:Bounding_by_1/A} we have used that  $\bbP \left[ \overline{Y}\ge c \right] \le \frac{1}{\sfA}$.  This concludes the proof. 
\end{proof}

To find an explicit lower bound on $P_{X^*}(\sfA)$, we need to find an upper bound on $ \bbE \left[\frac{P_{Y|X}(\overline{Y}+1|\sfA)}{P_{Y^\star}(\overline{Y}+1)}\mathds{1}_{\overline{Y}<c}\right]$.  The next lemma provides such an upper bound. 

\begin{lemma} \label{lem:bound_on_the_ratio} 
Let $x_0 =\max \{  \supp(P_{X^\star})\setminus  \{ \sfA \} \}$, and suppose that the following conditions hold:
\begin{align}
|\supp(P_{X^\star})| & \ge 3,\\
1-3\:\rme^{- \frac{C(\sfA)}{ 1-\rme^{-\sfA}}  } &> 0,\\
c &<  x_0 \sfA+1 .
\end{align} 
Then,
\begin{align}
		\bbE \left[ \frac{ P_{Y|X}(\overline{Y}+1|\sfA)}{P_{Y^\star}(\overline{Y}+1)} \mathds{1}_{\bar{Y}< c} \right]  \le  \frac{1}{1-3\:\rme^{- \frac{C(\sfA)}{ 1-\rme^{-\sfA}}  }}\cdot \left\{
		\begin{array}{lc}
		x_0  \sfA^{x_0-1}, & c \le x_0 \\
		    x_0 \sfA^{x_0-1}+\sfA \rme^{-\sfA+1-x_0+c} \,\frac{1 }{\sqrt{2\pi c}(x_0 \sfA-c+1)} \left(\frac{x_0 \sfA}{c}\right)^c , & c>x_0,
		\end{array}
		\right. \label{eq:Bound_on_Renyi_type_term}
	\end{align}
	where $\overline{Y}\sim{\cal P}(x_0)$.
\end{lemma}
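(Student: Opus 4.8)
The plan is to obtain the estimate by lower‑bounding the capacity‑achieving output pmf $P_{Y^\star}(k)$ for $k\ge 1$ using only the support points that sit in $[1,x_0]$, whose total mass is controlled by the universal bound \eqref{Eq:Bound_On_Probability_Values_poisson}. First I would note that the three hypotheses force $|\supp(P_{X^\star})|\ge 4$: by \eqref{eq:LowerBound_on_Number_of_Points} we have $|\supp(P_{X^\star})|\ge \rme^{C(\sfA)/(1-\rme^{-\sfA})}>3$ since $1-3\rme^{-C(\sfA)/(1-\rme^{-\sfA})}>0$. Consequently \eqref{eq:Bound_on_other_points} and \eqref{eq:LowerBound_Other_Points} apply, so $x_0\ge 1$, every point of $\supp(P_{X^\star})\setminus\{0,\sfA\}$ lies in $[\rme^{-\sqrt{2(\log\sfA-1)}},x_0]$, and at most one of them lies in $(0,1)$. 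Writing $S:=\supp(P_{X^\star})\cap[1,x_0]$ and applying \eqref{Eq:Bound_On_Probability_Values_poisson} to the point $0$, to $\sfA$, and (if present) to the single point of $\supp(P_{X^\star})$ in $(0,1)$ gives $P_{X^\star}(S)\ge 1-3\rme^{-C(\sfA)/(1-\rme^{-\sfA})}=:\beta>0$.

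Next, for $k\ge 1$ I would use $P_{Y^\star}(k)\ge\sum_{x\in S}P_{X^\star}(x)\tfrac{x^k\rme^{-x}}{k!}$ and bound the summand from below by $1\cdot\rme^{-x_0}$ (valid since $1\le x\le x_0$ on $S$), giving $P_{Y^\star}(k)\ge\beta\,\rme^{-x_0}/k!$; dividing the Poisson pmf $P_{Y|X}(k|\sfA)=\sfA^k\rme^{-\sfA}/k!$ by this makes the factorials cancel and yields $\tfrac{P_{Y|X}(k|\sfA)}{P_{Y^\star}(k)}\le\sfA^k\rme^{x_0-\sfA}/\beta$. Substituting $k=\overline Y+1$ with $\overline Y\sim\mathcal P(x_0)$ then gives
\begin{equation}
\bbE\!\left[\frac{P_{Y|X}(\overline Y+1|\sfA)}{P_{Y^\star}(\overline Y+1)}\,\mathds 1_{\overline Y<c}\right]\le\frac{\sfA\,\rme^{-\sfA}}{\beta}\sum_{y<c}\frac{(\sfA x_0)^y}{y!}.
\end{equation}
Here the assumption $c<x_0\sfA+1$ ensures every index in the partial sum lies at or left of the peak index $\sfA x_0$, so the terms are nondecreasing in $y$; this lets me bound $\sum_{y<c}\tfrac{(\sfA x_0)^y}{y!}$ by a backward geometric comparison (ratio $\le c/(\sfA x_0)$), reducing it to the last term $\tfrac{(\sfA x_0)^{\lceil c\rceil-1}}{(\lceil c\rceil-1)!}$ times $\tfrac{\sfA x_0}{\sfA x_0-c+1}$, and then Stirling's inequality converts the factorial into the $\tfrac{1}{\sqrt{2\pi c}}(\sfA x_0/c)^c$ shape. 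For $c\le x_0$ the last index is at most $\lceil x_0\rceil-1$, so I monotonically replace $c$ by $x_0$ in the sum and absorb the residue into $x_0\sfA^{x_0-1}$ using $\sfA\rme^{-\sfA}\le\rme^{-1}<x_0!$. For $c>x_0$ I split the range at $y=x_0$: the part $y<x_0$ reproduces the term $x_0\sfA^{x_0-1}$, and the part $x_0\le y<c$, after the same geometric/Stirling estimate, gives the second summand, with the factor $\rme^{-\sfA+1-x_0+c}$ arising from the bookkeeping $\sfA\rme^{-\sfA}\cdot\rme^{-x_0}\cdot\rme^{c}$.

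The main obstacle I anticipate is ensuring the chosen lower bound on $P_{Y^\star}$ is strong enough over the whole truncation range: the crude bound $P_{Y^\star}(k)\ge\beta\rme^{-x_0}/k!$ loses the most exactly when $x_0$ is comparable to $\sfA$ and $\overline Y$ (hence $k$) concentrates near $x_0$, so I must verify that the factor $\rme^{x_0-\sfA}$ picked up in the ratio is compensated by the truncation $\overline Y<c$ together with $c<x_0\sfA+1$ and the slack $\sfA\rme^{-\sfA}<1$; if it is not, one falls back on $P_{Y^\star}(k)\ge P_{X^\star}(x_0)\,\tfrac{x_0^k\rme^{-x_0}}{k!}$, which requires carrying $x_0^k$ through the estimate. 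A secondary nuisance is the ceiling/floor bookkeeping for $c$ and $x_0$; the cleanest route is to state the intermediate inequalities with $\lceil\cdot\rceil$ (or $\Gamma(\cdot+1)$) in place of factorials and only pass to the displayed closed form at the end.
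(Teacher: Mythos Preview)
Your overall strategy---lower-bound $(k{+}1)!P_{Y^\star}(k{+}1)$ by $\beta$ times the minimum of $x^{k+1}\rme^{-x}$ over $x\in[1,x_0]$, then reduce to a truncated exponential series in $\sfA x_0$---is exactly the paper's route. The gap is that your single uniform bound $x^{k+1}\rme^{-x}\ge\rme^{-x_0}$ is too weak to reproduce the constants in the lemma, and your bookkeeping at the end silently assumes the sharper bound. The paper uses a \emph{two-regime} estimate: for $k{+}1\le x_0$ it splits the support at $x=k{+}1$ (where $x^{k{+}1}\rme^{-x}$ peaks) and shows $(k{+}1)!P_{Y^\star}(k{+}1)\ge\beta\,\sfA\rme^{-\sfA}$; for $k{+}1>x_0$ the map $x\mapsto x^{k{+}1}\rme^{-x}$ is increasing on all of $[1,x_0]$, so the minimum is $\rme^{-1}$, giving $(k{+}1)!P_{Y^\star}(k{+}1)\ge\beta\,\rme^{-1}$. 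It is precisely this $\rme^{-1}$ (not your $\rme^{-x_0}$) in the range $k{+}1>x_0$ that produces the factor $\rme^{-\sfA+1-x_0+c}$ in the second case.

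Concretely: from your displayed inequality $\bbE[\cdot]\le\frac{\sfA\rme^{-\sfA}}{\beta}\sum_{y<c}\frac{(\sfA x_0)^y}{y!}$, the $\rme^{-x_0}$ from $P_{Y^\star}$ has already cancelled against the $\rme^{-x_0}$ inside $P_{\bar Y}(y)$, so there is no remaining $\rme^{-x_0}$ to feed into the ``bookkeeping $\sfA\rme^{-\sfA}\cdot\rme^{-x_0}\cdot\rme^{c}$''. After the geometric/Stirling step you obtain $\sfA\rme^{-\sfA+c}$ in the exponent, which is larger than the claimed $\sfA\rme^{-\sfA+1-x_0+c}$ by a factor $\rme^{x_0-1}$. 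The same weakness bites in the $c\le x_0$ case: your bound is $\sfA\rme^{x_0-\sfA}$ times the paper's intermediate expression $\frac{1}{\beta}\sum_{k<c}P_{\bar Y}(k)\sfA^k$, and since one only knows $x_0\le\sfA-1$, this extra factor can be as large as $\sfA/\rme$, so the crude estimate $\sum_{k<c}P_{\bar Y}(k)\sfA^k\le c\,\sfA^{c-1}$ that closes the paper's argument no longer lands on $x_0\sfA^{x_0-1}$. The fix is exactly the paper's: use $\rme^{-1}$ (resp.\ $\sfA\rme^{-\sfA}$) rather than $\rme^{-x_0}$ as the lower bound on $x^{k+1}\rme^{-x}$ in the regime $k{+}1>x_0$ (resp.\ $k{+}1\le x_0$).
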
 
\begin{proof} 
See Appendix~\ref{app:lem:bound_on_the_ratio}.
\end{proof}

The final result that we require is the following bound on the tail of the Poisson random variable.

\begin{lemma}\label{lem:bound_poisson_tail} For  $\sfA\ge \rme$ and $x_0 \ge 1$ 
\begin{equation}
\bbP \left[ \overline{Y} \ge \rme \: x_0 \log(\sfA) \right]  
\le \left( \frac{1 }{ \log(\sfA)} \right)^{ \rme  \log(\sfA)}  \rme ^{-1}\le \frac{1}{\sfA},
\end{equation}
 where   $\overline{Y}\sim{\cal P}(x_0)$.
\end{lemma}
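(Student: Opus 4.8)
The plan is to bound the upper tail of a Poisson random variable $\overline{Y}\sim\mathcal{P}(x_0)$ at the level $t=\rme\,x_0\log(\sfA)$ using a standard Chernoff-type argument, and then check the elementary inequality $\left(\frac{1}{\log(\sfA)}\right)^{\rme\log(\sfA)}\rme^{-1}\le \frac{1}{\sfA}$. The key observation is that the deviation level is a \emph{multiplicative} blow-up of the mean: $t = \lambda\cdot\left(\rme\log(\sfA)\right)$ where $\lambda = x_0$ is the mean. For such levels the classical Chernoff bound for Poisson random variables gives, for any $\delta>0$,
\begin{equation}
\bbP\left[\overline{Y}\ge (1+\delta)\lambda\right]\le \rme^{-\lambda}\,\frac{\left(\rme\lambda(1+\delta)\right)^{(1+\delta)\lambda}}{\left((1+\delta)\lambda\right)^{(1+\delta)\lambda}}\cdot\text{(rearranged)}=\left(\frac{\rme^{\delta}}{(1+\delta)^{1+\delta}}\right)^{\lambda}.
\end{equation}
Setting $(1+\delta)\lambda = \rme\,x_0\log(\sfA)$, i.e. $1+\delta = \rme\log(\sfA)$, the bound becomes $\left(\frac{\rme^{\rme\log(\sfA)-1}}{(\rme\log(\sfA))^{\rme\log(\sfA)}}\right)^{x_0}=\left(\rme^{-1}(\log(\sfA))^{-\rme\log(\sfA)}\right)^{x_0}$. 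Since $\sfA\ge\rme$ forces $\log(\sfA)\ge 1$ and $x_0\ge 1$, the base $\rme^{-1}(\log(\sfA))^{-\rme\log(\sfA)}$ lies in $(0,1]$, so raising it to the power $x_0\ge 1$ only makes it smaller; hence $\bbP\left[\overline{Y}\ge \rme\,x_0\log(\sfA)\right]\le \rme^{-1}(\log(\sfA))^{-\rme\log(\sfA)}=\left(\frac{1}{\log(\sfA)}\right)^{\rme\log(\sfA)}\rme^{-1}$, which is the first claimed inequality.

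First I would derive the Poisson Chernoff bound cleanly: using $\bbE[\rme^{s\overline{Y}}]=\rme^{\lambda(\rme^s-1)}$ and Markov's inequality, $\bbP[\overline{Y}\ge t]\le \rme^{\lambda(\rme^s-1)-st}$ for all $s>0$, and optimizing over $s$ (the minimizer is $\rme^s = t/\lambda$, valid since $t\ge\lambda$) yields $\bbP[\overline{Y}\ge t]\le \rme^{-\lambda}(\rme\lambda/t)^{-t}\cdot\rme^{\text{correction}}$; cleaned up this is $\bbP[\overline{Y}\ge t]\le \left(\frac{\rme\lambda}{t}\right)^{t}\rme^{-\lambda}$. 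Plugging in $\lambda=x_0$ and $t=\rme\,x_0\log(\sfA)$ gives $\left(\frac{\rme x_0}{\rme x_0\log(\sfA)}\right)^{\rme x_0\log(\sfA)}\rme^{-x_0}=\left(\frac{1}{\log(\sfA)}\right)^{\rme x_0\log(\sfA)}\rme^{-x_0}$. Then I would invoke monotonicity in $x_0$: since the base is in $(0,1]$ and $x_0\ge 1$, this is at most $\left(\frac{1}{\log(\sfA)}\right)^{\rme\log(\sfA)}\rme^{-1}$.

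For the second inequality $\left(\frac{1}{\log(\sfA)}\right)^{\rme\log(\sfA)}\rme^{-1}\le \frac{1}{\sfA}$, I would take logarithms: it is equivalent to $-\rme\log(\sfA)\log\log(\sfA)-1\le -\log(\sfA)$, i.e. $\log(\sfA)\le \rme\log(\sfA)\log\log(\sfA)+1$. Writing $u=\log(\sfA)\ge 1$, this reads $u\le \rme\,u\log u + 1$, which follows because $\rme\,u\log u\ge 0$ for $u\ge 1$ and the inequality $u\le 1$ fails only when... actually one must be slightly careful near $u=1$: at $u=1$ the right side is $1$ and the left side is $1$, so equality holds; for $u>1$ we have $\rme u\log u>0$ and we need $\rme u\log u\ge u-1$, which holds since $\rme u\log u\ge u\log u\ge (u-1)$ for $u\ge 1$ (using $\log u\ge 1-1/u$, so $u\log u\ge u-1$). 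This closes the chain.

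The main obstacle is purely bookkeeping: getting the Poisson Chernoff bound into exactly the stated form without extraneous constants, and handling the boundary case $x_0=1$, $\sfA=\rme$ where both inequalities become tight. There is no deep difficulty — the only subtlety is making sure the monotonicity-in-$x_0$ step is legitimate, which it is precisely because $\sfA\ge\rme$ guarantees the base $\left(\frac{1}{\log(\sfA)}\right)^{\rme\log(\sfA)}\rme^{-1}\le 1$ (indeed the second inequality, once proved, confirms the base is at most $1/\sfA\le 1/\rme<1$, but one should establish the base $\le 1$ directly and earlier, simply from $\log(\sfA)\ge 1$). I would present the two inequalities as a short two-step computation following the Chernoff estimate.
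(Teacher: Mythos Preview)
Your proposal is correct and follows essentially the same route as the paper: the paper cites the Poisson tail bound $\bbP[\overline{Y}\ge c]\le (\rme x_0)^c\rme^{-x_0}/c^c$ (which is exactly the Chernoff bound you derive by hand), substitutes $c=\rme x_0\log(\sfA)$, and then uses the monotonicity in $x_0\ge 1$ under $\sfA\ge\rme$ just as you do. The only difference is cosmetic: you spell out the Chernoff optimization and the verification of the final inequality $u-1\le \rme u\log u$, whereas the paper cites the former and merely asserts the latter.
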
 
\begin{proof}
See Appendix~\ref{app:lem:bound_poisson_tail}. 
\end{proof}

With Lemma~\ref{lem:first_lower_bound_on_p_a}, Lemma~\ref{lem:bound_on_the_ratio}, and Lemma~\ref{lem:bound_poisson_tail} at our disposal, we are now ready  to provide an explicit bound on $P_{X^*}(\sfA)$.    Next, we make a choice of $c$ and verify that it satisfies the conditions of the aforementioned lemmas.  

To that end, choose $c=x_0 \rme \log(\sfA)$.  Moreover,  assume that $|\supp(P_{X^*})| \ge 4$, which by Lemma~\ref{lem:Bound_Size_of_the_Support} guarantees that $x_0 \ge 1$.    This choice of $c$ satisfies the condition of Lemma~\ref{lem:first_lower_bound_on_p_a}  since  $|\supp(P_{X^*})| \ge 4$ implies that $\sfA >\bar{\sfA}\approx 3.4$. Furthermore, we assume that $1-3\:\rme^{- \frac{C(\sfA)}{ 1-\rme^{-\sfA}}  } > 0 $, which is needed in Lemma~\ref{lem:bound_on_the_ratio}. 
Now using the bound in \eqref{eq:LowerBound_on_Number_of_Points}, both   $1-3\:\rme^{- \frac{C(\sfA)}{ 1-\rme^{-\sfA}}  } > 0 $ and  $|\supp(P_{X^*})| \ge 4$ hold provided that  $\rme^{ \frac{C(\sfA)}{ 1-\rme^{-\sfA}}  } \ge 4$. 
  To summarize, in order to proceed with the proof,  we require that 
\begin{align}
\rme^{ \frac{C(\sfA)}{ 1-\rme^{-\sfA}}  } &\ge 4. 
\end{align}

Having verified the conditions of Lemma~\ref{lem:first_lower_bound_on_p_a} and Lemma~\ref{lem:bound_on_the_ratio}, the bound on $P_{X^*}(\sfA)$ now proceeds as follows: 
	\begin{align}
	P_{X^*}(\sfA)
	&\ge \frac{- \frac{1}{\sfA} \log\bigl(\frac{\sfA}{x_0}\bigr)+\frac{1}{x_0}}{\bigl(\frac{\sfA}{x_0}-1\bigr)   \bbE \left[\frac{P_{Y|X}(\overline{Y}+1|\sfA)}{P_{Y^\star}(\overline{Y}+1)}\mathds{1}_{\overline{Y}<c}\right]} \label{eq:First_step_bounding_explicit_p_a}\\
&\ge	\frac{\sfA^{-2}}{ \bbE \left[\frac{P_{Y|X}(\overline{Y}+1|\sfA)}{P_{Y^\star}(\overline{Y}+1)}\mathds{1}_{\overline{Y}<c}\right]} \label{eq:using_monotonicity_ -(1/A)ln(A/x)+1/x}\\
	&\ge  \frac{\sfA^{-2} \Bigl (1-3\:\rme^{- \frac{C(\sfA)}{ 1-\rme^{-\sfA}}  } \Bigr)}{   x_0 \sfA^{x_0-1}+\sfA \rme^{-\sfA+1-x_0+c} \, \frac{1 }{\sqrt{2\pi c}(x_0 \sfA-c+1)} \left(\frac{x_0 \sfA}{c}\right)^c } \label{eq:bounding_the_denominator}\\
	&\ge  \frac{\sfA^{-2}  \Bigl(1-3\:\rme^{- \frac{C(\sfA)}{ 1-\rme^{-\sfA}}  } \Bigr) }{x_0  \sfA^{x_0-1}+\sfA \rme^{-\sfA+1-x_0+x_0\rme  \log(\sfA)} \, \bigl(\frac{x_0 \sfA}{x_0\rme  \log(\sfA)}\bigr)^{x_0\rme  \log(\sfA)} } \label{eq:bounding_with_choice_c} \\
	&\ge  \frac{1-3\:\rme^{- \frac{C(\sfA)}{ 1-\rme^{-\sfA}}  } }{2\sfA^{2\sfA \rme  \log(\sfA)+2} \, \rme^{-2\sfA+1}  } ,\label{eq:Bounding_the_messy_algebra}
	\end{align}
where \eqref{eq:First_step_bounding_explicit_p_a} is due to the bound in Lemma~\ref{lem:first_lower_bound_on_p_a};	 \eqref{eq:using_monotonicity_ -(1/A)ln(A/x)+1/x} follows from the upper bound $\frac{\sfA}{x_0}-1 \le \sfA$ because $x_0\ge 1$, and the fact  $x \mapsto -\frac{1}{\sfA} \log\left(\frac{\sfA}{x}\right)+\frac{1}{x}$ is minimized at $x=\sfA$; \eqref{eq:bounding_the_denominator} follows by using the bound in \eqref{eq:Bound_on_Renyi_type_term}; \eqref{eq:bounding_with_choice_c} follows from the choice of $c=x_0 \rme \log(\sfA)$  and the bound $\frac{1 }{\sqrt{2\pi c}(x_0 \sfA-c+1)} \le 1$; and \eqref{eq:Bounding_the_messy_algebra} follows by using $1 \le x_0 \le \sfA$, which implies that 
	\begin{equation}
	x_0  \sfA^{x_0-1}+\sfA \rme^{-\sfA+1-x_0+x_0\rme  \log(\sfA)} \,\left(\frac{x_0 \sfA}{x_0\rme  \log(\sfA)}\right)^{x_0\rme  \log(\sfA)} \le 2\sfA^{2\sfA \rme  \log(\sfA)} \, \rme^{-2\sfA+1} . 
	\end{equation}  
This concludes the proof of the lower bound on $P_{X^*}(\sfA)$. 

\begin{rem} To arrive at the bound in \eqref{eq:Bounding_the_messy_algebra}, we have used $x_0 \ge 1$, which requires an assumption that $\rme^{ \frac{C(\sfA)}{ 1-\rme^{-\sfA}}  } \ge 4$.  Alternatively, we could have used the bound  $x_0 \ge  \rme^{-\sqrt{2(\log(\sfA)-1)}} $ in \eqref{eq:Bounds_on_smalles_and_lagrest}, which only requires an assumption that  $\rme^{ \frac{C(\sfA)}{ 1-\rme^{-\sfA}}  } > 3$, and would result in 
\begin{equation}
P_{X^*}(\sfA) \ge \frac{1-3\:\rme^{- \frac{C(\sfA)}{ 1-\rme^{-\sfA}}  } }{2\sfA^{2\sfA \rme  \log(\sfA) \exp \left(-\sqrt{2(\log(\sfA)-1)} \right)+2} \,  \exp \left( -2\sfA+1 \right)  } .
\end{equation}
However, asymptotically this bound is weaker than that  in  \eqref{eq:Bounding_the_messy_algebra}.
\end{rem}

\subsection{Proof of the Upper Bound in \eqref{eq:Bounds_on_the_Number_Of_Poitns}}
 In this section, we establish an upper bound on the number of the support points.  Specifically, we will establish a bound of the order $\sfA  \log^2(\sfA)$. To show the bound on the number of points, we need to first present a number of ancillary results. The first result that we will need allows us to bound the number of zeros of $f$ with the number of zeros of $f'$. 
\begin{lemma}\label{lem:extreme points and oscillations}
Suppose that $f$ is continuous  on $[-R,R]$ and differentiable on $(-R,R)$. If $\rmN([-R,R],f) <\infty$, then 
\begin{equation}
  \rmN([-R,R],f) \le   \rmN([-R,R],f ') + 1 \text{,}
\end{equation}
where $f'$ denotes the derivative of $f$.  
\end{lemma}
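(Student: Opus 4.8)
The plan is to prove Lemma~\ref{lem:extreme points and oscillations} by a standard application of Rolle's theorem. Suppose $f$ has $n = \rmN([-R,R],f)$ zeros on $[-R,R]$, and list them in increasing order as $-R \le z_1 < z_2 < \cdots < z_n \le R$. Since $n$ is finite by hypothesis, there are only finitely many such points and this enumeration is well-defined. I would then consider the $n-1$ consecutive pairs $(z_i, z_{i+1})$ for $i = 1, \dots, n-1$. On each closed subinterval $[z_i, z_{i+1}] \subseteq [-R,R]$, the function $f$ is continuous (it is continuous on all of $[-R,R]$) and differentiable on the open subinterval $(z_i, z_{i+1}) \subseteq (-R,R)$, and $f(z_i) = f(z_{i+1}) = 0$. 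Hence by Rolle's theorem there exists a point $w_i \in (z_i, z_{i+1})$ with $f'(w_i) = 0$.

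The key observation is that the points $w_1, \dots, w_{n-1}$ are distinct, since the intervals $(z_i, z_{i+1})$ are pairwise disjoint (each $w_i$ lies strictly between $z_i$ and $z_{i+1}$, and $z_{i+1} \le z_{i+2}$). Therefore $f'$ has at least $n-1$ distinct zeros in $(-R,R) \subseteq [-R,R]$, which gives $\rmN([-R,R], f') \ge n - 1 = \rmN([-R,R], f) - 1$. Rearranging yields the claimed inequality $\rmN([-R,R], f) \le \rmN([-R,R], f') + 1$. I would also note the trivial edge cases: if $n = 0$ or $n = 1$, the inequality holds vacuously since the right-hand side is at least $1$ (as $\rmN \ge 0$), so we may assume $n \ge 2$ in the argument above.

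I do not anticipate any real obstacle here; the only point requiring a word of care is the finiteness hypothesis $\rmN([-R,R],f) < \infty$, which is exactly what guarantees that the zeros can be ordered into a finite increasing sequence so that "consecutive" pairs make sense. Without it the enumeration argument would break down (one could have infinitely many zeros with no well-defined successor), which is presumably why the hypothesis is included in the statement. The rest is just bookkeeping with Rolle's theorem, and I would present it in essentially the two short paragraphs above.
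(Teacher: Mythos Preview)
Your proposal is correct and follows the same approach as the paper: order the finitely many zeros and apply Rolle's theorem on each consecutive pair to produce at least $n-1$ distinct zeros of $f'$. The paper's proof is a terser version of exactly this argument.
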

\begin{proof}
Let $x_1 < \dots <x_n$ denote the zeros of $f$. By Rolle’s Theorem, each of the intervals $(x_i,x_{i+1})$ for $i=1, \ldots ,n -1$ contains at least one extreme point.
\end{proof}

The final ancillary result that we need is the following lemma.  

\begin{lemma} \label{lem:BoundOnNumberOfOsillations}
Let $f:\bbN_0 \to \mathbb{R}$.  Suppose that there exists  $k^\star$ such that  for all $k \ge  k^\star$  either
\begin{equation}
f(k)  \le   0  \text{ or }    f(k)  \ge   0.  
\end{equation} 
Then,
\begin{equation}
 \scrS \left(  f      \right)  \le k^\star. 
\end{equation} 
\end{lemma}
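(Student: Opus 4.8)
The plan is to show that if $f$ eventually has constant sign (weakly, i.e. always $\le 0$ or always $\ge 0$ for $k \ge k^\star$), then any sequence of points exhibiting sign changes can only ``use up'' its changes among the indices $0,1,\ldots,k^\star$, so the total count of sign changes cannot exceed $k^\star$. Recall that $\scrS(f)$ is defined as the supremum over $m$ and over increasing finite sequences $y_1 < \cdots < y_m$ in the domain of $\scrN\{f(y_i)\}_{i=1}^m$, the number of sign changes of the finite sequence. Since the domain here is $\bbN_0$, each such sequence is an increasing sequence of nonnegative integers.

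First I would fix an arbitrary increasing sequence $y_1 < \cdots < y_m$ in $\bbN_0$ and bound $\scrN\{f(y_i)\}_{i=1}^m$. Without loss of generality assume $f(k) \ge 0$ for all $k \ge k^\star$ (the other case is symmetric, or follows by applying the argument to $-f$, which has the same number of sign changes). Split the index set into those $y_i < k^\star$ and those $y_i \ge k^\star$. The key observation is that the subsequence consisting of all $y_i \ge k^\star$ has all its values nonnegative, hence contributes no sign changes among themselves: a sign change in $\scrN$ requires one strictly positive and one strictly negative value (zeros are skipped in the standard convention of counting sign changes). So all sign changes of the full sequence must occur either entirely within the block of indices with $y_i < k^\star$, or at the single ``junction'' between the last index below $k^\star$ and the first index at or above $k^\star$.

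Next I would count carefully. There are at most $k^\star$ values of $i$ with $y_i < k^\star$ (since these are distinct elements of $\{0,1,\ldots,k^\star-1\}$), so a sequence of length at most $k^\star$ has at most $k^\star - 1$ sign changes internally. Adding at most one sign change at the junction gives at most $k^\star$ in total; and if there are no indices $\ge k^\star$ then we have at most $k^\star - 1 < k^\star$. In every case $\scrN\{f(y_i)\}_{i=1}^m \le k^\star$. Taking the supremum over all finite increasing sequences yields $\scrS(f) \le k^\star$, which is the claim.

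I do not anticipate a genuine obstacle here; the lemma is essentially a bookkeeping statement about the definition of sign changes. The one point requiring a little care is the convention for counting sign changes of a finite real sequence when zeros are present (the standard convention, used implicitly throughout the oscillation-theorem literature, is to delete zero entries before counting), and making sure the ``junction'' contributes at most one change. Once that convention is pinned down, the partition-into-two-blocks argument is immediate, and the bound on the number of distinct integers below $k^\star$ closes the proof.
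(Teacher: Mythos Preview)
Your proof is correct and follows essentially the same idea as the paper's own proof, which simply observes that $f$ cannot change sign for $k \ge k^\star$ and hence there are at most $k^\star$ sign changes. Your version is a careful, fully rigorous elaboration of that one-line argument, making explicit the counting via the definition of $\scrS$ and the block decomposition below/above $k^\star$.
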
 
\begin{proof}
To prove the bound, note that for all  $ k \ge k^\star$ the function $f(k)$ no longer  can change sign. Therefore, there can be at most $k^\star$ sign changes. 
\end{proof}

The next result provides an upper bound on the number of support points in terms of the number of sign changes of the function related to $P_{Y^\star}$.

\begin{lemma}  For every $\sfA>0$
\begin{equation}
|\supp(P_{X^\star})|   \le   \scrS \left(   \psi(  \boldsymbol{\cdot} ; P_{X^\star })   \right)   +2, \label{eq:FirstMainBound}
\end{equation}
where  
\begin{equation}
  \psi \left( k ; P_{X^\star } \right) = k \log   \left(\frac{1}{\bbE[X^\star |Y^\star=k-1]}  \right)+  \log  \left( k!P_{Y^\star}(k)  \right) + C(\sfA) + k,  \quad k\in \mathbb{N}_0, 
\end{equation}
where we define $k \log   \left(\frac{1}{\bbE[X^\star |Y^\star=k-1]}  \right) |_{k=0}=0$. 
\end{lemma}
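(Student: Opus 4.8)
The plan is to combine the inclusion in \eqref{eq:InclusiongInequality} with the oscillation theorem (Theorem~\ref{thm:OscillationThoerem}) applied to a suitable integral transform. First I would recall that by Lemma~\ref{lem:KKTconditions} and \eqref{eq:InclusiongInequality}, every point of $\supp(P_{X^\star})$ is a zero of the function $x \mapsto i(x;P_{X^\star}) - C(\sfA)$ on $[0,\sfA]$. Since $i(\cdot;P_{X^\star})$ is real-analytic (hence smooth), I would like to bound the number of its zeros on $[0,\sfA]$. The idea is to apply Lemma~\ref{lem:extreme points and oscillations}: the number of zeros of $i(x;P_{X^\star}) - C(\sfA)$ is at most one more than the number of zeros of its derivative $i'(x;P_{X^\star})$. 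This accounts for the ``$+1$'' (and a second ``$+1$'' will come from a further differentiation or from the endpoint bookkeeping, yielding the ``$+2$'' in \eqref{eq:FirstMainBound}).

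Next I would express $i'(x;P_{X^\star})$, or rather a function whose zeros control those of $i'$, as an integral transform against the Poisson kernel $P_{Y|X}(y|x)$. Using the derivative identities for the information density collected in Lemma~\ref{lem:Information_Density_Derivatives} (Appendix~\ref{app:sec:Derivatives_info_density}), one has an expression of the schematic form
\begin{equation}
i'(x;P_{X^\star}) = \sum_{k=0}^{\infty} P_{Y|X}(k|x)\, \xi(k),
\end{equation}
for an appropriate function $\xi \colon \mathbb{N}_0 \to \mathbb{R}$; after clearing the common positive factor $\frac{1}{k!}x^k\rme^{-x}$ from the Poisson pmf, the relevant ``input function'' to the oscillation theorem becomes (up to reindexing $k \mapsto k-1$ coming from the shift $Y^\star \to \overline{Y}+1$ in the conditional-expectation identity) precisely $\psi(k;P_{X^\star})$ as defined in the statement. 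Concretely, $\psi(k;P_{X^\star})$ collects the term $k\log\frac{1}{\bbE[X^\star|Y^\star=k-1]}$ (from differentiating $\log P_{Y^\star}$ via the conditional-mean identity of \cite{dytso2020estimation}), the term $\log(k!P_{Y^\star}(k))$, the constant $C(\sfA)$, and the linear term $k$ (from $\log$ of the Poisson factor). Then Theorem~\ref{thm:OscillationThoerem}, with $\bbI_1 = [0,\infty)$, $\bbI_2 = \mathbb{N}_0$, $p(x,y) = P_{Y|X}(y|x)$ (a strictly totally positive kernel), $\mu$ the counting measure, and $\xi$ replaced by $\psi(\cdot;P_{X^\star})$, gives
\begin{equation}
\rmN\bigl([0,\sfA],\, i'(\cdot;P_{X^\star})\bigr) \le \scrS\bigl(\psi(\cdot;P_{X^\star})\bigr),
\end{equation}
unless the transform vanishes identically (which is ruled out since $P_{X^\star}$ is nondegenerate). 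Combining with Lemma~\ref{lem:extreme points and oscillations} and the inclusion \eqref{eq:InclusiongInequality} yields $|\supp(P_{X^\star})| \le \rmN([0,\sfA], i(\cdot;P_{X^\star})-C(\sfA)) \le \scrS(\psi(\cdot;P_{X^\star})) + 2$.

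\textbf{Main obstacle.} The delicate step is the algebraic identification of the integral-transform representation of $i'(x;P_{X^\star})$ with a clean $x$-independent kernel times a function $\psi$ of the output variable only. The derivative of $i(x;P_{X^\star}) = \sfD(P_{Y|X}(\cdot|x)\|P_{Y^\star})$ involves $\frac{\partial}{\partial x}\log P_{Y|X}(k|x) = \frac{k}{x} - 1$ and, crucially, the term coming from how $P_{Y^\star}$ enters; the conditional-expectation identity of \cite{dytso2020estimation} is what lets one rewrite $\frac{\partial}{\partial x}\bbE_{Y|X=x}[\log P_{Y^\star}(Y)]$ as an expectation involving $\bbE[X^\star|Y^\star = \overline{Y}+1]$, producing the shift in the argument. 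Getting the bookkeeping of this shift exactly right — so that the ``input function'' is genuinely $\psi(k;P_{X^\star})$ with the stated convention at $k=0$ — is where care is needed; everything else (strict total positivity of the Poisson kernel, Rolle's theorem, the $+2$) is routine given the tools already assembled in the excerpt.
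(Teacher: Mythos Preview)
Your proposal has the right architecture (KKT inclusion, Rolle, oscillation theorem), but the central step fails. You claim that $i'(x;P_{X^\star})$ can be written in the form $\sum_{k}P_{Y|X}(k|x)\,\xi(k)$. It cannot. From Lemma~\ref{lem:Information_Density_Derivatives},
\[
i'(x;P_{X^\star}) = G'(x) + \log(x),\qquad G'(x)=\sum_{k\ge 0}P_{Y|X}(k|x)\,\log\frac{1}{\bbE[X^\star|Y^\star=k]},
\]
and the leftover $\log(x)$ is \emph{not} a Poisson statistic: if $\log(x)=\bbE[h(Y)|X=x]$ for some $h$, then $\rme^{x}\log(x)$ would have a power series at $x=0$, which is impossible. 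This is exactly the obstruction the paper singles out in the Remark following the proof. Consequently your identification of $\psi$ with the ``input function'' of $i'$ is not just delicate bookkeeping --- it is not achievable along the route you describe. (In particular, the factor $k$ in $k\log\frac{1}{\bbE[X^\star|Y^\star=k-1]}$ does not come from differentiation, and $\log(k!P_{Y^\star}(k))$ and the linear $k$ do not appear in $i'$ at all.)

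The paper's fix is a specific algebraic manoeuvre designed to kill the logarithm. First separate the known zero at $x=0$ (this is the first ``$+1$''), then on $(0,\sfA]$ divide $i(x;P_{X^\star})-C(\sfA)=G(x)-x-C(\sfA)+x\log(x)$ by $x$, apply Rolle (the second ``$+1$''), and finally multiply by $x^{2}$. Under this sequence the troublesome $x\log(x)$ becomes $\log(x)\to 1/x\to x=\bbE[Y|X=x]$, and the resulting function is $xG'(x)-G(x)+C(\sfA)+x$. Now the identity $x\,\bbE[f(Y)|X=x]=\bbE[Yf(Y-1)|X=x]$ from Lemma~\ref{lem:auxiliaryLemma} produces the factor $k$ and the shift $k\mapsto k-1$ in the conditional mean, while $-G(x)$ gives $\log(k!P_{Y^\star}(k))$; only then does the oscillation theorem apply with input $\psi(\cdot;P_{X^\star})$. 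The ``$+2$'' is thus $+1$ (endpoint $x=0$) $+\,1$ (Rolle after dividing by $x$), not two differentiations.
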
 

\begin{proof}
First, using Lemma~\ref{lem:Information_Density_Derivatives} in Appendix~\ref{app:sec:Derivatives_info_density}, we have that 
\begin{equation}
	i(x; P_{X^\star})-C(\sfA)= G(x)  -x-C(\sfA) +x \log(x),    \label{eq:Definition_of_functiong_G}
\end{equation}
where
\begin{align}
  G(x) &= \sum_{k=0}^\infty  P_{Y|X}(k|x)  \log  \left(\frac{1 }{k!P_{Y^\star}(k)}  \right)\\
   &= \sum_{k=0}^\infty  P_{Y|X}(k|x)  g(k) , \label{eq:expression_for_G}
\end{align}
and where we let $g(k) =\log  \left(\frac{1 }{k!P_{Y^\star}(k)}  \right)$.

Now, the upper bound can be established as follows: 
\begin{align}
| \supp(P_{X^\star})| & \le   \rmN \left([0,\sfA], 	i(x; P_{X^\star})-C(\sfA) \right)  \label{eq:Containment}\\
& =   \rmN \left((0,\sfA], 	i(x; P_{X^\star})-C(\sfA) \right)  +1 \label{eq:Using_that_0_is_apoint_of_support}\\
& =   \rmN \left((0,\sfA], 	 G(x)  -x-C(\sfA) +x \log(x) \right)  +1 \label{eq:Using_Defintion_G}\\
& =   \rmN \left((0,\sfA], 	 \frac{G(x)}{x}  -1- \frac{C(\sfA)}{x} + \log(x) \right)  +1 \label{Eq:Dividing_by_x}\\
& \le   \rmN \left((0,\sfA], 	 \frac{G'(x)}{x}- \frac{G(x)}{x^2}  + \frac{C(\sfA)}{x^2} +  \frac{1}{x} \right)  +2 \label{eq:apply_Rolls_Thm}\\
& =   \rmN \left((0,\sfA], 	 x  G'(x)-  G(x) + C(\sfA) +  x \right)  +2  \label{Eq:Multiplying _by_x^2} \\
& =   \rmN \left((0,\sfA], 	  \bbE \left[  Y ( g(Y)-g(Y-1)) -  g(Y) +C(\sfA) +Y  \middle| X=x \right ]  \right)  +2 \label{eq:UsingAuxiliaryLemma}\\
& =   \rmN \left((0,\sfA], 	  \bbE \left[  \psi(Y;P_{X^\star}) \middle | X=x\right ]  \right)  +2  \label{eq:define_psi}\\
& \le \scrS \left(     \psi(\boldsymbol{\cdot};P_{X^\star}) \right) +2, \label{eq:Applying_Oscilation_Theorem}
\end{align}
where  \eqref{eq:Containment} follows from the containment in \eqref{eq:InclusiongInequality};   \eqref{eq:Using_that_0_is_apoint_of_support} follows by using the fact that $0\in \supp(P_{X^\star})$; \eqref{eq:Using_Defintion_G} follows by using the definition of $G(x)$ in \eqref{eq:Definition_of_functiong_G}; \eqref{Eq:Dividing_by_x} follows by using the fact that  $ G(x)  -x-C(\sfA) +x \log(x) $ and $ \frac{G(x)}{x}  -1- \frac{C(\sfA)}{x} + \log(x)$ have the same zeros on $(0,\sfA]$; \eqref{eq:apply_Rolls_Thm} follows by using the bound in Lemma~\ref{lem:extreme points and oscillations}; \eqref{Eq:Multiplying _by_x^2} follows by using the fact that   $\frac{G'(x)}{x}- \frac{G(x)}{x^2}  + \frac{C(\sfA)}{x^2} +  \frac{1}{x}$ and  $ x  G'(x)-  G(x) + C(\sfA) +  x $ have the same number of zeros on $(0,\sfA]$;  \eqref{eq:UsingAuxiliaryLemma} follows by  using Lemma~\ref{lem:auxiliaryLemma} in Appendix~\ref{app:sec:Derivatives_info_density}  which leads to 
\begin{align}
  x  G'(x) &= x  \bbE \left[  g(Y+1)-g(Y) \middle | X=x \right ]=   \bbE \left[ Y ( g(Y)-g(Y-1)) \middle | X=x  \right],  \label{eq:Using_prodcut_differenc_identity} \\
   -G(x) + C(\sfA) +  x &=    \bbE \left[ - g(Y) + C(\sfA) +  Y \middle | X=x  \right],
  \end{align} 
    where in \eqref{eq:Using_prodcut_differenc_identity} we assume that $k g(k-1)|_{k=0}=0$;
  \eqref{eq:define_psi} follows by defining 
  \begin{align}
  \psi(k;P_{X^\star})&=  k ( g(k)-g(k-1)) -  g(k) + C(\sfA) +  k\\
  &=k \log  \left(\frac{ (k-1)! P_{Y^\star}(k-1) }{ k! P_{Y^\star}(k)}  \right)+  \log  \left( k!P_{Y^\star}(k)  \right) + C(\sfA) + k\\
  &=k \log   \left(\frac{1}{\bbE[X^\star|Y^\star=k-1]}  \right)+  \log  \left( k!P_{Y^\star}(k)  \right) + C(\sfA) + k,
    \end{align}
     and  \eqref{eq:Applying_Oscilation_Theorem} follows from the oscillation theorem.
This concludes the proof.  
\end{proof} 

\begin{rem} Note that the goal of steps  \eqref{eq:Containment} through \eqref{eq:define_psi} is to transform the function $x \mapsto i(x; P_{X^\star})-C(\sfA)$ to a function of the form $\bbE \left[  \psi(Y;P_{X^\star}) | X=x\right ]$. The latter formulation is significant as it enables the application of the oscillation theorem.  The reader might wonder why such a representation was not possible right after \eqref{eq:Containment}?   To see  why this cannot be done, note that 
\begin{align}
  i(x; P_{X^\star})-C(\sfA) &=G(x)  -x-C(\sfA) +x \log(x) \\
  &=\bbE \left[g(Y)  -Y-C(\sfA) \middle|X=x \right]  +x \log(x),
\end{align} 
where in the last step we have used the expression for $G(x)$ in \eqref{eq:expression_for_G} and the fact that $\bbE[Y|X=x]=x$.  Therefore, it remains to answer whether we can write $\bbE[h(Y)|X=x]=x \log(x) $ for some function $h$. In other words, can we show that $x\log (x) $ is a statistic of a Poisson distribution with mean $x$? To answer this question,  towards a contradiction, assume that   $x \log(x) $ is a statistic of a Poisson distribution and there exists a function $h: \bbN_0 \to \mathbb{R}$ such that
\begin{equation}
x \log(x)=\bbE\left[h(Y)|X=x \right]= \sum_{k=0}^\infty  \frac{h(k)}{k!} x^k  \rme^{-x}. 
\end{equation}
Multiplying the above by   $\rme^{x}$,  we arrive at
\begin{equation} 
 \rme^{x} x \log(x)=\sum_{k=0}^\infty  \frac{h(k)}{k!} x^k, \label{eq:Power_Series_Conctradition}
 \end{equation}
 which asserts that the function  $\rme^{x} x \log(x)$ has a power series  representation at $x=0$. This is, of course, not possible and leads to a contradiction.  Therefore, we cannot write $\bbE[h(Y)|X=x]=x \log(x) $.  The purpose of steps \eqref{eq:Containment} through \eqref{eq:define_psi} was  to  bypass this   issue and eliminate the term $x \log(x)$ while keeping the number of zeros relative unchanged.   Finally, we note that the analyticity argument leading to the contradiction in \eqref{eq:Power_Series_Conctradition} has been used before in \cite[Thm.~15]{cheraghchi2018improved} to show  discreteness of the capacity-achieving distribution for the Poisson noise channel with both  peak-power  and  average-power constraints.    
  \end{rem}

The next result provides an upper bound on the number of sign changes $  \scrS \left(   \psi(  \boldsymbol{\cdot} ; P_{X^\star })   \right)$. 
\begin{lemma} For $\sfA>0$
\begin{equation}
\scrS \left(   \psi(  \boldsymbol{\cdot} ; P_{X^\star })   \right) \le   k^\star 
\end{equation}
where
\begin{equation}
 k^\star = \lceil \sfA  - \log  \left( P_{X^\star}(\sfA)\right) - C(\sfA) \rceil \ge 0.  \label{eq:k_star_bound}
\end{equation}

\end{lemma} 
\begin{proof}
We begin by lower bounding $ \psi(  \boldsymbol{\cdot} ; P_{X^\star })$
 \begin{align}
  \psi(k; P_{X^\star})  &=k \log   \left(\frac{1}{\bbE[X^\star |Y^\star=k-1]}  \right)+  \log  \left( k!P_{Y^\star}(k)  \right) + C(\sfA) + k \\
  & \ge  k \log   \left(\frac{1}{\sfA}  \right)+  \log  \left( k!P_{Y^\star}(k)  \right) + C(\sfA) + k \label{Eq:Bounding_Conditional_EXP}\\
    & \ge  k \log   \left(\frac{1}{\sfA}  \right)+  \log  \left( P_{X^\star}(\sfA) \sfA^k \rme^{-\sfA} \right) + C(\sfA) + k\label{eq:LowerBoundon_on_P_Y}\\
    &=- \sfA  + \log  \left( P_{X^\star}(\sfA) \right) + C(\sfA) + k,
    \end{align} 
    where \eqref{Eq:Bounding_Conditional_EXP}  follows by using the bound  $\bbE[X^\star|Y^\star] \le \sfA$; and \eqref{eq:LowerBoundon_on_P_Y} follows by using the fact that there is a mass point at $\sfA$ which leads to the bound
    \begin{align}
    P_{Y^\star}(k)= \frac{1}{k!} \bbE \left[ {X^\star}^k \rme^{-X^\star} \right]  \ge   \frac{1}{k!}   P_{X^\star}(\sfA)  \sfA^k \rme^{-\sfA}. 
    \end{align} 
    
    Therefore, there exists  $k^\star = \lceil \sfA  - \log  \left( P_{X^\star}(\sfA) \right) - C(\sfA) \rceil$ such that
    \begin{equation}
      \psi(k; P_{X^\star}) \ge 0,   \,    k\ge   k^\star . 
    \end{equation}
    It remains to show that $k^\star$ is larger than zero.   This follows by using the bound in \eqref{Eq:Bound_On_Probability_Values}
    \begin{align}
    \sfA  - \log  \left( P_{X^\star}(\sfA) \right) - C(\sfA)   \ge       \sfA  - \log  \left(\rme^{-C(\sfA) }\right) - C(\sfA) =\sfA
    \end{align} 
    The proof is concluded by using Lemma~\ref{lem:BoundOnNumberOfOsillations}. 
 \end{proof}
 
 \begin{rem} Note that value of $P_{X^\star}(\sfA)$ was introduced in the step \eqref{eq:LowerBoundon_on_P_Y}.  Since a significant part of this work was dedicated to find a lower bound on $P_{X^\star}(\sfA)$, it is interesting to discuss how tight  the bound in \eqref{eq:LowerBoundon_on_P_Y} actually is.  Observe that
 \begin{align}
 \lim_{k \to \infty}  \frac{k! P_{Y^\star}(k)}{ P_{X^\star}(\sfA) \sfA^k \rme^{-\sfA}  }=   \lim_{k \to \infty} \sum_{x  \in \supp(P_{X^\star}) \setminus \{\sfA \} }  \frac{P_{X^\star}(x)}{P_{X^\star}(\sfA)} \left( \frac{x}{\sfA} \right)^k \rme^{-x+\sfA} +1=1.
 \end{align}
 Therefore, asymptotically the bound used to produced the inequality in \eqref{eq:LowerBoundon_on_P_Y} is tight. 
 \end{rem} 
 
 Now, combining the bounds in \eqref{eq:FirstMainBound}, \eqref{eq:k_star_bound} and  \eqref{eq:lower_bound_on_P_X(A)}, we arrive at
 \begin{align}
|\supp(P_{X^\star})|  &\le \lceil \sfA  - \log  \left( P_{X^\star}(\sfA)\right) - C(\sfA) \rceil +2\\
 &\le  \sfA  - \log  \left( P_{X^\star}(\sfA)\right) - C(\sfA)  +3 \\
   %&\le  \sfA  - \log  \left(  \frac { 1-  \rme^{- \frac{C(\sfA)}{ 1-\rme^{-\sfA}}  }  }{  \sfA^2 \exp \left( -\sfA+\sfA (\sfA-1)\rme^{\sqrt{2(\log(\sfA)-1)}} \right) } \right) - C(\sfA)  +1 \\
   &\le  \sfA  - \log  \left(  \frac{1-3\:\rme^{- \frac{C(\sfA)}{ 1-\rme^{-\sfA}}  } }{2\sfA^{2\sfA \rme  \log(\sfA)+2} \, \rme^{-2\sfA+1}  } \right) - C(\sfA)  +3\\
   &= 2  \rme \sfA  \log^2(\sfA)+2\log(\sfA)-\sfA- \log  \left(  \frac{1-3\:\rme^{- \frac{C(\sfA)}{ 1-\rme^{-\sfA}}  } }{2 } \right) - C(\sfA)  +4,
    \end{align}
 where the first inequality is due to the bound $\lceil x \rceil \le x+1$.
 This concludes the proof.

 \section{Conclusion} 
 \label{sec:Conclusion}
 
This work has focused on studying properties of the capacity-achieving distribution for the Poisson noise channel with an amplitude constraint.  
 It was previously known that the capacity-achieving distribution for this channel is discrete with finitely many points.  In this work, we sharpened this result in several ways.   
 
 First, by using a strong data-processing inequality,  an upper bound on the values of the mass points has been shown. This upper bound on the probability values has been shown to lead to the lower bound on the number of  support point of the optimal input distribution. Specifically,  a  lower bound  of order  $\sqrt{\sfA}$ has been established on the number of support points where $\sfA$ is the constraint on the amplitude. 
 
 Second, by using the variation-diminishing property of the Poisson kernel, the work has also established an upper bound on the number of the support points of the optimal input distribution. Specifically, an order $\sfA \log^2(\sfA)$ bound has been established. 
 
 Finally, along the way, several other results have been shown. For example, a new compact expression for the capacity has been shown. In addition, a lower bound on the probability of the largest points of the optimal input distribution has been established.  Furthermore, an estimate on the locations of the support other than $0$ and $\sfA$ has been established. 

 There are few interesting future directions and open questions:
\begin{itemize}
\item Some of the ideas in this work, in particular the oscillation theorem, were borrowed from \cite{dytso2019capacity} where a Gaussian noise channel was considered. It is interesting to note that in \cite{dytso2019capacity}, it was shown that the oscillation theorem leads to an order tight bound on the cardinality of support of the capacity-achieving distribution. Currently, we do not have such a claim for the Poisson noise channel. Therefore, an interesting future direction will be to assess whether the bound due to the oscillation theorem is also order tight for the Poisson noise channel. 
\item The location of the second-largest point plays an important role in our analysis. In particular, it will be interesting to investigate the gap between the largest and the second-largest points. Currently, this gap is lower-bounded by a constant. In other words, the second-largest point never approaches $\sfA$.  In fact, the numerical simulation suggests that this gap increases as a function of $\sfA$.  One interesting future direction is to investigate this gap and determine whether it increases with $\sfA$. 

\item This work has developed a method for finding lower bounds on the value of $P_{X^\star}(\sfA)$. This lower bound was instrumental for finding an explicit upper bound on the cardinality of support.   However, the current bound on $P_{X^\star}(\sfA)$ appears to be loose.  An important future direction is to find ways of improving this bound. 
\item The general upper bounds on the values of the probabilities can be applied to other channels, and it would be interesting to see how tight these bounds are.  Another interesting future direction is to extend the results of this paper to a more general Poisson model that includes the dark current parameter. 
 \end{itemize}

 \begin{appendices}

  \section{Sketch of the Implementation of the Gradient Ascent} 
  \label{app:Gradient_Ascent}
  
  First, to apply the projected gradient ascent \cite{shalev2014understanding},  we first parametrize the input distribution as a vector in $\mathbb{R}^{2n}$
\begin{equation}
P_X  \Longrightarrow     {\bf x}=[{\bf x}_1, {\bf x}_2] =[ \hspace{-0.15cm} \underbrace{x_1,x_2, \ldots, x_n,}_{ \text{points of the support}} \, \,\, \underbrace{p_1, p_2, \ldots, p_n}_{\text{probability values}}].
\end{equation}
From Theorem~\ref{thm:Main_Result}, for a given $\sfA$,  we know that $n$ can be at most  equal to \eqref{eq:Bounds_on_the_Number_Of_Poitns}. Therefore, we set $n$ to be equal to \eqref{eq:Bounds_on_the_Number_Of_Poitns}.  Second, we explicitly write the mutual information as a function of ${\bf x}$ and let
\begin{equation}
g( {\bf x}) =I(X;Y).
\end{equation}
In view of Theorem~\ref{thm:Main_Result}, we have that
\begin{align}
\max_{ P_X: 0 \le X \le \sfA} I(X;Y) = \max_{{\bf x}:\: {\bf x}_1\in [0, \sfA]^{n},\,  {\bf x}_2 \in \mathcal{P}  } g({\bf x}), 
\end{align} 
where $\mathcal{P}$ is the probability simplex.    

 The update equation for the projected  gradient ascent are now given by 
 \begin{equation}
 {\bf x}^{(t+1)}= \mathsf{proj} \left(  {\bf x}^{(t)}+\lambda  \nabla g( {\bf x}^{(t)}) \right), \, t\in \mathbb{N}
 \end{equation} 
 where $\lambda>0$ is some step size (to generate Fig.~\ref{fig:Plot_of_probabilitiues}, we have used $\lambda=0.01$),   $\nabla g({\bf x})$ is the gradient of $g$, and  the projection operation $\mathsf{proj} $ maps a vector ${\bf x}=[{\bf x}_1,{\bf x}_2]$ to the set 
 \begin{equation}
\bigl \{ {\bf x}: {\bf x}_1\in [0, \sfA]^{n},  {\bf x}_2 \in \mathcal{P}   \bigr \}.
 \end{equation} 
The projection of the first element ${\bf x}_1$ onto a cube  $ [0, \sfA]^{n}$ can be done by individually going through each coordinate of ${\bf x}_1$.   An efficient implementation of the projection of  ${\bf x}_2$ onto the probability simplex can be found in \cite{wang2013projection}. 
  
  Finally, the initial condition ${\bf x}^{(1)}$ for given $\sfA$, is chosen to be the distribution  that was a final output of the gradient algorithm for the previous epoch in which we considered $\sfA'<\sfA$.  For the best performance, the difference between $\sfA'$ and $\sfA$ should be made as small as possible.

  \section{Derivatives of $i(\cdot;P_{X})$}
  \label{app:sec:Derivatives_info_density} 
To find the derivatives of the  term $i(\cdot;P_{X})$, we will need the following auxiliary results, the proof of which can be found in \cite{dytso2020estimation}. 
 \begin{lemma}\label{lem:auxiliaryLemma} Let $f : \bbN_0 \to \mathbb{R}$  and assume $P_{Y|X}$ is given in \eqref{eq:PoissonTransformation}. Then,  for any $P_X$,  the following identities hold:
\begin{align}
%x\bbE[  f(Y) | X=x ]&= \bbE[  Yf(Y-1) | X=x ],\\ \label{eq:PoissonShiftEquation}
\frac{\rm d}{ {\rm d} x}   \bbE[  f(Y) | X=x ]&=  \bbE[  f(Y+1)-f(Y) | X=x ],\quad x>0,\label{eq:DerivativePropety} \\
 x \bbE[  f(Y) | X=x ]&= \bbE[ Y f(Y-1)|X=x],\quad x \ge 0
\end{align}
 where $k f(k-1)|_{k=0}=0$ and 
\begin{align}
\bbE[X|Y=y] &=\frac{(y+1) P_Y(y+1)}{P_Y(y)},\quad  y \in \bbN_0,  \label{Eq:EmpricalBayes}\\
\bbE[X^k|Y=y]&=\prod_{i=1}^{k-1}  \bbE[X|Y=y+i],\quad k \in \bbN, \, y \in \bbN_0.  \label{eq:Hiher_moments}
\end{align} 
\end{lemma}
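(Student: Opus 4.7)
The plan is to prove the four identities by direct computation, exploiting the algebraic structure of the Poisson pmf $P_{Y|X}(y|x) = \frac{x^y e^{-x}}{y!}$. Each identity reduces to a standard reindexing of a Poisson sum, and all four hinge on the two elementary relations
\[
\frac{d}{dx}\!\left[\frac{x^y e^{-x}}{y!}\right] = \frac{x^{y-1} e^{-x}}{(y-1)!} - \frac{x^y e^{-x}}{y!}, \qquad x \cdot \frac{x^y e^{-x}}{y!} = (y+1)\,\frac{x^{y+1} e^{-x}}{(y+1)!}.
\]
With these in hand, the lemma becomes a bookkeeping exercise.

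For \eqref{eq:DerivativePropety}, the plan is to write $\mathbb{E}[f(Y)|X=x] = \sum_{y\ge 0} f(y) P_{Y|X}(y|x)$, differentiate termwise (justified by absolute convergence on any compact subset of $x>0$, since $f$ is arbitrary but the Poisson tails decay superexponentially), apply the first boxed identity, split the resulting sum into two pieces, and shift the index $y \mapsto y+1$ in the piece containing $\frac{x^{y-1}}{(y-1)!}$ to recover $\mathbb{E}[f(Y+1)|X=x] - \mathbb{E}[f(Y)|X=x]$. For the second identity, I would multiply the Poisson sum by $x$, use the second boxed identity to turn $x \cdot \frac{x^y e^{-x}}{y!}$ into $(y+1)\,\frac{x^{y+1} e^{-x}}{(y+1)!}$, then shift $k = y+1$; the convention $kf(k-1)|_{k=0}=0$ lets us start the reindexed sum at $k=0$ so that it cleanly collapses to $\mathbb{E}[Y f(Y-1)|X=x]$, and the case $x=0$ is trivial since both sides vanish.

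For the empirical Bayes formula \eqref{Eq:EmpricalBayes}, I would apply Bayes' rule $P_{X|Y}(x|y) = P_{Y|X}(y|x)P_X(x)/P_Y(y)$ and compute
\[
\mathbb{E}[X \mid Y=y] = \frac{1}{P_Y(y)}\,\mathbb{E}\!\left[X \cdot \frac{X^y e^{-X}}{y!}\right] = \frac{y+1}{P_Y(y)}\,\mathbb{E}\!\left[\frac{X^{y+1} e^{-X}}{(y+1)!}\right] = \frac{(y+1)P_Y(y+1)}{P_Y(y)},
\]
using the second boxed identity inside the expectation over $X$. The identity \eqref{eq:Hiher_moments} then follows either by the same calculation applied to $X^k$, giving $\mathbb{E}[X^k\mid Y=y] = \frac{(y+k)!}{y!}\cdot \frac{P_Y(y+k)}{P_Y(y)}$, or by iterating \eqref{Eq:EmpricalBayes}: the telescoping product
\[
\prod_{i=0}^{k-1} \frac{(y+i+1)P_Y(y+i+1)}{P_Y(y+i)} = \frac{(y+k)!}{y!}\cdot\frac{P_Y(y+k)}{P_Y(y)}
\]
matches the closed form. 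I will present the iterative derivation since it makes the product structure self-evident (and I will read the product in the lemma as running over $i=0,\dots,k-1$, the convention that makes the $k=1$ case reduce to \eqref{Eq:EmpricalBayes}).

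There is no real obstacle: the only subtlety is the termwise differentiation in \eqref{eq:DerivativePropety}, which I will briefly justify by restricting to a compact interval $[\epsilon,R] \subset (0,\infty)$ and bounding the derivatives of the Poisson mass functions uniformly so that dominated convergence (or uniform convergence of the differentiated series, whichever is cleaner) applies; the boundary case $x=0$ is excluded in the statement, which avoids any issues with $0^0$ in the differentiated series. All remaining steps are algebraic reindexing.
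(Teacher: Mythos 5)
Your computation is correct, and it supplies what the paper itself omits: the paper does not prove Lemma~\ref{lem:auxiliaryLemma} but defers it entirely to the external reference \cite{dytso2020estimation}. Your route --- termwise differentiation of the Poisson sum, the algebraic identity $x\cdot x^y e^{-x}/y! = (y+1)\,x^{y+1}e^{-x}/(y+1)!$, Bayes' rule, and index shifts --- is the standard derivation and surely matches what the cited reference does. Two remarks. First, you correctly spot that the product in \eqref{eq:Hiher_moments} must run over $i=0,\dots,k-1$ rather than $i=1,\dots,k-1$ as printed; this is confirmed by the paper's own use of the identity in Lemma~\ref{lem:Information_Density_Derivatives}, where $\bbE[X^2\mid Y=k]=\bbE[X\mid Y=k]\,\bbE[X\mid Y=k+1]$ is needed, so your reading is the right one. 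Second, your justification of termwise differentiation ("the Poisson tails decay superexponentially, $f$ arbitrary") is not literally valid for an \emph{arbitrary} $f:\bbN_0\to\bbR$ --- take $f(y)=(y!)^2$ and the series $\sum_y f(y)x^y/y!$ already diverges --- so the identity implicitly presupposes that the relevant expectations are finite and that the differentiated series converges locally uniformly. This is the same implicit assumption the paper makes, and it holds for every $f$ actually used (logarithms of output pmfs, which grow at most like $\log k!$), so it is a caveat to state rather than a gap to fill.
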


The derivatives of $i(\cdot;P_{X})$ are given next. 

\begin{lemma}
\label{lem:Information_Density_Derivatives} We have
  \begin{equation}
	i(x; P_{X})%&=  \sum_{k=0}^\infty  P_{Y|X}(k|x) \left(  \log  \left(\frac{1 }{P_{Y^\star}(k)}  \right)  +\log \left( \frac{1}{k!} \right)   \right) + x \log(x) -x \\
	= G(x) +  x \log(x) -x,   \quad x\ge 0,  \label{eq:Decomosition_onto_G_and_xlogx}
	\end{equation}
where
	\begin{equation}
		G(x) = \sum_{k=0}^\infty  P_{Y|X}(k|x)  \log  \left(\frac{1 }{k!P_{Y}(k)}  \right) . 
	\end{equation} 
Moreover,
\begin{itemize}
\item The first and second derivative  of $G(x)$ are given by 
\begin{align}
G'(x)  &= \sum_{k=0}^\infty  P_{Y|X}(k|x)  \log  \left(\frac{ 1 }{ \bbE[X|Y=k]}  \right), \\
G''(x)  &= \sum_{k=0}^\infty  P_{Y|X}(k|x)  \log  \left(\frac{ \bbE[X|Y=k] }{ \bbE[X|Y=k+1]}  \right) \\
&= \sum_{k=0}^\infty  P_{Y|X}(k|x)  \log  \left(\frac{ \bbE^2[X|Y=k] }{ \bbE[X^2|Y=k]}  \right) \\
&=\sum_{k=0}^\infty  P_{Y|X}(k|x)  \log  \left(\frac{1 }{ \bbE[X|Y=k+1]}  \right) -G'(x)  .
%G'''(x) 
%&= \sum_{k=0}^\infty  P_{Y|X}(k|x)  \log  \left(\frac{ \bbE[X|Y=k+1] }{\bbE[X|Y=k+2] }  \right) -G''(x)\\
%&= \sum_{k=0}^\infty  P_{Y|X}(k|x)  \log  \left(\frac{ 1}{\bbE[X|Y=k+2] }  \right) -G'(x)-2G''(x)
\end{align} 
%\item $G(x)$ is strictly concave. 
%\item   $\sum_{k =2}^n G^{(k)}(x) <0$,  for every $n \ge 2$. 
\item The second derivative of $i(x; P_X)$ is given by 
\begin{align}
i''(x; P_{X})
&=\sum_{k=0}^\infty  P_{Y|X}(k|x)  \log  \left(\frac{1 }{ \bbE[X|Y=k+1]}  \right) -i'(x; P_{X})+\log(x)+\frac{1}{x}. \label{eq:infor_density_second_derivative}
\end{align}

\end{itemize}

\end{lemma}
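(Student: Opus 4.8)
\textbf{Proof proposal for the final statement (Lemma~\ref{lem:Information_Density_Derivatives}).}

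The plan is to start from the definition $i(x;P_X) = \sfD(P_{Y|X}(\cdot|x)\|P_Y)$ and decompose it into a part that is a conditional expectation of a function of $Y$ and the leftover term $x\log x - x$. Writing out the relative entropy explicitly,
\begin{align}
i(x;P_X) &= \sum_{k=0}^\infty P_{Y|X}(k|x)\log\frac{P_{Y|X}(k|x)}{P_Y(k)}\notag\\
&= \sum_{k=0}^\infty P_{Y|X}(k|x)\log\frac{x^k \rme^{-x}/k!}{P_Y(k)}\notag\\
&= \sum_{k=0}^\infty P_{Y|X}(k|x)\Bigl(k\log x - x - \log(k!P_Y(k))\Bigr),
\end{align}
and using $\sum_k P_{Y|X}(k|x)=1$ and $\bbE[Y|X=x]=x$, this immediately gives \eqref{eq:Decomosition_onto_G_and_xlogx} with $G(x)=\sum_k P_{Y|X}(k|x)\log\frac{1}{k!P_Y(k)} = \bbE[g(Y)|X=x]$ where $g(k)=\log\frac{1}{k!P_Y(k)}$. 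This step is routine.

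Next I would compute $G'(x)$ and $G''(x)$ by applying the difference-operator identity \eqref{eq:DerivativePropety} from Lemma~\ref{lem:auxiliaryLemma}: $\frac{\rmd}{\rmd x}\bbE[f(Y)|X=x] = \bbE[f(Y+1)-f(Y)|X=x]$. Applying this with $f=g$ gives $G'(x) = \bbE[g(Y+1)-g(Y)|X=x]$, and then I compute $g(k+1)-g(k) = \log\frac{k!P_Y(k)}{(k+1)!P_Y(k+1)} = \log\frac{1}{(k+1)P_Y(k+1)/P_Y(k)} = \log\frac{1}{\bbE[X|Y=k]}$, where the last equality is the empirical-Bayes identity \eqref{Eq:EmpricalBayes}. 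This yields the stated formula for $G'(x)$. Applying \eqref{eq:DerivativePropety} once more with $f(k)=\log\frac{1}{\bbE[X|Y=k]}$ gives $G''(x)=\bbE[\log\frac{\bbE[X|Y=Y]}{\bbE[X|Y=Y+1]}|X=x]$ in the first form; the second form (with $\bbE^2[X|Y=k]/\bbE[X^2|Y=k]$) follows from the moment identity \eqref{eq:Hiher_moments}, which gives $\bbE[X^2|Y=k]=\bbE[X|Y=k+1]\bbE[X|Y=k]$; the third form is just algebraic regrouping, splitting $\log\frac{\bbE[X|Y=k]}{\bbE[X|Y=k+1]} = \log\frac{1}{\bbE[X|Y=k+1]} - \log\frac{1}{\bbE[X|Y=k]}$ and recognizing the second piece as $G'$.

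Finally, for $i''(x;P_X)$ I differentiate \eqref{eq:Decomosition_onto_G_and_xlogx} twice: $\frac{\rmd}{\rmd x}(x\log x - x) = \log x$, so $i'(x;P_X)=G'(x)+\log x$ and $i''(x;P_X)=G''(x)+\frac{1}{x}$. Substituting the third form of $G''$ and using $G'(x)=i'(x;P_X)-\log x$ gives exactly \eqref{eq:infor_density_second_derivative}. I do not anticipate a serious obstacle here; the only point requiring mild care is the boundary convention at $k=0$ (the terms $kf(k-1)|_{k=0}=0$ and the domain $x>0$ for the derivative identity), which is already flagged in Lemma~\ref{lem:auxiliaryLemma}, so the whole argument is a sequence of bookkeeping steps combining the two identities of that lemma with elementary differentiation.
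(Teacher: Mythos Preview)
Your proposal is correct and follows essentially the same approach as the paper: decompose $i(x;P_X)$ by expanding $\log P_{Y|X}(k|x)$ and using $\bbE[Y|X=x]=x$, then compute $G'$ and $G''$ via the difference-operator identity \eqref{eq:DerivativePropety} combined with the empirical-Bayes formula \eqref{Eq:EmpricalBayes} and the higher-moment identity \eqref{eq:Hiher_moments}, and finally obtain \eqref{eq:infor_density_second_derivative} from $i''=G''+1/x$ together with $G'=i'-\log x$. The paper's proof matches this step for step.
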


\begin{proof}

We first show the decomposition in \eqref{eq:Decomosition_onto_G_and_xlogx} 
\begin{align}
	i(x; P_{X})&= \sum_{k=0}^\infty  P_{Y|X}(k|x)  \log  \left( \frac{P_{Y|X}(k|x) }{P_{Y}(k)}  \right)  \\
	&= \sum_{k=0}^\infty  P_{Y|X}(k|x) \left(  \log  \left(\frac{1 }{P_{Y}(k)}  \right)  +\log \left( \frac{1}{k!} \right)  + k \log(x) -x \right) \label{eq:re-writing_info_density_step1} \\
	&= \sum_{k=0}^\infty  P_{Y|X}(k|x) \left(  \log  \left(\frac{1 }{P_{Y}(k)}  \right)  +\log \left( \frac{1}{k!} \right)   \right) + x \log(x) -x \label{eq:re-writing_info_density_step2} \\
	&= G(x)  -x +x \log(x),   \label{eq:re-writingi(x,Px)}
\end{align}
where \eqref{eq:re-writing_info_density_step1} follows by using that  
\begin{align}
\log \left( P_{Y|X}(k|x)  \right)= \log \left( \frac{x^k}{k!} \rme^{-x} \right)= \log \left( \frac{1}{k!} \right)  + k \log(x) -x; 
\end{align} 
and where \eqref{eq:re-writing_info_density_step2}  follows by using that $\bbE[Y|X=x]=x$.

Next, to find the derivative of $G(x)$, we use the properties in \eqref{eq:DerivativePropety} and \eqref{Eq:EmpricalBayes} which lead to 
\begin{align}
G'(x) &= \sum_{k=0}^\infty  P_{Y|X}(k|x)  \log  \left(\frac{ P_{Y}(k) }{(k+1)P_{Y}(k+1)}  \right) \\
&= \sum_{k=0}^\infty  P_{Y|X}(k|x)  \log  \left(\frac{ 1 }{ \bbE[X|Y=k]}  \right). 
\end{align}
Similarly, by using \eqref{eq:DerivativePropety} and \eqref{eq:Hiher_moments} we arrive at
\begin{align}
G''(x)  &= \sum_{k=0}^\infty  P_{Y|X}(k|x)  \log  \left(\frac{ \bbE[X|Y=k] }{ \bbE[X|Y=k+1]}  \right) \\
&= \sum_{k=0}^\infty  P_{Y|X}(k|x)  \log  \left(\frac{ \bbE^2[X|Y=k] }{ \bbE[X^2|Y=k]}  \right) .
\end{align}

%The third derivative is given by
%\begin{align}
%G'''(x) &= \sum_{k=0}^\infty  P_{Y|X}(k|x) \log  \left(\frac{ \bbE^2[X|Y=k+1]  \bbE[X^2|Y=k] }{ \bbE[X^2|Y=k+1]  \bbE^2[X|Y=k] }  \right) \\
%&= \sum_{k=0}^\infty  P_{Y|X}(k|x)  \log  \left(\frac{ \bbE^2[X|Y=k+1] }{\bbE[X|Y=k+2] \bbE[X|Y=k]}  \right) \\
%&= \sum_{k=0}^\infty  P_{Y|X}(k|x)  \log  \left(\frac{ \bbE[X|Y=k+1] }{\bbE[X|Y=k+2] }  \right) -G''(x)
%\end{align} 
%
%Next, note that
%\begin{align}
%e^{x} G''(x)  &= \sum_{k=0}^\infty  \frac{x^k}{k!}  \log  \left(\frac{ \bbE[X|Y=k] }{ \bbE[X|Y=k+1]}  \right) \\
%&= \sum_{k=0}^\infty  \frac{x^k}{k!}  a_k
%\end{align} 
%where $a_k \le 0$.    Therefore, every derivative of $e^{x} G''(x)<0$.  Using the product rule, this implies that for every $n$
%\begin{align}
%e^x\sum_{k =2}^n G^{(k)}(x) <0 \Rightarrow  \sum_{k =2}^n G^{(k)}(x)<0
%\end{align} 

To show the expression for the second derivative observe that 
\begin{align}
i''(x; P_{X})&= G''(x) +\frac{1}{x}\\
&=\sum_{k=0}^\infty  P_{Y|X}(k|x)  \log  \left(\frac{1 }{ \bbE[X|Y=k+1]}  \right) -G'(x)+\frac{1}{x}\\
&=\sum_{k=0}^\infty  P_{Y|X}(k|x)  \log  \left(\frac{1 }{ \bbE[X|Y=k+1]}  \right) -i'(x; P_{X})+\log(x)+\frac{1}{x}.
\end{align}
\end{proof}

\section{Proof of Lemma~\ref{lem:bound_on_the_ratio}}
\label{app:lem:bound_on_the_ratio}

	We start from
	\begin{equation}
		\bbE \left[ \frac{ P_{Y|X}(\bar{Y}+1|\sfA)}{P_{Y^\star}(\bar{Y}+1)} \mathds{1}_{\bar{Y}< c}\right] = \sum_{k=0}^{c-1} P_{\bar{Y}}(k) \frac{\sfA^{k+1} \rme^{-\sfA}}{(k+1)! P_{Y^\star}(k+1)}. \label{eq:expbarY}
	\end{equation}
	We next find a lower bound on the $P_{Y^\star}(k+1)$ where we distinguish  between two regimes  $x_0 \le c $ and  $c>x_0$.
	
	First, for  $1\le k+1\le \min\{x_0,c\}$, we have the following lower bound on $P_{Y^\star}(k+1)$: 
	\begin{align}
		(k+1)! P_{Y^\star}(k+1) &= \sum_{0< x\le \sfA} P_{X^\star}(x) x^{k+1} \rme^{-x} \\
		&\ge \sum_{1\le x\le k+1} P_{X^\star}(x) x^{k+1} \rme^{-x} + \sum_{k+1<x\le x_0} P_{X^\star}(x) x^{k+1} \rme^{-x}  \\
		&\ge \sum_{1\le x\le k+1} P_{X^\star}(x)   \rme^{-1}+ \sum_{k+1<x\le x_0} P_{X^\star}(x) \sfA^{k+1} \rme^{-\sfA} \label{eq:Position_of_the_max} \\
		&\ge \sum_{1\le x\le k+1} P_{X^\star}(x)   \rme^{-1}+ \sum_{k+1<x\le x_0} P_{X^\star}(x) \sfA \rme^{-\sfA}  \label{eq:SomeProerties_That_come_from_monotonicity}\\
		&\ge \sfA \rme^{-\sfA} \sum_{1\le x\le x_0} P_{X^\star}(x) \label{eq:bound Aexp(-a)}\\
		&=\sfA \rme^{-\sfA} \left(1-\sum_{0\le x<1} P_{X^\star}(x)-P_{X^\star}(\sfA)\right) \\
		&\ge \sfA \rme^{-\sfA}\left(1-3\:\rme^{- \frac{C(\sfA)}{ 1-\rme^{-\sfA}}  }\right), \label{eq:lowPY}
	\end{align}
	where in  \eqref{eq:Position_of_the_max} we used $x^{k+1} \rme^{-x}\ge \rme^{-1}$, because $x \mapsto x^{k+1} \rme^{-x}$ is  increasing for $x<k+1$, and in the second summation we used that $x\mapsto x^{k+1} \rme^{-x}$ is  decreasing for $k+1<x\le x_0< \sfA$;  in \eqref{eq:SomeProerties_That_come_from_monotonicity}, we used $\sfA^{k+1}\ge \sfA$ for $\sfA\ge 1$ and $k\ge 0$; in \eqref{eq:bound Aexp(-a)}, we used $\sfA \rme^{-\sfA}\le \rme^{-1}$ for $\sfA\ge 1$; and in \eqref{eq:lowPY} we used that there is at most one support point in the open interval $(0,1)$, and the upper bound on the probability mass in \eqref{Eq:Bound_On_Probability_Values_poisson}. 
	
	If $c>x_0$, we can follow analogous bounding steps to write
	\begin{align}
	(k+1)! P_{Y^\star}(k+1) &\ge \sum_{1\le x\le x_0} P_{X^\star}(x)   \rme^{-1}\\
	&= \rme^{-1}\left(1-\sum_{0\le x<1} P_{X^\star}(x)-P_{X^\star}(\sfA)\right) \\
	&\ge \rme^{-1} \left(1-3\:\rme^{- \frac{C(\sfA)}{ 1-\rme^{-\sfA}}  }\right), \qquad x_0<k+1< c. \label{eq:lowPY_k_between_x0_A}
	\end{align}

	If $c\le x_0$, plugging~\eqref{eq:lowPY} into~\eqref{eq:expbarY}, we get
	\begin{align}
		\bbE \left[ \frac{ P_{Y|X}(\bar{Y}+1|\sfA)}{P_{Y^\star}(\bar{Y}+1)} \mathds{1}_{\bar{Y}< c}\right] &\le \frac{1}{\sfA \rme^{-\sfA}\left(1-3\:\rme^{- \frac{C(\sfA)}{ 1-\rme^{-\sfA}}  }\right)} \sum_{k=0}^{c-1} P_{\bar{Y}}(k) \sfA^{k+1}\rme^{-\sfA} \\
		&\le \frac{c  \sfA^{c-1}}{ 1-3\:\rme^{- \frac{C(\sfA)}{ 1-\rme^{-\sfA}}  }}  \\
		&\le \frac{x_0  \sfA^{x_0-1}}{ 1-3\:\rme^{- \frac{C(\sfA)}{ 1-\rme^{-\sfA}}  }}. \label{eq:Bound_on_ratio_c<x_0}
	\end{align}
	
	%	Let us now upper bound $P(\bar{Y}>c)$ with Markov's inequality:
	%	\begin{align}
	%	P(\bar{Y}>c) & \le \frac{\bbE [ \bar{Y}] }{c}\\
	%	&=   \frac{   x_0  }{c }.
	%	\end{align}
	
	Finally, if $c>x_0$, we have
	\begin{align}
	\bbE \left[ \frac{ P_{Y|X}(\bar{Y}+1|\sfA)}{P_{Y^\star}(\bar{Y}+1)} \mathds{1}_{\bar{Y}< c}\right] 
	&=\bbE \left[ \frac{ P_{Y|X}(\bar{Y}+1|\sfA)}{P_{Y^\star}(\bar{Y}+1)} (\mathds{1}_{\bar{Y}\le x_0}+\mathds{1}_{x_0<\bar{Y}< c})\right]\\
	&\le \frac{1}{1-3\:\rme^{- \frac{C(\sfA)}{ 1-\rme^{-\sfA}}  }}  \left[  x_0 \sfA^{x_0-1}+\sum_{k=\lceil x_0 \rceil}^{c-1} P_{\bar{Y}}(k) \sfA^{k+1}\rme^{-\sfA+1} \right] \label{eq:using_bound_x0<c}\\
	&\le \frac{1}{1-3\:\rme^{- \frac{C(\sfA)}{ 1-\rme^{-\sfA}}  }}  \left[  x_0 \sfA^{x_0-1}+\sum_{k=0}^{c-1} \frac{(x_0 \sfA)^k}{k!} \sfA \rme^{-\sfA+1-x_0} \right]\\
	&= \frac{1}{1-3\:\rme^{- \frac{C(\sfA)}{ 1-\rme^{-\sfA}}  }}  \left[  x_0 \sfA^{x_0-1}+\sfA \rme^{-\sfA+1-x_0+x_0\sfA}\frac{\Gamma(c,x_0\sfA)}{c!}  \right]\label{eq:incompleteGamma}\\
	&\le \frac{1}{1-3\:\rme^{- \frac{C(\sfA)}{ 1-\rme^{-\sfA}}  }}  \left[  x_0 \sfA^{x_0-1}+\sfA \rme^{-\sfA+1-x_0}\frac{(x_0 \sfA)^{c} }{c!(x_0 \sfA-c+1)}  \right] \label{eq:incompleteGamma_bound} \\
	&\le \frac{1}{1-3\:\rme^{- \frac{C(\sfA)}{ 1-\rme^{-\sfA}}  }}  \left[  x_0 \sfA^{x_0-1}+\sfA \rme^{-\sfA+1-x_0+c}\frac{1 }{\sqrt{2\pi c}(x_0 \sfA-c+1)} \left(\frac{x_0 \sfA}{c}\right)^c \right], \label{eq:stirlingbound}
	\end{align}
	where \eqref{eq:using_bound_x0<c} follows from the bound in \eqref{eq:Bound_on_ratio_c<x_0};
	\eqref{eq:incompleteGamma} and \eqref{eq:incompleteGamma_bound} follow by introducing the incomplete Gamma function $\Gamma(c,x_0\sfA)$, which is then upper-bounded as~\cite[Thm.~2.1]{borwein2009uniform}
	\begin{equation}
	\Gamma(c,x_0\sfA) \le \frac{(x_0 \sfA)^{c} \rme^{-x_0 \sfA}}{x_0 \sfA-c+1}, \qquad  0<c<x_0\sfA+1;
	\end{equation}
	and \eqref{eq:stirlingbound} follows by using Stirling's lower bound $c!\ge \sqrt{2\pi c}\: c^{c}\:\rme^{-c}$.

	\section{Proof of Lemma~\ref{lem:bound_poisson_tail}}
	\label{app:lem:bound_poisson_tail}
 By using the bound on the tail of the  Poisson random variable in \cite[Thm.~5.4]{mitzenmacher2017probability}, we have that
\begin{equation}
			\bbP \left[ \overline{Y} \ge c \right] \le \frac{(\rme x_0)^c \rme^{-x_0}}{ c^c}, \quad c>x_0.  \label{eq:consequence_chernoff}
			\end{equation} 
By choosing $c=\rme x_0 \log(\sfA)>x_0$ for $\sfA\ge \rme$, the bound in \eqref{eq:consequence_chernoff} implies that 
\begin{align}
			\bbP \left[ \overline{Y} \ge \rme x_0 \log(\sfA) \right]&\le \left( \frac{1 }{ \log(\sfA)} \right)^{ \rme x_0 \log(\sfA)} \rme^{-x_0} \\
			&\le \left( \frac{1 }{ \log(\sfA)} \right)^{ \rme  \log(\sfA)} \rme^{-1},
			\end{align} 
			where the last inequality follows from the assumption that $\sfA \ge \rme$ and $x_0 \ge 1$. 

The proof is concluded by observing that $\sfA \ge \rme$ implies
\begin{equation}
 \left( \frac{1 }{ \log(\sfA)} \right)^{ \rme  \log(\sfA)}  \rme^{-1} \le \frac{1}{\sfA}.
\end{equation} 

\end{appendices}

\section*{Acknowledgement}
The authors would like to thank Semih Yagli for his valuable comments.

\bibliographystyle{IEEEtran}
\bibliography{biblio}

\end{document}